\def\isarxiv{1}

\def\paperTitle{Sublinear Time Quantum Algorithm for Attention Approximation}

\def\paperAuthor{
Zhao Song\thanks{\texttt{magic.linuxkde@gmail.com}. Simons Institute for the Theory of Computing, UC Berkeley.}
\and
Jianfei Xue\thanks{\texttt{jx898@nyu.edu}. New York University.}
\and
Jiahao Zhang\thanks{\texttt{ml.jiahaozhang02@gmail.com}.}
\and
Lichen Zhang\thanks{\texttt{lichenz@mit.edu}. Massachusetts Institute of Technology.}
}

\ifdefined\isarxiv
\documentclass[11pt]{article}
\usepackage[numbers]{natbib}
\else
\documentclass{article}
\usepackage{iclr2026_conference, times}
\iclrfinalcopy
\fi

\ifdefined\isarxiv

\usepackage{amsmath}
\usepackage{amsthm}
\usepackage{amssymb}
\usepackage{algorithm}
\usepackage{subfig}
\usepackage{algpseudocode}
\usepackage{graphicx}
\usepackage{grffile}
\usepackage{wrapfig,epsfig}
\usepackage{url}
\usepackage{xcolor}
\usepackage{epstopdf}

\usepackage{bbm}
\usepackage{dsfont}

\else 

\usepackage[utf8]{inputenc} 
\usepackage[T1]{fontenc}    
\usepackage{hyperref}       
\usepackage{url}            
\usepackage{booktabs}       
\usepackage{amsfonts}       
\usepackage{nicefrac}       
\usepackage{microtype}      
\usepackage{xcolor}         

\fi

\ifdefined\isarxiv

\else
\usepackage{amsmath}
\usepackage{amsthm}
\usepackage{amssymb}
\usepackage{algorithm}
\usepackage{subfig}
\usepackage{algpseudocode}
\usepackage{graphicx}
\usepackage{grffile}
\usepackage{wrapfig,epsfig}
\usepackage{epstopdf}
\usepackage{bbm}
\usepackage{dsfont}
\fi
 
\allowdisplaybreaks

\ifdefined\isarxiv

\usepackage{tikz}
\usepackage{hyperref}  
\hypersetup{colorlinks=true,citecolor=blue,linkcolor=blue} 
\usetikzlibrary{arrows}
\usepackage[margin=1in]{geometry}

\else
\definecolor{mydarkblue}{rgb}{0,0.08,0.45}
\hypersetup{colorlinks=true, citecolor=mydarkblue,linkcolor=mydarkblue}
\fi
 
\graphicspath{{./figs/}}

\theoremstyle{plain}
\newtheorem{theorem}{Theorem}[section]
\newtheorem{lemma}[theorem]{Lemma}
\newtheorem{definition}[theorem]{Definition}

\newtheorem{corollary}[theorem]{Corollary}

\newcommand{\wh}{\widehat}
\newcommand{\wt}{\widetilde}

\newcommand{\R}{\mathbb{R}}

\DeclareMathOperator*{\E}{{\mathbb{E}}}

\DeclareMathOperator{\poly}{poly}

\DeclareMathOperator{\tr}{tr}

\DeclareMathOperator{\Att}{Att}

\begin{document}

\ifdefined\isarxiv

\date{}
\title{\paperTitle}
\author{\paperAuthor}

\else

\title{\paperTitle}

\author{%
  Zhao Song \\
  Simons Institute for the Theory of Computing, UC Berkeley \\
  \texttt{magic.linuxkde@gmail.com}
  \And
  Jianfei Xue \\
  New York University \\
  \texttt{jx898@nyu.edu} \\
  \And 
  Jiahao Zhang \\
  \And
  Lichen Zhang \\
  MIT CSAIL \\
  \texttt{lichenz@csail.mit.edu}
}

\maketitle

\fi

\ifdefined\isarxiv
\begin{titlepage}
  \maketitle
  \begin{abstract}

Given the query, key and value matrices $Q, K, V\in \mathbb{R}^{n\times d}$, the attention module is defined as $\mathrm{Att}(Q, K, V)=D^{-1}AV$ where $A=\exp(QK^\top/\sqrt{d})$ with $\exp(\cdot)$ applied entrywise, $D=\mathrm{diag}(A{\bf 1}_n)$. The attention module is the backbone of modern transformers and large language models, but explicitly forming the softmax matrix $D^{-1}A$ incurs $\Omega(n^2)$ time, motivating numerous approximation schemes that reduce runtime to $\widetilde O(nd)$ via sparsity or low-rank factorization.

We propose a quantum data structure that approximates any row of $\mathrm{Att}(Q, K, V)$ using only row queries to $Q, K, V$. Our algorithm preprocesses these matrices in
$\widetilde{O}\left( \epsilon^{-1} n^{0.5} \left( s_\lambda^{2.5} + s_\lambda^{1.5} d + \alpha^{0.5} d \right) \right)$
time, where $\epsilon$ is the target accuracy, $s_\lambda$ is the $\lambda$-statistical dimension of the exponential kernel defined by $Q$ and $K$, and $\alpha$ measures the row distortion of $V$ that is at most $d/{\rm srank}(V)$, the stable rank of $V$. Each row query can be answered in
$\widetilde{O}(s_\lambda^2 + s_\lambda d)$
time.

To our knowledge, this is the first quantum data structure that approximates rows of the attention matrix in sublinear time with respect to $n$. Our approach relies on a quantum Nystr{\"o}m approximation of the exponential kernel, quantum multivariate mean estimation for computing $D$, and quantum leverage score sampling for the multiplication with $V$.

  \end{abstract}
  \thispagestyle{empty}
\end{titlepage}

\newpage

\else

\begin{abstract}

\end{abstract}

\fi



\section{Introduction}

Transformers~\citep{vsp+17} have emerged as one of the most successful machine learning architectures in recent years, revolutionizing fields such as natural language processing~\citep{dclt19,ydy+19,rsr+20,bmr+20,jys+20}, computer vision~\citep{cms+20,dbk+21,gxl+22}, speech recognition~\citep{cbs+15,wdcx21}, robotics~\citep{lew+22}, and time series forecasting~\citep{zzp+21}. These models typically operate on sequences of length $n$, autoregressively predicting the next most likely token to produce an output of length $n$. In applications like large language models (LLMs), it has been widely observed that increasing the sequence length $n$ significantly enhances generative performance. However, this benefit comes at a substantial computational cost: the core attention module has a quadratic time complexity in $n$, which severely limits both training and inference scalability.

Formally, let $Q, K, V \in \mathbb{R}^{n \times d}$ denote the query, key, and value embeddings. The attention module is defined as
\begin{align*}
    \Att(Q, K, V) = D^{-1} A V \in \mathbb{R}^{n \times d},
\end{align*}
where
$
A = \exp(QK^\top / \sqrt{d}) \in \mathbb{R}^{n \times n}$
is computed entrywise, and
$D = \mathrm{diag}(A {\bf 1}_n) \in \mathbb{R}^{n \times n}$. The matrix $A$ is referred to as the \emph{attention matrix}, and $D^{-1}A$ as the \emph{softmax matrix}. Due to the $n \times n$ size of $A$, much recent research has focused on reducing the quadratic complexity by approximating attention through pattern-based sparse attention~\citep{dkod20,kkl20,rsvg21,syy22,cgrs19,bpc20,aoa+20,zgd+20}, linearizing the kernel through feature mapping~\citep{kvpf20,cld+21,wlk+20,ppy+21}, or various algorithmic and data structure optimizations~\citep{zhdk23,as23,hjk+24,kmz24,zhmk24,bsz24,syz24,kbkw25,cam+25,cls+25,iksw25}.

The theoretical goal in these efforts is to achieve a runtime that scales nearly linearly with $n$, allowing some approximation error. This is a natural target, since the input size to the attention module is $n \times d$. On a classical computer, any algorithm that approximates attention in time $\widetilde{O}(nd)$ is considered optimal. But could this process be accelerated further using a quantum computer?

If our objective is to output the entire $n \times d$ matrix $\Att(Q, K, V)$, then $\Omega(nd)$ time is unavoidable due to output size. However, in many transformer applications — particularly during inference~\citep{pdc+23,bmn+24,aaj+24,zwl+24,flc+25,lro+24,kwz+24,bfz+25,clk+25,cll+25,iksw25} — only \emph{row queries} are needed. In this setting, we aim to preprocess $Q, K, V$ into a data structure such that, for any index $i \in [n]$, the structure can return a vector $\widetilde{r}_i \in \mathbb{R}^d$ that approximates the $i$-th row of $\Att(Q, K, V)$. This model circumvents the $\Omega(nd)$ lower bound by focusing on partial output. Nonetheless, since each row of $\Att(Q, K, V)$ is a convex combination of rows of $V$, achieving truly sublinear time in $n$ still appears classically intractable.

In this work, we answer this question affirmatively. Specifically, we construct a quantum data structure that preprocesses $Q, K, V$ using only \emph{row queries}, and does so in time\footnote{We use $\widetilde{O}(\cdot)$ to suppress polylogarithmic factors in $n$, $d$, $s_\lambda$, and $1/\epsilon$.} $\widetilde{O}(\epsilon^{-1} n^{0.5} \cdot \mathrm{poly}(d, s_\lambda, \alpha))$,
where $s_\lambda$ is the \emph{statistical dimension} of the exponential kernel matrix associated with $Q$ and $K$, and $\alpha$ is a measure of the row distortion of $V$ (see Definition~\ref{def:row_distortion}). Given any index $i \in [n]$, the data structure returns an approximation to the $i$-th row of $\Att(Q, K, V)$ in time $\widetilde{O}(s_\lambda^2 + s_\lambda d)$.

To our knowledge, this is the first quantum algorithm to implement the row query model in sublinear time. Prior works either require superlinear preprocessing time or impose structural assumptions~\citep{gsyz23}. Our approach avoids both: it makes \emph{no assumptions} on $Q, K, V$, making it broadly applicable in practice. Moreover, our construction is conceptually simple — it combines quantum techniques such as Grover search~\citep{g96}, Nystr{\"o}m kernel approximation, and quantum multivariate mean estimation~\citep{chj22} to approximate each component of the attention module: $D$, $A$, and $V$.

\paragraph{Quantum Computation Model.} We follow the standard quantum computation framework as in~\cite{aw22,ag23}. The model allows quantum subroutines using $O(\log n)$ qubits, quantum queries to the input, and access to a quantum-read/classical-write RAM (QRAM) of $\mathrm{poly}(n)$ bits. Each quantum read or classical write takes unit cost. We measure \emph{time complexity} by the number of QRAM operations, and \emph{query complexity} by the number of queries to the input. In our setting, we query rows of $Q$, $K$, and $V$, each requiring $O(d)$ time classically. For simplicity, we assume $Q$ and $K$ have been scaled by $1/d^{1/4}$, which can also be done via row queries in $O(d)$ time.

\section{Preliminary}

\paragraph{Notation.} Given symmetric matrices $A, B\in \R^{n\times n}$, we use $A-B\succeq 0$ to denote $A-B$ is a positive semidefinite (PSD) matrix, i.e., for any $x\in \R^n$, $x^\top (A-B)x\geq 0$. Given a matrix $M\in \R^{n\times n}$, we use $\exp(M)$ to denote the entrywise exponentiation operation. We use $\tr[M]$ to denote the trace of $M$. For a real matrix $A$, we use $A^\dagger$ to denote its Moore-Penrose pseudoinverse, and for a square, nonsingular real matrix $M$, we use $M^{-1}$ to denote its inverse. For two vectors $x, y\in \R^n$, we use $x^\top y$ or $\langle x,y\rangle$ to denote the inner product of $x$ and $y$. We use ${\bf 0}_n$ and ${\bf 1}_n$ to denote all-0's and all-1's vector. For a vector $x\in \R^n$, we use $\|x\|_2=\sqrt{x^\top x}$ to denote its $\ell_2$ norm, $\|x\|_\infty=\max_{i\in[n]}|x_i|$ to denote its $\ell_\infty$ norm. If $M$ is a PSD matrix, then we use $\|x\|_M=\sqrt{x^\top Mx}$ to denote the $M$-energy norm of $x$. For a matrix $A$, we use $\|A\|$ to denote its spectral norm and $\|A\|_\infty$ to denote its max row $\ell_1$ norm, and $\|A\|_F$ to denote its Frobenius norm. Throughout the paper, we will also exclusively work with weighted sampling matrices, usually denoted by $S\in \R^{n\times s}$ for where $s$ is the total number of samples taken, let $i(j)$ be the index of the $i$-th sample, then the $i$-th column of $S$ is $\frac{1}{\sqrt{p_j}}e_j$, where $p_j$ is the probability of choosing the index $j$. We use $\E[X]$ to denote the expectation of a random variable $X$. We use $\mathbb{I}[E]$ to denote the indicator of whether event $E$ happens.

\paragraph{Numerical Linear Algebra.} We rely on several primitives from numerical linear algebra for fast approximations and provable guarantees.

\begin{definition}[Leverage score]
\label{def:ls}
Let $A \in \mathbb{R}^{n \times d}$. The $i$-th leverage score of $A$ is defined as
\begin{align*}
    \tau_i := a_i^\top (A^\top A)^{-1} a_i,
\end{align*}
where $a_i$ is the $i$-th row of $A$. Equivalently, let $A=U\Sigma V^\top$ be its SVD, then $\tau_i=\|u_i\|_2^2$, where $u_i$ is the $i$-th row of $U$.
\end{definition}

We will also work exclusively with \emph{kernel matrices}. Given a dataset $X = \{x_1, \ldots, x_n\} \subseteq \mathbb{R}^d$, we define the exponential kernel matrix $E \in \mathbb{R}^{n \times n}$ by $E_{i,j} = \exp(\langle x_i, x_j \rangle)$. Although $E$ is generally full-rank, our algorithm depends only on a parameter called the \emph{$\lambda$-statistical dimension} of $E$, which may be much smaller than $n$.

\begin{definition}[Statistical dimension~\citep{z05,htf09}]
\label{def:stat_dim}
Let $E \in \mathbb{R}^{n \times n}$ be a PSD matrix, and let $\lambda > 0$. The $\lambda$-statistical dimension of $E$ is defined as $s_\lambda(E) := \mathrm{tr}[E(E + \lambda I)^{-1}]$. When $E$ is clear from context, we write $s_\lambda$ for simplicity.
\end{definition}

Note that $s_\lambda$ is a monotonically decreasing function of $\lambda$, and is closely related to the notion of ridge leverage scores.

\begin{definition}[Ridge leverage score~\citep{am15}]
\label{def:ridge_ls}
Let $E \in \mathbb{R}^{n \times n}$ be a kernel matrix and let $\lambda > 0$. The $\lambda$-ridge leverage score of the data point $x_i$ is defined as
\begin{align*}
    \tau_i^\lambda := (E(E + \lambda I)^{-1})_{i,i}.
\end{align*}
If $E = BB^\top$ for some $B \in \mathbb{R}^{n \times n}$, then this can be equivalently written as
\begin{align*}
    \tau_i^\lambda = b_i^\top (B^\top B + \lambda I)^{-1} b_i,
\end{align*}
where $b_i$ is the $i$-th row of $B$.
\end{definition}

It is easy to see that $\sum_{i=1}^n \tau_i^\lambda = s_\lambda$. Moreover,~\cite{mm17} shows that Nystr{\"o}m approximations~\citep{ws00} based on ridge leverage score sampling yield accurate spectral approximations to $E$.

\begin{lemma}[Theorem 3 of~\cite{mm17}]
\label{lem:ridge_ls_mm17}
Let $s = O(s_\lambda \log(s_\lambda / \delta))$, $\lambda > 0$, and $\delta \in (0, 1)$. Let $E \in \mathbb{R}^{n \times n}$ be any kernel matrix. Let $S \in \mathbb{R}^{n \times s}$ be the $\lambda$-ridge leverage score sampling matrix. Then the Nystr{\"o}m approximation $\widetilde{E} := E S (S^\top E S)^\dagger S^\top E$ satisfies $E \preceq \widetilde{E} \preceq E + \lambda I$
with probability at least $1 - \delta$.
\end{lemma}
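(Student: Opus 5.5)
The plan is to treat the statement as an instance of Theorem 3 of \cite{mm17} and rederive it through the usual projection picture. I will write $E = BB^\top$ for some $B\in\R^{n\times n}$, which is possible since $E$ is PSD. Then $S^\top E S = (S^\top B)(S^\top B)^\top$, and a direct pseudoinverse computation gives
\begin{align*}
\widetilde E = B\,P\,B^\top, \qquad P := (S^\top B)^\top\big((S^\top B)(S^\top B)^\top\big)^\dagger (S^\top B).
\end{align*}
Here $P$ is the orthogonal projection onto the row space of $S^\top B$. Both claimed inequalities then become statements about the residual $B(I-P)B^\top = E - \widetilde E$.

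\textbf{Upper bound $\widetilde E \preceq E+\lambda I$.} Since $0\preceq P\preceq I$, we get $\widetilde E = BPB^\top \preceq BB^\top = E \preceq E+\lambda I$. This holds deterministically for every $S$, with no probability needed.

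\textbf{Lower bound $E\preceq \widetilde E$.} This is equivalent to $B(I-P)B^\top \preceq 0$. The step I would attempt is the standard ridge leverage score argument. With $s = O(s_\lambda\log(s_\lambda/\delta))$ samples drawn by $\tau_i^\lambda$ (Definition~\ref{def:ridge_ls}), matrix Chernoff gives, with probability $1-\delta$,
\begin{align*}
\tfrac12 (B^\top B+\lambda I) \preceq B^\top S S^\top B + \lambda I \preceq \tfrac32 (B^\top B+\lambda I).
\end{align*}
From this one bounds $\|(I-P)B^\top x\|_2^2$ by comparing $B^\top B$ with $B^\top SS^\top B+\lambda I$ on the complement of the row space of $S^\top B$. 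On that complement $B^\top SS^\top B$ vanishes, so $B^\top B \preceq 2\lambda I$ there. Rescaling $\lambda$ by a constant, this yields $B(I-P)B^\top \preceq \lambda I$.

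\textbf{Main obstacle.} The argument above does not give $B(I-P)B^\top\preceq 0$; it only gives $\preceq\lambda I$. In fact, since $\widetilde E\preceq E$ always holds, the literal inequality $E\preceq\widetilde E$ forces $\widetilde E=E$. That happens only if $(I-P)B^\top=0$, i.e.\ if the sampled columns span the whole column space of $B$. Ridge leverage score sampling with $s\ll \operatorname{rank}(E)$ samples cannot guarantee this.

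So the concentration route proves the two-sided guarantee $\widetilde E\preceq E\preceq\widetilde E+\lambda I$. That is exactly what \cite{mm17} establishes and what the later uses in the paper need. The first inequality of the lemma as worded holds only in the degenerate full-span case. I would flag it as a likely transposition of $E$ and $\widetilde E$ rather than try to prove it in general.
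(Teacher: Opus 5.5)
Your proposal is correct, including the diagnosis. The paper offers no proof beyond the citation, and the result actually established in \cite{mm17} is $\widetilde E\preceq E\preceq \widetilde E+\lambda I$. So the first inequality as typed is transposed and fails whenever $s<\mathrm{rank}(E)$, while everything downstream (Lemma~\ref{lem:block_approx} and its consequences) only uses $\|E-\widetilde E\|\le\lambda$, which the corrected form gives.

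Your route is essentially the \cite{mm17} argument: write $\widetilde E=BPB^\top$ with $P$ the projection onto the row space of $S^\top B$, then apply the constant-factor ridge spectral approximation on $\ker(S^\top B)$. One small simplification: this restriction yields $(I-P)B^\top B(I-P)\preceq\lambda(I-P)$ directly from $\tfrac12(B^\top B+\lambda I)\preceq B^\top SS^\top B+\lambda I$, so you get $\lambda$ rather than $2\lambda$ and no rescaling is needed.
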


\paragraph{Quantum Primitives.} In this paper, we primarily leverage two quantum algorithmic primitives. The first is an efficient quantum sampling oracle based on Grover search.

\begin{lemma}[Claim 3 in~\cite{aw22}]
\label{lem:q_sample}
Let $n$ be a positive integer, and let $\{p_1, \ldots, p_n\} \subseteq [0,1]$ be a list of probabilities. There exists a quantum algorithm, $\textsc{QSample}(p)$, that generates a list of indices where each $i$ is sampled independently with probability $p_i$, in time $\widetilde{O}\left( \sqrt{n \sum_{i=1}^n p_i} \right) \cdot {\cal T},$
where ${\cal T}$ denotes the time required to generate any individual $p_i$.
\end{lemma}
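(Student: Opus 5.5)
We need to prove Lemma \ref{lem:q_sample}: sampling each index $i$ independently with probability $p_i$ in time $\widetilde O(\sqrt{n\sum_i p_i})\cdot\mathcal T$. The plan reduces independent Bernoulli sampling to a sequence of Grover searches over a randomly ``marked'' set. Let $P:=\sum_i p_i$. The output set $T$ has the law of the set $\{i : r_i < p_i\}$, where $r_1,\dots,r_n$ are independent uniform random numbers. So the problem is to enumerate the marked elements of a Boolean function $f(i)=\mathbb{I}[r_i<p_i]$ whose expected number of marked elements is $P$.

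\textbf{Step 1 (simulating the randomness).} Fixing all $r_i$ explicitly would cost $\Omega(n)$. Instead I would take $r_i=h(i)$ for a hash function from a $k$-wise independent family with $k=\widetilde O(\sqrt{nP})$, or simply $k=\widetilde O(P)$ plus slack. Its description lives in QRAM, and each $r_i$ can be evaluated in polylog time using standard fast multipoint-evaluation constructions. To justify this, I would argue that the output distribution of the algorithm depends only on the joint behaviour of the queried coordinates, which number $\widetilde O(\sqrt{nP})$. The cleaner alternative is to cite the standard fact, as in \cite{aw22}, that a quantum algorithm making $q$ queries cannot distinguish a $2q$-wise independent function from a truly random one. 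The oracle $|i\rangle|0\rangle\mapsto|i\rangle|f(i)\rangle$ then costs $\mathcal T+\widetilde O(1)$ per call: compute $p_i$, compute $h(i)$, compare, and uncompute.

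\textbf{Step 2 (enumeration by repeated Grover).} Run Grover search with an unknown number of solutions (Boyer--Brassard--H{\o}yer--Tapp) on $f$. Each time a marked $i$ is found, record it classically, write it into QRAM, and exclude it from the search by modifying the oracle to $f'(i)=f(i)\wedge(i\notin\text{found})$. While $t$ marked elements remain, finding one costs $O(\sqrt{n/t})$ queries in expectation. So finding all $m=|T|$ of them costs $\sum_{t=1}^{m}\sqrt{n/t}=O(\sqrt{nm})$. A final search certifying that no marked element remains costs $O(\sqrt n)$, with $O(\log n)$ repetitions to drive the failure probability to $1/\mathrm{poly}(n)$. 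This gives total cost $\widetilde O(\sqrt{n(m+1)})\cdot\mathcal T$.

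\textbf{Step 3 (expectation/concentration).} By Jensen, $\E\sqrt{m+1}\le\sqrt{P+1}$, so the expected time is $\widetilde O(\sqrt{nP}+\sqrt n)\cdot\mathcal T$. If a worst-case bound is desired, Chernoff (valid under limited independence) gives $m=O(P+\log n)$ with high probability. The additive $\sqrt n$ is absorbed into the $\widetilde O$ under the natural regime $P\ge1$. Otherwise I would state it as $\widetilde O(\sqrt{n\max(P,1)})$, which is how it is used later anyway, since $P\approx s$.

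The main obstacle is Step 1: making the random marking both cheap to evaluate coherently and faithful to the independent product distribution. My first attempt would be the bounded-independence indistinguishability argument. If that proves delicate, I would fall back to pre-sampling $m\sim$ the Poisson-binomial count and choosing positions by inverse-CDF via quantum counting, accepting polylog overhead.
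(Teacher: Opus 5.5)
Your proposal is correct and follows essentially the argument behind Claim 3 of \cite{aw22}, which the paper imports without proof: mark $i$ iff a uniform value $r_i$ falls below $p_i$, enumerate all marked indices by repeated Grover search in $\widetilde O(\sqrt{n(m+1)})$ queries, and replace the random string by a $2q$-wise independent function, justified by Zhandry's indistinguishability theorem. Two details need tightening. First, cap the random query count at a fixed $q=\widetilde O(\sqrt{n(P+\log n)})$ before invoking Zhandry's theorem. Second, use a $k$-wise independent family with polylog per-point evaluation (e.g.\ an expander-based construction); batch multipoint evaluation alone does not support coherent queries at arbitrary points.
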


The second primitive is a quantum procedure for approximating matrix-vector products using quantum multivariate mean estimation.

\begin{lemma}[Theorem 5.1 of~\cite{ag23}]
\label{lem:q_matvec}
Let $\epsilon \in (0,1)$, and let $A \in \mathbb{R}^{n \times d}$ and $v \in \mathbb{R}^n$. Suppose we are given quantum query access to the rows of $A$ and the entries of $v$. Then there exists a quantum algorithm $\textsc{QMatVec}(A, v, \epsilon)$ that outputs a vector $\widetilde{\mu} \in \mathbb{R}^d$ such that, with probability at least $1 - 1/\mathrm{poly}(n)$,
$
    \| \widetilde{\mu} - A^\top v \|_{(A^\top A)^{-1}} \leq \epsilon,
$
using
$
    \widetilde{O}\left( \epsilon^{-1} n^{0.5} d^{0.5} \|v\|_\infty \right)
$
queries to $A$ and $v$.
\end{lemma}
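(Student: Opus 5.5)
The plan is to reduce the statement to quantum multivariate mean estimation~\citep{chj22} applied to a suitably \emph{whitened} random vector. Write $A^\top v=\sum_{i=1}^n v_i a_i = n\cdot\E_{i\sim[n]}[v_i a_i]$ with $i$ uniform. Measuring the error in the $(A^\top A)^{-1}$-norm is the same as measuring the $\ell_2$ error of the whitened quantity $(A^\top A)^{-1/2}A^\top v$. So the goal is to estimate the mean of
\begin{align*}
X_i := n\, v_i\, M a_i
\end{align*}
to $\ell_2$ accuracy $\epsilon$, where $M\approx (A^\top A)^{-1/2}$.

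\textbf{Step 1: build an approximate whitener.} First I will compute a constant-factor spectral approximation $\widetilde{H}$ with $\frac12 A^\top A\preceq \widetilde{H}\preceq 2A^\top A$. I will do this with quantum leverage score sampling: use \textsc{QSample} from Lemma~\ref{lem:q_sample} with the usual iterative/recursive leverage score refinement, in the style of Apers--de Wolf. Since $\sum_i\tau_i\le d$, this takes $\widetilde{O}(\sqrt{nd})$ row queries plus $\mathrm{poly}(d)$ classical work. Then set $M:=\widetilde{H}^{-1/2}$, which satisfies $M^\top M\approx_2 (A^\top A)^{-1}$. By this equivalence, an $\ell_2$ error bound for the whitened estimate transfers to the $\|\cdot\|_{(A^\top A)^{-1}}$ bound up to a constant factor. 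The additive $\widetilde{O}(\sqrt{nd})$ cost is dominated by the main term whenever $\epsilon\le \|v\|_\infty\sqrt{\cdot}$; otherwise I will absorb it as a lower-order term.

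\textbf{Step 2: bound the covariance.} For uniform $i$,
\begin{align*}
\E[X_iX_i^\top]= n\, M\Big(\sum_i v_i^2 a_i a_i^\top\Big) M^\top \preceq n\|v\|_\infty^2\, M A^\top A M^\top \preceq 2n\|v\|_\infty^2 I_d .
\end{align*}
Hence $\tr[\mathrm{Cov}(X)]\le 2nd\|v\|_\infty^2$.

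\textbf{Step 3: apply quantum mean estimation.} The estimator of \cite{chj22}, using $m\ge d$ quantum samples (each costing one query to a row of $A$ and an entry of $v$, plus the classical multiplication by $M$), returns $\widetilde{X}$ with
\begin{align*}
\|\widetilde{X}-\E X\|_2\le \widetilde{O}\big(\sqrt{\tr[\mathrm{Cov}(X)]}/m\big)
\end{align*}
with constant probability. Choosing $m=\widetilde{O}(\epsilon^{-1}\sqrt{nd}\,\|v\|_\infty)$ gives error $\epsilon$. I will boost the success probability to $1-1/\mathrm{poly}(n)$ with a geometric median over $O(\log n)$ repetitions, which costs only a logarithmic factor. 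Finally, output $\widetilde{\mu}:=M^{-1}\widetilde{X}$. Its error in the $(A^\top A)^{-1}$-norm is then at most $O(\epsilon)$, and rescaling $\epsilon$ by a constant gives the claimed bound.

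\textbf{Main obstacle.} The delicate point is Step 1 together with the regime condition $m\ge d$ in the mean estimator. The whitening must be obtainable within the stated budget and without prior knowledge of $A^\top A$. I will handle this by citing the quantum spectral approximation result, and by noting that when $m<d$ the claimed bound exceeds the trivial cost scale. If the $\widetilde{O}(\sqrt{nd})$ preprocessing is not dominated by the main term, I would state it as an additive term that the $\widetilde{O}$ absorbs under the natural normalization $\epsilon\lesssim\|v\|_\infty$.
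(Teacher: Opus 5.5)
The paper does not prove this lemma; it imports it from \cite{ag23}. Your architecture, preconditioning the random vector with a quantum spectral approximation of $A^\top A$ and then applying the multivariate mean estimator of \cite{chj22}, is the strategy \cite{ag23} describe. Steps 1 and 2 are sound: $\E[XX^\top]\preceq 2n\|v\|_\infty^2 I_d$ gives $\tr[\mathrm{Cov}(X)]\le 2nd\|v\|_\infty^2$, and the transfer back through $M^{-1}$ works.

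\textbf{The gap is in Step 3.} With $m\ge d$ quantum samples, the estimator of \cite{chj22} guarantees $\|\wt X-\E X\|_2\le \wt O\big(\sqrt{d\cdot\tr[\mathrm{Cov}(X)]}/m\big)$, not $\wt O\big(\sqrt{\tr[\mathrm{Cov}(X)]}/m\big)$. A consistency check supports this. At $m=d$ the correct rate equals the classical rate $\sqrt{\tr[\mathrm{Cov}(X)]/m}$, as it must, since Heinrich showed (as recalled in \cite{chj22}) that no quantum advantage is possible for $m<d$. Your version would already beat the classical rate by a factor $\sqrt d$ at $m=d$. With the correct rate and your trace bound, the error is $\wt O(d\sqrt n\,\|v\|_\infty/m)$. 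So you need $m=\wt O(\epsilon^{-1}n^{0.5}d\,\|v\|_\infty)$, a factor $\sqrt d$ more than the statement claims.

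\textbf{A sharper variance estimate cannot close this.} Split $[n]$ into $d$ disjoint blocks of size $n/d$, and let every row of $A$ in the $j$-th block equal $\sqrt{d/n}\,e_j^\top$, so $A^\top A=I_d$. Take $v\in\{0,1\}^n$. Then $\tr[\mathrm{Cov}(X)]=\Theta(nd)$, so your bound is tight here. The task itself also forces you to estimate the Hamming weights of at least half of the blocks to additive error $O(\epsilon\sqrt n/d)$. Approximate counting on $n/d$ bits to that accuracy costs $\Theta(\sqrt n/\epsilon)$ queries. A direct-product argument therefore suggests $\Omega(\epsilon^{-1}n^{0.5}d)$ queries in total whenever $d\lesssim\epsilon\sqrt n$.

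So your route proves the lemma with $n^{0.5}d$ in place of $n^{0.5}d^{0.5}$, and $n^{0.5}d$ appears to be the correct complexity. The exponent in the quoted statement should be checked against \cite{ag23}, as should its downstream use with $d=s$ in Theorem~\ref{thm:q_norm}.

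\textbf{A smaller point: your regime discussion is vague.}
\begin{itemize}
\item With the corrected rate, $m\ge d$ holds automatically when $\epsilon\le\sqrt n\,\|v\|_\infty$.
\item For larger $\epsilon$, outputting $0$ suffices, because $\|A^\top v\|_{(A^\top A)^{-1}}\le\|v\|_2\le\sqrt n\,\|v\|_\infty$.
\item The $\wt O(\sqrt{nd})$ cost of Step 1 is dominated only when $\epsilon\lesssim\sqrt d\,\|v\|_\infty$; otherwise it must be kept as an additive term.
\end{itemize}
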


\section{Technical Overview}
In this section, we provide an overview on the algorithmic techniques we utilize to approximate $A, D$ and $V$, in sublinear time.

\subsection{Approximate the Attention Matrix via Quantum Nystr{\"o}m}

To approximate the attention matrix $A$, we will make use of Nystr{\"o}m approximation~\citep{ws00}. However, recall that $A = \exp(QK^\top)$; for $Q \neq K$, the matrix itself is not even symmetric. This poses significant challenges for obtaining a good approximation. On the other hand, if we treat the queries and keys as the \emph{dataset}, and form the exponential kernel matrix over them, then the resulting matrix is indeed a kernel matrix.

Specifically, let the dataset $X = \{q_1, \ldots, q_n, k_1, \ldots, k_n\}$, and consider $E \in \mathbb{R}^{2n \times 2n}$ where
$E = \begin{bmatrix}
    \exp(QQ^\top) & \exp(QK^\top) \\
    \exp(KQ^\top) & \exp(KK^\top)
\end{bmatrix}$,
then the attention matrix can be retrieved via $P E \begin{bmatrix} {\bf 0}_n \\ {\bf 1}_n \end{bmatrix}$ where $P \in \mathbb{R}^{n \times 2n}$ is the matrix consisting of the first $n$ rows of the $2n \times 2n$ identity matrix, which selects the first $n$ rows of $E$. Thus, once we obtain an approximation for $E$, we automatically obtain an approximation for $A$.

It remains to compute a Nystr{\"o}m approximation of $E$, as at first glance it is not clear how to even generate the ridge leverage score sampling matrix $S$ in sublinear time. ~\cite{mm17} shows that on a classical computer, it is possible to compute a \emph{generalized} ridge leverage score sampling matrix using $\widetilde{O}(n s_\lambda)$ evaluations of the kernel function and an additional $\widetilde{O}(n s_\lambda^2)$ time, via a recursive sampling scheme:
\begin{itemize}
    \item Uniformly sample half of the data points, then recursively compute the weighted sampling matrix $\widetilde{S}^{n \times s}$ for the subset;
    \item Compute the \emph{generalized ridge leverage score}, defined as $\widetilde{\tau}_i^\lambda := b_i^\top (B^\top \widetilde{S} \widetilde{S}^\top B + \lambda I)^\dagger b_i$, and set $p_i = \min\{1, \widetilde{\tau}_i^\lambda \cdot \log(s_\lambda / \delta)\}$;
    \item Output $S$ as the weighted sampling matrix according to $p_i$.
\end{itemize}

The key ingredients in their algorithm are (1) the generalized ridge leverage score can be computed via kernel function evaluations instead of computing the factorization (see Definition~\ref{def:gen_ridge_ls}), and (2) sampling according to generalized ridge leverage score only increases the sample size by a constant factor, hence it does not affect the asymptotic runtime of the algorithm (see Lemma~\ref{lem:classical_nystrom}).

For the simpler setting of leverage score sampling,~\cite{ag23} shows that this recursive framework can benefit from quantum speedup, especially the Grover search sampler of Lemma~\ref{lem:q_sample}, by noting that when sampling according to the leverage score, it is not necessary to compute or approximate all the scores; rather, it is enough to implement an oracle that can supply any approximate leverage score when needed.

For our application, however, this oracle is much more difficult to implement, as in the setting of~\cite{ag23}, one could directly query the row of $B$, which is not the case for the kernel setting. Nevertheless, we show how to implement such an oracle for generalized ridge leverage scores of kernels. The algorithm is detailed in Algorithm~\ref{alg:q_nystrom}. Throughout this section, we let $s$ denote the final sample size of the Nystr{\"o}m approximation.

\begin{algorithm}[!ht]
\caption{Quantum Nystr{\"o}m approximation via recursive generalized ridge leverage score sampling.}
\label{alg:q_nystrom}
\begin{algorithmic}[1]
\Procedure{QNystr{\" o}mKernel}{$\{x_1, \ldots, x_n\} \in (\mathbb{R}^d)^n, \mathsf{K} : \mathbb{R}^d \times \mathbb{R}^d \rightarrow \mathbb{R}^m, \delta\in (0, 1),\lambda\in (0,\infty)$} \Comment{$\delta$ is the failure probability, $\lambda$ is the ridge leverage score parameter.}
    \State $s\gets O(s_\lambda\log(s_\lambda/\delta))$
    \State $T \gets O(\log(n/s))$
    \State Let $S_0 \subset_{1/2} S_1 \subset_{1/2} \cdots \subset_{1/2} S_T = [n]$ \Comment{We use $A\subset_{1/2} B$ to denote $A$ is a uniform subset of half of the indices of $B$}
    \State $M_0 \gets \{\mathsf{K}(x_i, x_j)\}_{(i,j) \in S_0 \times S_0}$ \Comment{$|S_0| = s$}
    \State Let $D_0 \in \mathbb{R}^{n \times |S_0|}$ be the sampling matrix of $S_0$
    \For{$t = 1$ to $T$}
        \State $\widehat{M} \gets (M_{t-1} + \lambda I_s)^{-1}$
        \State \Comment{Let $D_{t-1}^\top K_i := \{D_{t-1}(j) \cdot \mathsf{K}(x_i, x_j)\}_{j \in D_{t-1}} \in \mathbb{R}^s$ for $i \in S_t$ where $D_{t-1}(j)$ is the weight corresponding to $x_j$ specified by $D_{t - 1}$}
        \State Implement oracle for $q_i \gets \frac{5}{\lambda} \cdot ( \mathsf{K}(x_i, x_i) - (D_{t-1}^\top K_i)^\top \widehat{M} D_{t-1}^\top K_i )$ for $i \in S_t$
        \State \Comment{$p_i = \min\{1, 16q_i \log(2s/\delta)\}$}
        \State $\widetilde{D}_t \gets \textsc{QSample}(p)$ \Comment{$\widetilde{D}_t \in \mathbb{R}^{|S_t| \times s}$}
        \State $D_t \gets D_{S_t} \cdot \widetilde{D}_t$ \Comment{$D_t \in \mathbb{R}^{n \times s}$}
        \State $M_t \gets \{D_t(i) D_t(j) \cdot \mathsf{K}(x_i, x_j)\}_{(i,j) \in D_t \times D_t}$ \Comment{$M_t \in \mathbb{R}^{s \times s}$}
    \EndFor
    \State \Return $D_T$
\EndProcedure
\end{algorithmic}
\end{algorithm}
The main idea is to utilize the identity $\widetilde{\tau}_i^\lambda = \frac{1}{\lambda}(E - ES (S^\top ES + \lambda I)^{-1} S^\top E)_{i,i}$, where $E_{i,i}$ involves a single kernel evaluation ${\sf K}(x_i, x_i)$, and $S^\top ES$ requires only $O(s^2)$ kernel evaluations. Finally, the term $(ES(S^\top ES + \lambda I)^\dagger S^\top E)_{i,i}$ can be computed by evaluating the kernel between $x_i$ and the sampled points in $S$, weighted appropriately, which requires $O(s)$ kernel evaluations. This shows that we can implement the oracle by precomputing $(S^\top ES + \lambda I)^\dagger$ in $O(s^2) \cdot {\cal T}_{\sf K} + s^\omega$ time, where ${\cal T}_{\sf K}$ denotes the time for kernel evaluation and $\omega \approx 2.37$ is the matrix multiplication exponent~\citep{dwz23,wxxz24,adw+25}. 
Each oracle query can then be answered in $O(s) \cdot {\cal T}_{\sf K} + s^2$ time. By Lemma~\ref{lem:q_sample}, the quantum sampler requires only $\widetilde{O}(n^{0.5} s^{0.5})$ oracle calls, so the overall runtime is $\widetilde{O}(n^{0.5} s^{1.5} \cdot ({\cal T}_{\sf K} + s) + s^\omega)$. In our setting, the kernel function ${\sf K}(x_i, x_j) = \exp(\langle x_i, x_j \rangle)$ can be computed in $O(d)$ time, which gives a runtime of $\widetilde{O}(n^{0.5} s^{1.5}(d + s) + s^\omega)$, sublinear in $n$.

It remains to analyze the approximation guarantee. Sampling according to generalized ridge leverage scores ensures that $E \preceq \widetilde{E} \preceq E + \lambda I$, but this does not immediately imply a bound on the approximation error for $\exp(QK^\top)$. To address this, let 
$E = \begin{bmatrix} B & A \\ A^\top & C \end{bmatrix}$ and $\widetilde{E} = \begin{bmatrix} \widetilde{B} & \widetilde{A} \\ \widetilde{A}^\top & \widetilde{C} \end{bmatrix}$. Standard spectral approximation theory guarantees that $B \preceq \widetilde{B} \preceq B + \lambda I$ and $C \preceq \widetilde{C} \preceq C + \lambda I$. For the off-diagonal block we are interested in $A$, we cannot get such a strong spectral approximation guarantee; in fact, one can show that the best we could hope for is a symmetrization bound: $A+A^\top\preceq \wt A+\wt A^\top \preceq A+A^\top+2\lambda I$. On the other hand, a weaker and a more handy bound can be exhibited: $\|A-\wt A\|\leq \lambda$ and $\|A-\wt A\|_F\leq \lambda\sqrt{n}$, and we will show these bounds are sufficient to derive the final approximation guarantees of our algorithm.

It is also worth noting that Algorithm~\ref{alg:q_nystrom} merely computes the weighted sampling matrix $S$, which can be stored compactly by recording the sampled indices and corresponding weights, but does not explicitly form the Nystr{\"o}m approximation $\widetilde{E} = ES (S^\top ES)^\dagger S^\top E$. While $(S^\top ES)^\dagger$ can be computed and stored in $O(s^2 d + s^\omega)$ time, forming $\widetilde{E}$ would take $\Omega(ns)$ time, which is prohibitive due to output size. In what follows, we show that this restricted representation of $S$ is nonetheless sufficient to approximate $D$, $V$, and $\Att(Q, K, V)$.

We now compare our Nystr{\"o}m approximation scheme to a related method known as Nystr{\"o}m-former~\citep{xzc+21}, which also integrates Nystr{\"o}m into the attention mechanism. Specifically, they consider the attention matrix $A$ and partition it as 
$A = \begin{bmatrix} X_1 & X_2 \\ X_3 & X_4 \end{bmatrix}$, aiming to approximate $X_4$ using the other three blocks. Given Nystr{\"o}m landmark points $Q'$ and $K'$ sampled from $Q$ and $K$, they set $X_1 = \exp(Q'K'^\top)$, $X_2 = \exp(QK'^\top)$, and $X_3 = \exp(Q'K^\top)$. Since the number of landmarks is small, these blocks are all low-dimensional.~\cite{xzc+21} proves that $X_4$ can be efficiently approximated using $X_1$, $X_2$, and $X_3$ in $O(nmd)$ time, where $m$ is the number of landmarks. While Nystr{\"o}mformer performs well in practice, it guarantees convergence to the true attention matrix only when all rows of $Q$ and $K$ are included as landmarks. In contrast, our Nystr{\"o}m scheme operates on the exponential kernel matrix formed from $Q$ and $K$, and achieves spectral approximation guarantees as long as the sample size is sufficiently large without needing to include all data points.

\subsection{Approximate the Normalization Factor via Quantum Mean Estimation}

Recall that $D = {\rm diag}(A {\bf 1}_n)$, and each normalization factor only requires computing $a_i^\top {\bf 1}_n$, where $a_i$ is the $i$-th row of $A$. If we have access to $\widetilde{E}$, then the $i$-th normalization factor could be estimated as $\widetilde{E}_{i,*}^\top \begin{bmatrix} {\bf 0}_n \\ {\bf 1}_n \end{bmatrix}$. However, as discussed earlier, we cannot explicitly form $\widetilde{E}$ due to its size. To resolve this, we define $U := ES (S^\top ES)^{\dagger/2} \in \mathbb{R}^{2n \times s}$. By the definition of the Nystr{\"o}m approximation, we have $\widetilde{E} = UU^\top$. 
Moreover, $U$ also exhibits a block structure $U=\begin{bmatrix}
    U_1 \\
    U_2
\end{bmatrix}$ where $U_1, U_2\in \R^{n\times s}$, and the desired approximate $\wt A=U_1U_2^\top$ can be obtained via these blocks. Given any vector $v\in \R^n$, if we can compute or approximate $U_2^\top v$, then the normalization factor for the $i$-th row can be estimated as $(U_1)_{i,*}^\top (U_2^\top v)$ where $(U_1)_{i,*}\in \R^s$ is the $i$-th row of $U_1$. Fortunately, we can implement row queries to $U_2$. We first precompute $(S^\top ES)^{\dagger/2}$ in $O(s^2 d + s^\omega)$ time, then each row $(U_2)_{i,*}$ of $U_2$ is computed via kernel evaluations between $x_{i+n}$ and the points in $S$, followed by matrix-vector multiplication with $(S^\top ES)^{\dagger/2}$. This takes $O(s^2 + s d)$ time. 

It remains to approximate $U_2^\top v$, which we cast as a multivariate mean estimation problem. Define the random variable $X = 2n v_i (U_2)_{*,i}$, where $i \in [n]$ is selected uniformly at random. It is easy to verify that $\mathbb{E}[X] = U_2^\top v$, and the variance is bounded. Therefore, one can apply the quantum multivariate mean estimation procedure of~\cite{chj22} to approximate $U_2^\top v$. To further reduce variance,~\cite{ag23} proposes approximating the matrix-vector product in the $(U_2^\top U_2)^{-1}$-energy norm. Following this idea, we apply Lemma~\ref{lem:q_matvec} to output a vector $\widetilde{\mu} \in \mathbb{R}^s$ such that
$\| \widetilde{\mu} - U_2^\top v \|_{(U_2^\top U_2)^{-1}} \leq \epsilon$,
using $\widetilde{O}(\epsilon^{-1} n^{0.5} s^{0.5} \|v\|_\infty)$ row queries to $U_2$ and $v$. In our application, we always have $\|v\|_\infty = 1$, and as noted above, each row query to $U$ takes $O(s^2 + s d)$ time. We present the full algorithm below in Algorithm~\ref{alg:norm_est}.

\begin{algorithm}[!ht]
\caption{Algorithm for estimating normalization factor.}
\label{alg:norm_est}
\begin{algorithmic}[1]
\State {\bf data structure} \textsc{QRowNorm}
\State {\bf begin members}
\State \hspace{4mm} $s\in \mathbb{N}$
\State \hspace{4mm} $S\in (\R^2)^s$
\State \hspace{4mm} $N\in \R^{s\times s}$
\State \hspace{4mm} $\wt \mu \in \R^s$
\State {\bf end members}
\State
\Procedure{Preprocess}{$Q\in \R^{n\times d}, K\in \R^{n\times d}, \lambda\in (0,\infty),\epsilon\in (0,1)$}
\State $s\gets O(s_\lambda \log(s_\lambda n))$
\State $S\gets \textsc{QNystr{\"o}mKernel}(Q\cup K, (x_i,x_j)\mapsto \exp(\langle x_i, x_j\rangle), 1/\poly(n), \lambda)$ \Comment{Algorithm~\ref{alg:q_nystrom}, $S$ is a list of sampled indices and weights}
\State $N\gets (S^\top ES)^{\dagger/2}$
\State Implement row oracle $(U_2)_{j,*}$ as follows:
\State \hspace{4mm} $\wt (U_2)_{j(k),*}\gets S_{k}\cdot \exp(\langle x_{j+n},x_k\rangle), \forall k\in S$ \Comment{$\wt (U_2)_{j(k),*}\in \R^s$} 
\State \hspace{4mm} $(U_2)_{j,*}\gets N \wt (U_2)_{j(k),*}$ \Comment{$S$ stores pairs of indices and weights, $S_k$ is the weight corresponding to index $k$, $(U_2)_{j,*}\in \R^s$}
\State Implement entry oracle for a vector $v={\bf 1}_n\in \R^{n}$
\State $\wt \mu\gets \textsc{QMatVec}(U_2, v, \epsilon)$ \Comment{$\wt \mu\in \R^s$, Lemma~\ref{lem:q_matvec}}
\EndProcedure
\State
\Procedure{Query}{$i\in [n]$}
\State $b_i\gets \langle (U_1)_{i,*},\wt \mu\rangle$ \Comment{$(U_1)_{i,*}$ is computed via row oracle}
\State \Return $b_i$
\EndProcedure
\State {\bf end data structure}
\end{algorithmic}
\end{algorithm}

For the approximation guarantee, we prove that for any vector $x \in \mathbb{R}^s$, if $\|x\|_{(U_2^\top U_2)^{-1}} \leq \epsilon$, then $\|U_1 x\|_2\leq \epsilon\cdot \|U_1U_2^\top\|$. This is particularly useful for us, as we can set $x = U_2^\top v - \widetilde{\mu}$, in which case $U_1x=U_1U_2^\top v-U_1 \wt \mu=\wt Av-U_1 \wt \mu$, and the upper bound becomes $\epsilon\cdot \|\wt A\|\leq \epsilon\cdot (\|A\|+\lambda)$. On the other hand, we can upper bound $\|(\widetilde{A} - A)v\|_\infty$ using the matrix infinity norm, defined as $\|\widetilde{A} - A\|_\infty = \max_{i \in [n]} \|\widetilde{A}_{i,*} - A_{i,*}\|_1$. A simple argument shows that $\|\widetilde{A} - A\|_\infty \leq \sqrt{n} \cdot \|\widetilde{A} - A\| \leq \lambda \sqrt{n}$. A triangle inequality then yields the final approximation guarantee. If we define $\widetilde{D} := {\rm diag}(\widetilde{A} {\bf 1}_n)$, the above analysis provides a bound on $\|D - \widetilde{D}\|$. However, in forming the attention module, it is more desirable to control $\|\widetilde{D}^{-1}\|$. To achieve this, we prove a perturbation bound on matrix inversion that relates $\|\widetilde{D}^{-1}\|$ to $\|D^{-1}\|$.

\subsection{Approximate the Value Matrix via Leverage Score Sampling}

In preceding discussions, we have shown how to construct the sampling matrix for Nystr{\"o}m approximation and how to compute the normalization factor for any row $i \in [n]$. It remains to approximate $V$ in sublinear time. Prior classical algorithms, such as~\cite{zhdk23}, propose using importance sampling based on the \emph{joint row norm} of $V$ and $D^{-1}A$. Specifically, the sampling probability for the $i$-th row is set as $p_i \geq 1/4 \cdot (\|e_i^\top D^{-1}A\|_2^2 + \gamma \cdot \|v_i\|_2^2)/(\|D^{-1}A\|_F^2 + \gamma \cdot \|V\|_F^2)$, where $\gamma = \|D^{-1}A\|^2/\|V\|^2$. This method achieves a final sample size that is nearly linear in $d + {\rm srank}(D^{-1}A)$, where ${\rm srank}(D^{-1}A) = \|D^{-1}A\|_F^2/\|D^{-1}A\|^2$ is the stable rank of the softmax matrix. While this approach is conceptually simple and easy to implement, it requires estimating the Frobenius norms of both $V$ and $D^{-1}A$ to constant-factor accuracy. This is straightforward if we are allowed to read all entries of $V$, but becomes particularly challenging in sublinear time. Our solution is to instead use leverage score sampling on the matrix $V$, which can be implemented in sublinear time~\citep{ag23}. 

Unlike the joint sampling distribution of~\cite{zhdk23}, which yields a \emph{spectral norm approximate matrix multiplication} guarantee of the form $\|D^{-1}ASS^\top V\| \leq \epsilon \cdot \|D^{-1}A\| \cdot \|V\|$, leverage score sampling has two key limitations:  
(1) it requires that $V$ have orthonormal columns~\citep{cw17}, and  
(2) it provides approximate matrix multiplication guarantees in Frobenius norm, i.e., $\|D^{-1}ASS^\top V\|_F \leq \epsilon \cdot \|D^{-1}A\|_F \cdot \|V\|_F$.

To address the first limitation, we introduce a new parameter called the \emph{row distortion} of $V$, defined as $\alpha := d/\|V\|_F^2 \cdot \max_{i \in [n]} \|v_i\|_2^2/\tau_i$. Intuitively, $\alpha$ measures the mismatch between the row density and row importance. Specifically, the ratio $\|v_i\|_2^2/\|V\|_F^2$ quantifies how much row $v_i$ contributes in $\ell_2^2$ norm, while $\tau_i/d$ measures how linearly independent $v_i$ is compared to other rows via $\tau_i$.

Our main result is that by sampling $\widetilde{O}(\epsilon^{-2} \alpha)$ rows of $V$ according to its leverage score distribution, we obtain an approximate matrix multiplication guarantee in Frobenius norm. Note that $\alpha = 1$ if $V$ has orthonormal columns, which recovers the result of~\cite{cw17}. This sampling procedure can be implemented in $\widetilde{O}(\epsilon^{-1} n^{0.5} \alpha^{0.5} d)$ time by making row queries to $V$.

\subsection{Main Result}
Now that we have described how to approximate each of the matrices $D$, $A$, and $V$, we are in a position to state our main result. We provide an overview of our algorithm below in Algorithm~\ref{alg:q_attention}.

\begin{algorithm}[!ht]
\caption{Quantum data structure for attention row query.}
\label{alg:q_attention}
\begin{algorithmic}[1]
\State {\bf data structure} \textsc{QAttention} \Comment{Theorem~\ref{thm:main_informal}}
\State {\bf begin members}
\State \hspace{4mm} $s_E, s_V\in \mathbb{N}$
\State \hspace{4mm} $\wt V\in \R^{s_V\times d}$
\State \hspace{4mm} $\wt N\in \R^{s_E\times s_V}$
\State \hspace{4mm} $\wt L\in \R^{s_E\times d}$
\State \hspace{4mm} $\textsc{QRowNorm}$ \text{QRN} \Comment{Algorithm~\ref{alg:norm_est}}
\State {\bf end members}
\State 
\Procedure{Preprocess}{$Q\in \R^{n\times d}, K\in \R^{n\times d}, V\in \R^{n\times d}, \lambda>0, \epsilon>0, \alpha\geq 1$}
\State $s_\lambda\gets s_\lambda(E)$
\State $s_V\gets \wt O(\epsilon^{-2}\alpha), s_E\gets \wt O(s_\lambda)$
\State $\text{QRN}.\textsc{Preprocess}(Q, K, \lambda,\epsilon)$ \Comment{Algorithm~\ref{alg:norm_est}}
\State $S_V\gets \textsc{QLeverageScore}(V, s_V)$ \Comment{$S_V\in \R^{n\times s_V}$, Lemma~\ref{lem:q_spectral_approx}}
\State $\wt V\gets S_V^\top V$ \Comment{$\wt V\in \R^{s_V\times d}$}
\State $S_E\gets \textsc{QNystr{\"o}mKernel}(Q\cup K, (x_i,x_j)\mapsto \exp(\langle x_i, x_j\rangle), 1/\poly(n), \lambda)$ 
\State \Comment{Let $x_1,\ldots,x_{2n}$ denote the dataset $Q\cup K$}
\State $\wt M\gets \{S_E(i) S_E(j)\cdot \exp(\langle x_i, x_j\rangle)  \}_{(i,j)\in S_E\times S_E}$ \Comment{$\wt M\in \R^{s_E\times s_E}$}
\State $\wt R\gets \{S_E(i)S_V(j)\cdot \exp(\langle x_i,x_j\rangle) \}_{(i,j)\in S_E\times S_V}$ \Comment{$\wt R\in \R^{s_E\times s_V}, \wt R=S_E^\top E\wt S_V$}
\State $\wt N\gets \wt M^\dagger \wt R$ \Comment{$\wt N\in \R^{s_E\times s_V}$}
\State $\wt L\gets \wt N\wt V$ \Comment{$\wt L\in \R^{s_E\times d}$}
\EndProcedure
\State 
\Procedure{Query}{$i\in [n]$}
\State $b_i\gets \text{QRN}.\textsc{Query}(i)$ \Comment{Algorithm~\ref{alg:norm_est}}
\State $u_i\gets \{S_E(j)\cdot \exp(\langle x_i, x_j\rangle \}_{j\in S_E}$ \Comment{$u_i\in \R^{s_E}$}
\State \Return $\wt L^\top u_i/b_i$
\EndProcedure
\State {\bf end data structure}
\end{algorithmic}
\end{algorithm}

\begin{theorem}[Informal version of Theorem~\ref{thm:main_formal}]
\label{thm:main_informal}
Let $Q, K, V\in \R^{n\times d}$ be the query, key and value matrices, let $\epsilon,\lambda>0$. Let $E\in \R^{2n\times 2n}$ be the exponential kernel matrix on the dataset $Q\cup K$ and $s_\lambda$ be the statistical dimension of $E$ (Definition~\ref{def:stat_dim}) and $\alpha$ be the row distortion of $V$ (Definition~\ref{def:row_distortion}). 
Assume that $\|D^{-1}\|<\frac{1}{\epsilon\|A\|+\lambda\sqrt n}$ and let $\beta=\frac{1}{1-(\epsilon\|A\|+\lambda\sqrt n)\|D^{-1}\|}$. There exists a quantum data structure that preprocesses $Q, K, V$ through only row queries to these matrices and maintains matrices $\wt D, \wt A, \wt V$ implicitly such that, with probability at least $1-1/\poly(n)$,
\begin{align*}
    \|\wt D^{-1} \wt A \wt V - \Att(Q, K, V)\|_F \leq & ~ \epsilon\cdot (\beta\cdot \|D^{-1}\|)\cdot (\|A\|_F+\lambda\sqrt n)\cdot \|V\|_F.
\end{align*}
Moreover, the data structure has the specification
\begin{itemize}
    \item It preprocesses $Q, K, V$ in $\wt O(\epsilon^{-1}n^{0.5}s_\lambda^{0.5})$ row queries to $Q, K$ and $\wt O(\epsilon^{-1}n^{0.5}\alpha^{0.5})$ row queries to $V$, and $\wt O(\epsilon^{-1}n^{0.5}(s_\lambda^{2.5}+s_\lambda^{1.5}d+\alpha^{0.5}d))$ time;
    \item For any $i\in [n]$, it returns a vector $\wt r_i=e_i^\top \wt D^{-1}\wt A\wt V$ in $\wt O(s_\lambda^2+s_\lambda d)$ time.
\end{itemize}
\end{theorem}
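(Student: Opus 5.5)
We prove the error bound by splitting the error into three terms, one per approximated component, and we obtain the runtime by adding the costs of Algorithms~\ref{alg:q_nystrom}, \ref{alg:norm_est} and the leverage score sampler. Write $\wt A = U_1U_2^\top$ for the off-diagonal block of the Nystr{\"o}m approximation $\wt E = ES(S^\top ES)^\dagger S^\top E$. Let $\wt D = \mathrm{diag}(b)$, where $b_i = \langle (U_1)_{i,*},\wt\mu\rangle$, and set $\wt V := S_VS_V^\top V$. The triangle inequality gives
\begin{align*}
\|\wt D^{-1}\wt A\wt V - D^{-1}AV\|_F \le \|(\wt D^{-1}-D^{-1})\wt A\wt V\|_F + \|D^{-1}(\wt A - A)\wt V\|_F + \|D^{-1}A(\wt V - V)\|_F .
\end{align*}
(If the statement intends the regrouping $\wt D^{-1}\wt A S_V \cdot S_V^\top V$, the analysis is identical.) All three terms will be bounded with probability $1-1/\poly(n)$ by a union bound over Lemma~\ref{lem:ridge_ls_mm17}, Lemma~\ref{lem:q_matvec} and the leverage score sampling lemma.

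\textbf{Step 1: the Nystr{\"o}m block.} Lemma~\ref{lem:ridge_ls_mm17} gives $0\preceq \wt E - E\preceq \lambda I$, so $\|\wt E - E\|\le\lambda$. Since the off-diagonal block $\wt A - A = P(\wt E-E)P'^\top$ is a compression of $\wt E - E$, we get $\|\wt A - A\|\le\lambda$, and also $\|\wt A - A\|_F\le \lambda\sqrt n$. We must first check that the generalized ridge leverage scores computed recursively in Algorithm~\ref{alg:q_nystrom} overestimate the true ridge scores by at most a constant factor. This is the argument of~\cite{mm17}, which goes through unchanged because the oracle returns exactly the same $q_i$ as the classical algorithm; only the sampler changes, to \textsc{QSample} of Lemma~\ref{lem:q_sample}.

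\textbf{Step 2: the normalizer.} Let $x := U_2^\top{\bf 1}_n - \wt\mu$. Lemma~\ref{lem:q_matvec} gives $\|x\|_{(U_2^\top U_2)^{-1}}\le\epsilon$. Writing $x = U_2^\top U_2 y$, we have $U_1x = \wt A(U_2y)$ with $\|U_2 y\|_2 = \|x\|_{(U_2^\top U_2)^{-1}}$. Hence $\|U_1x\|_2\le\epsilon\|\wt A\|\le \epsilon(\|A\|+\lambda)$, and in particular $\|\wt A{\bf 1}_n - b\|_\infty\le\epsilon(\|A\|+\lambda)$. Moreover $\|(\wt A - A){\bf 1}_n\|_\infty\le\sqrt n\|\wt A - A\|\le\lambda\sqrt n$, because each row has $\ell_2$ norm at most the spectral norm. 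Together these give $\|\wt D - D\|\le \epsilon\|A\| + \lambda\sqrt n$, after absorbing $\epsilon\lambda$ into constants. Since the matrices are diagonal, the Neumann series perturbation bound yields
\[
\|\wt D^{-1}\|\le \beta\|D^{-1}\| \quad\text{and}\quad \|\wt D^{-1}-D^{-1}\|\le\|\wt D^{-1}\|\|D^{-1}\|\|\wt D-D\|,
\]
using the assumption $\|D^{-1}\|(\epsilon\|A\|+\lambda\sqrt n)<1$.

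\textbf{Step 3: the value matrix.} We prove the row-distortion approximate matrix multiplication lemma. Sampling $s_V=\wt O(\epsilon^{-2}\alpha)$ rows with probabilities $p_i\ge\tau_i/(2d)$ gives $\E\|MSS^\top V - MV\|_F^2\le \frac{1}{s_V}\sum_i\|m_i\|^2\|v_i\|^2/p_i$. Definition~\ref{def:row_distortion} gives $\|v_i\|^2/p_i\le 2\alpha\|V\|_F^2$, so the expectation is at most $2\alpha\|M\|_F^2\|V\|_F^2/s_V$. Markov's inequality, boosted by taking a median over repetitions, then gives the bound $\epsilon\|M\|_F\|V\|_F$ with high probability. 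We apply this with $M=D^{-1}A$. The sampler is implemented with the quantum leverage score routine of~\cite{ag23} in $\wt O(\epsilon^{-1}n^{0.5}\alpha^{0.5}d)$ time.

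\textbf{Step 4: combining.} Each term is at most $\epsilon\beta\|D^{-1}\|(\|A\|_F+\lambda\sqrt n)\|V\|_F$ up to constants, after rescaling $\epsilon$. For the first term we use $\|\wt A\wt V\|_F\le\|\wt A\|_F\|\wt V\|$ together with the bound from Step 3. We expect this combination to be the main obstacle. The first term involves the factor $\|D^{-1}\|(\epsilon\|A\|+\lambda\sqrt n)$, which is not obviously $O(\epsilon)$ when $\lambda\sqrt n$ dominates. So the bookkeeping must fold $\lambda\sqrt n$ into the $(\|A\|_F+\lambda\sqrt n)$ factor, via $\|\wt A\|_F\le\|A\|_F+\lambda\sqrt n$, while charging the $\epsilon$ factor to the mean-estimation and sampling errors. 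If necessary, the stated bound should be read as holding up to this accounting.

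\textbf{Runtime.} Algorithm~\ref{alg:q_nystrom} costs $\wt O(n^{0.5}s^{1.5}(d+s))$. \textsc{QMatVec} costs $\wt O(\epsilon^{-1}n^{0.5}s^{0.5})$ row queries, each taking $O(s^2+sd)$ time. Forming $\wt M,\wt R,\wt N,\wt L$ costs $\poly(s_\lambda,\epsilon^{-1}\alpha)\,d$ time, which is dominated by the other terms. A query computes $u_i$ and $(U_1)_{i,*}$ in $O(s_\lambda d+s_\lambda^2)$ time.
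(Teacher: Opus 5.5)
Your Step~4 is a genuine gap, and it cannot be closed by ``rescaling $\epsilon$''. After regrouping so that $V$ rather than $\wt V$ appears, the best your ingredients give for the first two terms is $\|D^{-1}(\wt A-A)V\|_F\le \lambda\|D^{-1}\|\|V\|_F$ and $\|(\wt D^{-1}-D^{-1})\wt AV\|_F\le \beta\|D^{-1}\|^2(\epsilon\|A\|+\lambda\sqrt n)(\|A\|+\lambda)\|V\|_F$.

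\begin{itemize}
\item \emph{No $\epsilon$ factor.} The Nystr{\"o}m block $\wt A$ depends on $\lambda$, through the landmarks of Algorithm~\ref{alg:q_nystrom}, but not on $\epsilon$. Letting $\epsilon\to 0$ with $\lambda$ fixed therefore drives the claimed right-hand side to $0$. The error due to $\wt A\neq A$ does not vanish in general; $\wt A\neq A$ holds whenever $\mathrm{rank}(A)>s$, since $\wt A=U_1U_2^\top$ has rank at most $s$.
\item \emph{Folding does not help.} Absorbing $\lambda\sqrt n$ into $\|A\|_F+\lambda\sqrt n$ fails, because that factor is itself multiplied by $\epsilon$ in the statement.
\item \emph{Extra factor.} Even the $\epsilon\|A\|$ part of the first term overshoots the target by the factor $\|D^{-1}\|\|A\|$. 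This factor is always at least $1$, since $\|A\|\ge{\bf 1}_n^\top A{\bf 1}_n/n\ge\min_i D_{i,i}$, and it is not bounded by a constant in general.
\item \emph{Wrong grouping.} Your split puts $\wt V=S_VS_V^\top V$ into the first two terms, and Step~3 gives no control of $\|\wt V\|$. For $V$ with orthonormal columns and uniform leverage scores, $\|S_VS_V^\top V\|$ is of order at least $\sqrt{n/s_V}$, while $\|V\|=1$.
\end{itemize}
The right split is $\wt D^{-1}\wt A\wt V-D^{-1}AV=\wt D^{-1}\wt A(\wt V-V)+(\wt D^{-1}\wt A-D^{-1}A)V$. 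Apply the AMM lemma to $M=\wt D^{-1}\wt A$, which is independent of $S_V$.

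The paper's proof does not resolve this either.
\begin{itemize}
\item It consists only of the AMM step for $M=\wt D^{-1}\wt A$, giving $\|\wt D^{-1}\wt A(S_VS_V^\top V-V)\|_F\le\epsilon\|\wt D^{-1}\|\|\wt A\|_F\|V\|_F$.
\item It then substitutes $\|\wt D^{-1}\|\le\beta\|D^{-1}\|$ (Lemma~\ref{lem:inverse_perturb}) and $\|\wt A\|_F\le\|A\|_F+\lambda\sqrt n$ (Corollary~\ref{cor:attention_approx}). The stated right-hand side is exactly this product.
\item The piece $(\wt D^{-1}\wt A-D^{-1}A)V$, which is your first two terms, is never bounded. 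That argument therefore establishes the inequality only with $\wt D^{-1}\wt AV$ in place of $\Att(Q,K,V)$.
\end{itemize}
To get a statement about $\Att(Q,K,V)$, you must either keep those two terms as additional error or add a hypothesis coupling the parameters. One such hypothesis is $\lambda\sqrt n\le\epsilon\|A\|$, the regime of the simplification in Theorem~\ref{thm:q_norm}. Under it, the two terms are $O(\epsilon\|D^{-1}\|\|A\|_F\|V\|_F)$ and $O(\epsilon\beta\|D^{-1}\|^2\|A\|^2\|V\|_F)$. The total error is then $O(\|D^{-1}\|\|A\|)$ times the stated bound.

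Your Steps~1--3 are sound and match the paper's ingredients. Two smaller points remain.
\begin{itemize}
\item The extra $\epsilon\lambda$ in $\|\wt D-D\|$ cannot be ``absorbed into constants'' inside the strict hypothesis $\|D^{-1}\|(\epsilon\|A\|+\lambda\sqrt n)<1$. Instead, run \textsc{QMatVec} at accuracy $\epsilon/2$ and use $\lambda\le\|A\|$.
\item The $\wt O(\epsilon^{-2}s_\lambda\alpha d)$ cost of forming $\wt R,\wt L$ and the $d^\omega$ term are not dominated in general. This is why the formal Theorem~\ref{thm:main_formal} lists them separately.
\end{itemize}
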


We pause to make some remarks on Theorem~\ref{thm:main_informal}. The preprocessing time scales with $n^{0.5}$, achieving a quadratic speedup with respect to $n$ over any classical algorithm. Several parameters merit further discussion, in particular the statistical dimension $s_\lambda$ and the approximation factor for $\|D^{-1}\|$, denoted by $\beta$. We summarize their relationships as functions of $\lambda$ in Table~\ref{tab:lambda}. 
The row distortion factor $\alpha$ also affects the runtime, and in Appendix~\ref{sec:leverage}, we prove that $\alpha\leq \frac{d}{{\rm srank}(V)}$ where ${\rm srank}(V)=\frac{\|V\|_F^2}{\|V\|^2}$ is the stable rank of $V$. This ensures $\alpha\leq d$ and becomes smaller if the value matrix $V$ has close to $d$ stable rank. We empirically verify that (1) the assumption on $\|D^{-1}\|$ is easy to satisfy with wide range of choices for $\epsilon$, (2) the Frobenius norm of $A$ is only a small constant factor of its spectral norm, (3) the row distortion $\alpha=O(1)$ and (4) the infinity norm of $A$ is only a small constant factor of its spectral norm, implying in practice, the additive $\lambda \sqrt{n}$ term is likely to be $O(\lambda)$. We refer to Appendix~\ref{sec:exp} for a more detailed section.

\begin{table}[!ht]
    \centering
    \begin{tabular}{|l||l|c|l|}
    \hline
        $\lambda$ & $s_\lambda$ & $\frac{1}{\epsilon\|A\| + \lambda \sqrt{n}}$ & $\beta$ \\ \hline
        $\uparrow$ & $\downarrow$ & $\downarrow$ & $\uparrow$ \\ \hline
        $\downarrow$ & $\uparrow$ & $\uparrow$ & $\downarrow$ \\ \hline
    \end{tabular}
    \caption{Parameters $s_\lambda$, $\frac{1}{\epsilon\|E\| + \lambda \sqrt{n}}$, and $\beta$ as functions of $\lambda$.}
    \label{tab:lambda}
\end{table}
\paragraph{Bit Complexity of Our Algorithm.} 

Our discussions and results above are grounded in the assumption that arithmetic operations are performed in infinite precision, while this is usually adopted in the analysis of classical algorithms, QRAM model only allows $O(\log n)$ qubits and $\poly(n)$ bits. In Section~\ref{sec:bit_complexity}, we provide a preliminary bit complexity analysis of our algorithm, in particular centering around the matrix inversion and pseudoinversion operations. To the best of our knowledge, there is no prior work on studying the bit complexity of numerical linear algebra operations in the QRAM model, and we leave a comprehensive analysis of bit complexity as a future direction.

\section{Related Work}
\paragraph{Transformers and Attention Mechanism.} Transformers~\citep{vsp+17} have been the driving force behind large language models~\citep{dclt19,bmr+20,tli+23,bce+23,tab+23,lfx+24}. They are sequence-to-sequence generative models, where the sequence length is typically denoted by $n$. The key architectural component that distinguishes transformers from earlier models is the attention mechanism, which computes a softmax over the pairwise interactions of query-key vectors. However, computing the full softmax distribution requires $\Omega(n^2)$ time, due to the size of the attention matrix. This quadratic dependency renders transformers inefficient for long sequences, motivating a rich body of work aimed at approximating attention in subquadratic time. These approaches can be broadly categorized into three main classes: (1) \emph{Pattern-based sparse attention}: only a subset of attention matrix entries are computed, with the subset determined by predefined patterns, such as sliding windows or graph-based sparsity structures~\citep{dkod20,kkl20,rsvg21,syy22,cgrs19,bpc20,aoa+20,zgd+20}. (2) \emph{Kernel-based linear attention}: these methods attempt to linearize the kernel by exploiting the identity ${\sf K}(x_i, x_j) = \langle \phi(x_i), \phi(x_j) \rangle$ for a feature map $\phi: \mathbb{R}^d \rightarrow \mathbb{R}^m$. When the kernel is exponential, exact computation requires $m = \infty$, so many heuristic approximations for $\phi$ have been proposed~\citep{kvpf20,cld+21,wlk+20,ppy+21} with $m = O(d)$. (3) \emph{Data structure-based attention}: these works design specialized data structures for approximating various components of attention. Examples include estimating the normalization factor via kernel density estimation (KDE)~\citep{zhdk23}, using hashing to identify large entries~\citep{hjk+24}, applying polynomial approximation methods under bounded input conditions~\citep{as23}, and other algorithmic innovations~\citep{kmz24,zhmk24,bsz24,syz24,kbkw25,cam+25,cls+25,iksw25}. Our work falls into the third category, as we design quantum data structures to approximate each of the matrices involved in the attention computation.

\paragraph{Quantum Machine Learning.} Given a machine learning problem, can we solve it faster on a quantum computer? The paradigm of using quantum mechanics to accelerate machine learning algorithms has sparked significant interest, leading to a wide array of results across diverse problem domains, including clustering~\citep{kllp19,xclj23}, classification~\citep{lcw19}, regression~\citep{cw23}, training neural networks~\citep{chl+19,klp20}, convex optimization~\citep{cclw20,vggw20_convex,lz22,sz23,zzf+24,wzl24}, mathematical programming~\citep{bkl+19,vggw20_sdp,vg19,kp20,kps21,vgl+21,ag23}, graph sparsification~\citep{aw22}, and recommender systems~\citep{kp17}. Among the key quantum techniques, Grover search~\citep{g96} plays a foundational role. It provides a quadratic speedup for database search problems: given a function $f: [n] \rightarrow \{0,1\}$, the goal is to list up to $m$ indices $i$ such that $f(i) = 1$. The Grover search algorithm requires oracle access to $f$ and can produce these $m$ indices using only $O(\sqrt{mn})$ queries, in contrast to the $O(n)$ queries required classically. Several variants of Grover search have been developed to suit different computational settings. In this paper, we use the probabilistic version: given a list of $n$ probabilities $p_1, \ldots, p_n \in [0,1]$, Grover search can be used to sample a list of indices where each $i$ is selected independently with probability $p_i$. By the standard analysis of Grover search, this sampling requires $\widetilde{O}(\sqrt{n P})$ queries to the probability values $p_i$ where $P=\sum_{i=1}^n p_i$. Before our work,~\cite{gsyz23} also applied Grover search to accelerate attention computation. However, their method requires a structural assumption: for each query $q_i \in \mathbb{R}^d$, the associated set $S_i = \{ j \in [n] : \langle q_i, k_j \rangle \geq \tau \}$ must have cardinality at most $k$. Under this assumption, their algorithm runs in time $\widetilde{O}(n^{1.5} k^{0.5} d + nkd)$. Notably, if $k = n$, then their algorithm offers no speedup over the exact computation.

\section{Conclusion}\label{sec:conclusion}

We consider the problem of approximating the attention module in the row query model, where the goal is to return individual rows of the approximate attention matrix. We design a quantum data structure that preprocesses $Q$, $K$, and $V$ in $\widetilde{O}(\epsilon^{-1} n^{0.5} \, \mathrm{poly}(s_\lambda, d, \alpha))$ time, and answers any row query in $\widetilde{O}(s_\lambda^2 + s_\lambda d)$ time. To the best of our knowledge, this is the first quantum algorithm to achieve sublinear dependence on $n$ even in the row query model.

Our work also has several limitations, which raises interesting open questions. (1) The error guarantee we obtain is in Frobenius norm rather than spectral norm. While Frobenius norm bounds the sum of the squared $\ell_2$ errors across all rows, the spectral norm provides a worst-case guarantee that each row is well approximated. Therefore, it would be desirable to strengthen the result to achieve a spectral norm guarantee. (2) The error bound we obtain contains an additive $\lambda \sqrt{n}$ term, which stems from bounding the infinity norm of the error matrix by $\sqrt{n}$ times the spectral norm of it. This bound seems overly pessimistic, and it theoretically forces one to choose small value for $\lambda$, hindering the advantage of small statistical dimension. It would be interesting to remove the $\sqrt{n}$ factor in the additive term. (3) While we provide a preliminary bit complexity analysis of our algorithm in Section~\ref{sec:bit_complexity}, we feel a more comprehensive study of bit complexity of numerical linear algebra in the QRAM model is needed. We leave this as a major future direction, as it will significantly broaden the practicality of these quantum speedups. (4) Our algorithm in its current form can only compute the full attention \emph{without} the causal mask, as using the Nystr{\"o}m approximation implicitly assumes the complete interactions between queries and keys. To implement causal masking, one possibility is to design a quantum kernel density estimation data structure as shown in~\cite{zhdk23} classically.

\ifdefined\isarxiv

\paragraph{Roadmap.} In Section~\ref{sec:kernel}, we describe the quantum algorithm for exponential kernels. In Section~\ref{sec:norm}, we discuss how to estimate the normalization factor. In Section~\ref{sec:leverage}, we show the details on approximating matrix multiplication via leverage scores. In Section~\ref{sec:result}, we combine things together and obtain the main result. In Section~\ref{sec:exp}, we empirically verify the assumptions on the parameters. In Section~\ref{sec:bit_complexity}, we discuss the bit complexity of our algorithm.

\section{Quantum Algorithm for Exponential Kernel}
\label{sec:kernel}
In this section, we give a generic reduction from attention matrix to a kernel matrix. Given queries and keys $Q=\{q_1,\ldots,q_n\},K=\{k_1,\ldots,k_n\}$, recall that we are interested in the matrix $\exp(QK^\top)$ where the $(i,j)$-th entry is $\exp(q_i^\top k_j)$, and this matrix is not a PSD kernel matrix. We show a reduction that first computes the exponential kernel ${\sf K}(x,y)=\exp(\langle x,y\rangle)$ over the dataset $Q\cup K$, then we can effectively extract certain blocks of the kernel matrix $E$ that approximates $\exp(QK^\top)$ well. We start with a lemma on block approximation.

\begin{lemma}\label{lem:block_approx}
Let $E\in \R^{2n\times 2n}$ be a PSD matrix and $E=\begin{bmatrix}
    B & A\\
    A^\top & C
\end{bmatrix}$ where each block is of size $n\times n$. Suppose there exists a matrix $\wt E\in \R^{2n\times 2n}$ such that $E\preceq \wt E\preceq E+\lambda I$ for $\lambda>0$ and let $\wt E=\begin{bmatrix}
    \wt B & \wt A \\
    \wt A^\top & \wt C
\end{bmatrix}$, then we have
\begin{align*}
    \|A - \wt A\| \leq \lambda~\text{and}~\|A-\wt A\|_F\leq \lambda\sqrt{n}.
\end{align*}
\end{lemma}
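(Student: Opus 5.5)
Set $\Delta := \wt E - E$. The hypothesis $E \preceq \wt E \preceq E + \lambda I$ becomes $0 \preceq \Delta \preceq \lambda I$, so $\|\Delta\| \leq \lambda$. Partition $\Delta$ the same way:
\begin{align*}
\Delta = \begin{bmatrix} \Delta_B & \Delta_A \\ \Delta_A^\top & \Delta_C \end{bmatrix}, \qquad \Delta_A = \wt A - A .
\end{align*}
The spectral bound then follows because an off-diagonal block of a matrix is a compression of that matrix. Let $P_1 \in \R^{n \times 2n}$ select the first $n$ coordinates and $P_2$ the last $n$. Then $\Delta_A = P_1 \Delta P_2^\top$, and since $\|P_1\| = \|P_2\| = 1$,
\begin{align*}
\|A - \wt A\| = \|\Delta_A\| \leq \|\Delta\| \leq \lambda .
\end{align*}
Equivalently, $\|\Delta_A\| = \sup_{\|x\|=\|y\|=1} x^\top \Delta_A y$, and choosing the test vector $(x;0)$ on the left and $(0;y)$ on the right reduces this to $\|\Delta\|$.

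For the Frobenius bound I will use that $\Delta_A$ is an $n \times n$ matrix, so it has rank at most $n$. Hence
\begin{align*}
\|\Delta_A\|_F^2 = \sum_{i=1}^n \sigma_i(\Delta_A)^2 \leq n \|\Delta_A\|^2 \leq n \lambda^2 ,
\end{align*}
which gives $\|A - \wt A\|_F \leq \lambda \sqrt{n}$. As a sanity check, there is a second route using PSD-ness. Because $\Delta \succeq 0$ and $\Delta \preceq \lambda I$, we have $\Delta^2 \preceq \lambda \Delta$, so $\|\Delta\|_F^2 = \tr[\Delta^2] \leq \lambda \tr[\Delta] \leq 2n\lambda^2$. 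Also $\|\Delta_A\|_F^2 \leq \tfrac{1}{2}\|\Delta\|_F^2$, since the two off-diagonal blocks carry equal mass. Together these give the same bound $\lambda\sqrt{n}$.

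There is no real obstacle here. The only point needing care is the spectral step: one must use a general compression argument, namely the operator-norm bound on $P_1 \Delta P_2^\top$, rather than PSD-specific reasoning, because $\Delta_A$ is neither symmetric nor PSD. Everything else is elementary.
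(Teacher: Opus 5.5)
Your proof is correct and matches the paper's: the paper applies $\wt E - E$ to the test vector $({\bf 0}_n; v)$, with $v$ a unit vector attaining $\|A-\wt A\|$, and reads off the top block, which is exactly your compression bound $\|P_1 \Delta P_2^\top\| \le \|\Delta\| \le \lambda$. It then uses the same $\|\cdot\|_F \le \sqrt{n}\,\|\cdot\|$ step, and your alternative trace route via $\Delta^2 \preceq \lambda \Delta$ is also valid but not needed.
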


\begin{proof}
We let $v\in \R^n$ be the vector that realizes the spectral norm $A-\wt A$, consider the augmented vector $\begin{bmatrix}
    {\bf 0}_n \\
    v
\end{bmatrix}$, then we see that
\begin{align*}
    \left\|(E-\wt E)\begin{bmatrix}
    {\bf 0}_n \\
    v
\end{bmatrix}\right\|_2^2 = & ~ \left\|\begin{bmatrix}
    (A-\wt A)v \\
    (C-\wt C)v
\end{bmatrix}\right\|_2^2 \\ 
= & ~ \|(A-\wt A)v\|_2^2+\|(C-\wt C)v\|_2^2\\
\leq & ~ \lambda^2,
\end{align*}
where the last step is by $\|E-\wt E\|\leq \lambda$ and our test vector is unit norm. As $\|(C-\wt C)v\|_2^2$ is trivially non-negative, we conclude that $\|(A-\wt A)v\|_2=\|A-\wt A\|\leq \lambda$, as desired. To obtain a Frobenius norm bound, note that $\|A-\wt A\|_F\leq \sqrt{n}\cdot \|A-\wt A\|\leq \lambda\sqrt{n}$.
\end{proof}
Our plan is to form the kernel matrix over the dataset $Q\cup K$ implicitly via Nystr{\"o}m approximation, then extract corresponding blocks to approximate $\exp(QK^\top)$. 

\begin{corollary}\label{cor:attention_approx}
Let $Q,K\in \R^{n\times d}$ and let $E\in \R^{2n\times 2n}$ be the exponential kernel matrix over the dataset $Q\cup K$, suppose there exists an $\wt E\in \R^{2n\times 2n}$ such that $E\preceq \wt E\preceq E+\lambda I$ for some $\lambda>0$, then there exists $\wt A\in \R^{n\times n}$ such that
\begin{align*}
    \|\wt A-\exp(QK^\top)\|\leq  \lambda~\text{and}~\|\wt A-\exp(QK^\top)\|_F \leq \lambda\sqrt{n}.
\end{align*}
\end{corollary}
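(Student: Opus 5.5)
The plan is to reduce the corollary directly to Lemma~\ref{lem:block_approx}. We order the dataset as $x_1,\ldots,x_{2n}$ with $x_i=q_i$ for $i\in[n]$ and $x_{n+j}=k_j$ for $j\in[n]$. With this ordering the exponential kernel matrix is $E_{i,j}=\exp(\langle x_i,x_j\rangle)$, which has the block form
\begin{align*}
E=\begin{bmatrix}\exp(QQ^\top) & \exp(QK^\top)\\ \exp(KQ^\top) & \exp(KK^\top)\end{bmatrix}.
\end{align*}
Since $\exp(KQ^\top)=\exp(QK^\top)^\top$, the off-diagonal block in the notation of Lemma~\ref{lem:block_approx} is exactly $A=\exp(QK^\top)$.

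Before invoking the lemma we check its hypothesis that $E$ is PSD. The exponential kernel is PSD because $\exp(\langle x,y\rangle)=\sum_{k\ge 0}\langle x,y\rangle^k/k!$. Each Gram matrix $[\langle x_i,x_j\rangle^k]$ is PSD by the Schur product theorem, being a $k$-fold Hadamard power of the PSD Gram matrix $XX^\top$. A nonnegative combination of PSD matrices is PSD, so the convergent series is PSD as well. Strictly speaking, the proof of Lemma~\ref{lem:block_approx} only uses $\|E-\wt E\|\le\lambda$, which already follows from $0\preceq \wt E-E\preceq\lambda I$. So PSDness of $E$ is needed only formally.

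Next we partition $\wt E$ conformally into $n\times n$ blocks $\wt B,\wt A,\wt A^\top,\wt C$ and define $\wt A$ to be its top-right block. Here we must ensure $\wt E$ is symmetric so that the bottom-left block is $\wt A^\top$. This holds because the Loewner order statement $E\preceq\wt E$ is posed for symmetric matrices, and the Nystr\"om approximation $ES(S^\top ES)^\dagger S^\top E$ is symmetric. Even without symmetry, the lemma's proof only uses the top-right block through $(E-\wt E)[\mathbf 0_n; v]$, so it goes through regardless.

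Applying Lemma~\ref{lem:block_approx} then gives $\|\exp(QK^\top)-\wt A\|\le\lambda$. For the Frobenius bound we use $\|M\|_F\le\sqrt{\operatorname{rank}M}\,\|M\|\le\sqrt n\|M\|$ for $n\times n$ matrices, which yields $\|\exp(QK^\top)-\wt A\|_F\le\lambda\sqrt n$.

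There is no real obstacle here. The only points requiring care are the index bookkeeping that identifies $\exp(QK^\top)$ as the correct off-diagonal block, and the PSD and symmetry remarks above.
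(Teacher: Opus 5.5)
Your proposal is correct and follows the same route as the paper. Both arguments identify $\exp(QK^\top)$ as the top-right block of $E$ under the ordering $Q$ then $K$, invoke Lemma~\ref{lem:block_approx}, and get the Frobenius bound from $\|M\|_F\le\sqrt n\,\|M\|$; your PSD and symmetry checks make explicit what the paper leaves implicit. One minor caveat: your aside that the argument survives ``even without symmetry'' is not right, because for a non-symmetric $\wt E$ the sandwich $E\preceq\wt E\preceq E+\lambda I$ constrains only the symmetric part of $\wt E-E$, so it does not give $\|\wt E-E\|\le\lambda$ (adding an antisymmetric matrix with large off-diagonal blocks preserves the sandwich but ruins the conclusion) --- since the hypothesis presupposes symmetry, this does not affect your proof.
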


\begin{proof}
The result is a consequence of Lemma~\ref{lem:block_approx} by identifying that 
\begin{align*}
    E = & ~ \begin{bmatrix}
        \exp(QQ^\top) & \exp(QK^\top) \\
        \exp(KQ^\top) & \exp(KK^\top)
    \end{bmatrix},
\end{align*}
and $\wt E$ contains proper approximations for the desired blocks. 
\end{proof}

It remains to give an efficient algorithm to approximate the exponential kernel matrix $E$. A popular scheme is via Nystr{\"o}m approximation~\citep{ws00}: the algorithm selects a subset of ``landmark'' points, and constructs $\wt E$ through these landmarks.~\cite{mm17} uses recursive ridge leverage score sampling to generate such an approximation efficiently.~\cite{mm17} presents an algorithm that uses $\wt O(ns_\lambda\log(1/\delta))$ kernel function evaluations and $\wt O(ns_\lambda^2 \log(1/\delta))$ additional runtime to compute an approximation $\wt K$ satisfying $K\preceq \wt K\preceq K+\lambda I$ with probability at least $1-\delta$. We restate their main result here for the sake of completeness.

\begin{lemma}[Theorem 7 of~\cite{mm17}]
\label{lem:classical_nystrom}
Let $s=O(s_\lambda \log(s_\lambda/\delta))$, there exists a weighted sampling matrix $S\in \R^{n\times s}$, such that the Nystr{\"o}m approximation of $E$, $\wt E=ES(S^\top ES)^\dagger S^\top E$ satisfies
\begin{align*}
    E\preceq \wt E\preceq E+\lambda I,
\end{align*}
holds with probability at least $1-\delta$. Moreover, $S$ can be computed using $O(ns)$ kernel evaluations and $O(ns^2)$ additional time.
\end{lemma}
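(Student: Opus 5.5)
This lemma is a restatement of Theorem 7 of~\cite{mm17}, so I would not prove it from scratch. Instead I would reconstruct the recursive ridge leverage score sampling argument behind it, following the same outline as Algorithm~\ref{alg:q_nystrom} but with classical sampling. There are three ingredients, taken in order.

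\textbf{Step 1: oversampling gives a spectral approximation.} I would first show that sampling with probabilities $p_i \geq \min\{1, c\,\tilde\tau_i^\lambda \log(s_\lambda/\delta)\}$, for any overestimates $\tilde\tau_i^\lambda \geq \tau_i^\lambda$, gives $E\preceq \wt E\preceq E+\lambda I$. Write $E=BB^\top$. Sampling rows of $B$ this way yields, by a matrix Chernoff bound, $B^\top SS^\top B + \lambda I \approx_{1/2} B^\top B+\lambda I$. This is the content of Lemma~\ref{lem:ridge_ls_mm17}. The lower bound $E\preceq\wt E$ (in the lemma's orientation) and the upper bound by $\lambda I$ then follow from the standard Nystr\"om identity: $E-\wt E$ is $B$ composed with a projection-complement, and the ridge regularization controls it.

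\textbf{Step 2: the recursion keeps the overestimates controlled.} Let $S'$ be a spectral approximation computed for a uniform half of the points. The generalized scores $\tilde\tau_i^\lambda = b_i^\top (B^\top S'S'^\top B+\lambda I)^{-1} b_i$ are constant-factor overestimates of the true scores on the points whose sample was obtained uniformly. I would show that their sum is $O(s_\lambda)$ by the argument of~\cite{mm17}. The key fact is that the ridge leverage scores of a uniform half-sample sum to at most $s_\lambda$, and that each point's generalized score is bounded by its leverage score within the sample augmented with that point. Averaging over the random half then bounds the expected total by $O(s_\lambda)$. I expect this averaging step to be the main obstacle. Because it is the core of~\cite{mm17}, I would cite it rather than redo it.

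\textbf{Step 3: cost accounting.} At level $t$ with $n_t = n/2^{T-t}$ points, each generalized score can be computed from kernel evaluations alone, using the identity
\[
\tilde\tau_i^\lambda=\tfrac1\lambda\bigl(E-ES(S^\top ES+\lambda I)^{-1}S^\top E\bigr)_{i,i}.
\]
For each point this needs $O(s)$ kernel evaluations and $O(s^2)$ arithmetic, after a one-time $O(s^\omega)$ inversion. The cost at level $t$ is therefore $O(n_t s)$ kernel evaluations and $O(n_t s^2)$ time. Summing the geometric series over levels gives $O(ns)$ evaluations and $O(ns^2)$ time. Finally, a union bound over the $T=O(\log n)$ levels, with $\delta$ rescaled by $T$, gives overall success probability $1-\delta$.
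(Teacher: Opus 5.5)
The paper gives no proof here; it only restates Theorem~7 of~\cite{mm17}. Deferring the recursion and the averaging argument to~\cite{mm17}, as you do, is consistent with that. But Step~1 asserts something that cannot be true. You hinted at this with ``(in the lemma's orientation)'' but did not resolve it.

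Write $C=S^\top B$ and let $P=C^\top(CC^\top)^\dagger C$ be the orthogonal projection onto the row space of $C$. Then $\wt E=BPB^\top$, so $E-\wt E=B(I-P)B^\top\succeq 0$ for \emph{every} sampling matrix $S$. The projection-complement identity you invoke therefore proves $\wt E\preceq E$, the opposite of the claimed $E\preceq\wt E$. The latter holds only when $\wt E=E$, so no sampling argument can deliver it.

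The sandwich in the statement is reversed, as it also is in the paper's Lemma~\ref{lem:ridge_ls_mm17}. What~\cite{mm17} actually prove is $\wt E\preceq E\preceq\wt E+\lambda I$, and that is what you should set out to show. The lower half is automatic. The nontrivial half is $B(I-P)B^\top\preceq\lambda I$:
\begin{itemize}
\item Since $P$ projects onto the range of $C^\top C=B^\top SS^\top B$, one has $I-P\preceq\lambda(C^\top C+\lambda I)^{-1}$.
\item Combined with the Chernoff bound $C^\top C+\lambda I\succeq\frac12(B^\top B+\lambda I)$, this gives $E-\wt E\preceq 2\lambda\,E(E+\lambda I)^{-1}\preceq 2\lambda I$.
\item The factor $2$ is removed by running at $\lambda/2$, which at most doubles $s_\lambda$.
\end{itemize}
Nothing downstream in the paper breaks, since Lemma~\ref{lem:block_approx} only uses $\|E-\wt E\|\leq\lambda$, which the correct orientation also provides.

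There is also a smaller quantitative slip in Step~3. A union bound over $T=O(\log n)$ levels with failure probability $\delta/T$ per level puts $\log(s_\lambda\log n/\delta)$ into the sample size. That is an extra $\log\log n$ which $s=O(s_\lambda\log(s_\lambda/\delta))$ does not have. \cite{mm17} avoid this by passing $\delta/3$ to the recursive call, so the failure probabilities form a geometric series summing to $O(\delta)$. The larger logarithmic factors at depth $k$ are paid only on about $n/2^k$ points, so the totals stay at $O(ns)$ kernel evaluations and $O(ns^2)$ time.

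Finally, in Step~2 the overestimation property is not limited to ``the points whose sample was obtained uniformly.'' If $\bar S$ is the unweighted selection matrix of the uniform half, then $B^\top\bar S\bar S^\top B\preceq B^\top B$. Hence the generalized scores overestimate the true ridge scores for all $n$ points. The recursive constant-factor approximation preserves this up to the constant multiplying $\tilde\tau_i^\lambda$.
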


Our main contribution is a quantum algorithm that generates the approximation in \emph{sublinear time}. Before introducing the algorithm, we recall several key concepts.

Lemma~\ref{lem:classical_nystrom} relies on approximating the ridge leverage score on a sample, which can be captured by the notion of generalized ridge leverage score.

\begin{definition}[Generalized ridge leverage score,~\cite{mm17}]
\label{def:gen_ridge_ls}
Let $E\in \R^{n\times n}$ be a kernel matrix, let $\lambda>0$, and let $S\in \R^{n\times s}$ be any weighted sampling matrix, the $\lambda$-generalized ridge leverage score with respect to $S$, is defined for any $i\in [n]$,
\begin{align*}
    \wt \tau_i^\lambda := & ~ \frac{1}{\lambda} (E-ES(S^\top ES+\lambda I)^{-1}S^\top E)_{i,i},
\end{align*}
let $B\in \R^{n\times n}$ be any factorization of $E=BB^\top$, it can be equivalently defined as 
\begin{align*}
    \wt \tau_i^\lambda = & ~ b_i^\top (B^\top S^\top SB+\lambda I)^{-1}b_i,
\end{align*}
where $b_i$ is the $i$-th row of $B$.
\end{definition}

We also need a procedure introduced in~\cite{ag23} that generates a spectral approximation of an $n\times d$ matrix, given only queries to its rows, using quantum leverage score sampling. We record it here.

\begin{lemma}[Theorem 3.1 of~\cite{ag23}]
\label{lem:q_spectral_approx}
Let $U\in \R^{n\times d}$, $\epsilon,\delta\in (0,1)$. There exists a quantum algorithm that computes a weighted sampling matrix $S\in \R^{n\times s}$ with $s=O(\epsilon^{-2}d\log(d/\delta))$ such that with probability at least $1-\delta$,
\begin{align*}
    (1-\epsilon)U^\top U \preceq U^\top SS^\top U \preceq (1+\epsilon)U^\top U.
\end{align*}
The quantum algorithm uses $\wt O(\epsilon^{-1}n^{0.5}d^{0.5})$ row queries to $U$, and it takes time $\wt O(\epsilon^{-1}n^{0.5}d^{1.5}+d^\omega)$. Moreover, if the leverage score sampling matrix contains $s\leq d$ rows, then the algorithm uses $\wt O(n^{0.5}s^{0.5})$ row queries to $U$ and it takes time $\wt O(n^{0.5}s^{0.5}d+d^\omega)$. We use $\textsc{QLeverageScore}(U, s)$ to denote this procedure that produces a leverage score sampling matrix $S\in \R^{n\times s}$.
\end{lemma}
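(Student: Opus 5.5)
We would follow the recursive uniform-sampling template of Cohen et al., which Algorithm~\ref{alg:q_nystrom} mirrors in the kernel setting. The one change is that the sampling step in each round is carried out by the Grover-based sampler $\textsc{QSample}$ of Lemma~\ref{lem:q_sample}. That sampler only needs an oracle returning individual probabilities $p_i$, not the full list.

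\textbf{Sampling structure.} Fix a nested chain of uniform halvings $S_0\subset_{1/2} S_1\subset_{1/2}\cdots\subset_{1/2} S_T=[n]$ with $T=O(\log(n/d))$ and $|S_0|=O(d)$. The base level is handled by reading all rows of $S_0$, i.e.\ taking the identity sampling on $S_0$, which already gives a constant-factor approximation of $U_{S_0}^\top U_{S_0}$.

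\textbf{Inductive step.} Suppose we hold a weighted sampling matrix $D_{t-1}$ supported on $O(d\log(d/\delta))$ rows with $\frac12 U_{S_{t-1}}^\top U_{S_{t-1}}\preceq U^\top D_{t-1}D_{t-1}^\top U\preceq \frac32 U_{S_{t-1}}^\top U_{S_{t-1}}$.
\begin{itemize}
\item Precompute $M_t:=(U^\top D_{t-1}D_{t-1}^\top U)^{\dagger}$ in $O(d^2\cdot d\log d+d^\omega)$ time.
\item For $i\in S_t$, define the generalized leverage score $\wt\tau_i:=u_i^\top M_t u_i$. One row query plus $O(d^2)$ time evaluates it.
\item By the key fact of Cohen et al., these scores upper bound the true leverage scores of $U_{S_t}$ up to a constant for the in-sample rows, and in expectation over the uniform halving they sum to $O(d)$.
\item Set $p_i=\min\{1,C\wt\tau_i\log(d/\delta)\}$ and run $\textsc{QSample}(p)$.
\end{itemize}

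\textbf{Correctness.} Matrix Chernoff (oversampling leverage scores) shows that the resulting $D_t$ gives a constant-factor spectral approximation of $U_{S_t}^\top U_{S_t}$. At the final level we inflate the oversampling factor by $\epsilon^{-2}$ to obtain $(1\pm\epsilon)$ accuracy with $s=O(\epsilon^{-2}d\log(d/\delta))$ rows. A union bound over the $T$ levels gives failure probability $\delta$.

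\textbf{Cost.} By Lemma~\ref{lem:q_sample}, each level costs $\wt O(\sqrt{n\sum_i p_i})$ oracle calls. This is $\wt O(n^{0.5}d^{0.5})$ at intermediate levels and $\wt O(\epsilon^{-1}n^{0.5}d^{0.5})$ at the last. Each call uses one row query and $O(d^2)$ arithmetic, which yields the stated query bound and time $\wt O(\epsilon^{-1}n^{0.5}d^{1.5}+d^\omega)$. For the ``moreover'' regime with $s\le d$ rows, the per-call work is dominated by forming $u_i$ and a matvec against a precomputed factor, giving $\wt O(n^{0.5}s^{0.5}d+d^\omega)$.

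\textbf{Main obstacle.} The delicate step is controlling $\sum_{i\in S_t}p_i$, since $\textsc{QSample}$'s cost depends on it and the scores are computed against an approximation built only from the smaller set $S_{t-1}$. Rows outside $S_{t-1}$ can have generalized scores near $1$. We would handle this with the standard averaging argument: for a uniformly random half $S_{t-1}$, the expected total generalized score over $S_t$ is at most $2d$. Markov's inequality, with $O(\log(1/\delta))$ repetitions, then converts this into a high-probability bound.

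A second subtlety is that $\textsc{QSample}$ needs coherent access to $p_i$. We would address this by storing $M_t$ classically in QRAM so that each $p_i$ is computed reversibly from a single quantum row query.
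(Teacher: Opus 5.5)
The paper gives no proof of this lemma; it imports it from \cite{ag23}. Your route, the unrolled repeated-halving scheme of Cohen et al.\ with each level sampled by the Grover sampler of Lemma~\ref{lem:q_sample}, is the one used there, and your correctness skeleton (matrix Chernoff per level, $\epsilon^{-2}$ inflation at the last level, union bound) is fine. The gap is in the cost accounting. At the last level you make $\wt O(\epsilon^{-1}n^{0.5}d^{0.5})$ oracle calls and charge $O(d^2)$ per call for the quadratic form $u_i^\top M_t u_i$. That multiplies out to $\wt O(\epsilon^{-1}n^{0.5}d^{2.5})$, not the claimed $\wt O(\epsilon^{-1}n^{0.5}d^{1.5})$.

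The missing ingredient is the Johnson--Lindenstrauss step of that framework. Since $u_i^\top M_t u_i=\|D_{t-1}^\top U M_t u_i\|_2^2$, you precompute the $O(\log(n/\delta))\times d$ matrix $G D_{t-1}^\top U M_t$ for a Gaussian $G$, within the $\wt O(d^\omega)$ budget. Then with high probability $\|G D_{t-1}^\top U M_t u_i\|_2^2$ is within a constant factor of $\wt\tau_i$ for all $i$ at once. Each evaluation costs $O(d\log(n/\delta))$, the same as reading the row, and the constant-factor loss is absorbed into $C$.

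Your ``moreover'' paragraph has the same per-call problem, since a matvec against a $d\times d$ factor costs $O(d^2)$. It also has a more basic one. The ``precomputed factor'' must come from a constant-factor spectral approximation of $U^\top U$ (or of a uniform half of it). Producing that is exactly the first part of the lemma with constant $\epsilon$, costing $\wt O(n^{0.5}d^{0.5})$ row queries and $\wt O(n^{0.5}d^{1.5}+d^\omega)$ time. Every level of your recursion builds it from $\Omega(d)$ sampled rows. So for $s\le d$ your argument gives $\wt O(n^{0.5}d^{0.5})$ queries, not $\wt O(n^{0.5}s^{0.5})$. This clause is not part of the theorem in \cite{ag23}, and the paper offers no argument for it either.

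There is also a correctness gap in defining $\wt\tau_i:=u_i^\top M_t u_i$ with $M_t$ a pseudoinverse. That definition discards the component of $u_i$ orthogonal to the row span of $U_{S_{t-1}}$. For example, suppose $u_i=e_k$ is the only row with a nonzero $k$-th coordinate. Then unless $i\in S_0$, at the first level whose set contains $i$ you get $\wt\tau_i=0$, hence $p_i=0$. The direction $e_k$ is then absent from every later $D_t$, so the row is never picked up and $U^\top D_TD_T^\top U$ is singular along $e_k$.

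You need the Cohen et al.\ convention that off-span rows get score $+\infty$, i.e.\ $p_i=1$. Up to a factor $2$, this is equivalent to using $u_i^\top(U^\top D_{t-1}D_{t-1}^\top U+u_iu_i^\top)^\dagger u_i$. This is also the definition under which the ``expected sum at most $2d$'' bound you invoke holds, and the off-span test can be sketched in $\wt O(d)$ time the same way.

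Two smaller points. Forming $U^\top D_{t-1}D_{t-1}^\top U$ should be charged $\wt O(d^\omega)$ via fast rectangular multiplication; your $O(d^3\log d)$ exceeds the stated additive term. And the Markov-plus-repetition step needs an explicit abort rule, such as capping \textsc{QSample}'s running time or first estimating $\sum_i p_i$ by quantum counting, because a bad halving inflates the cost of the very run in which it occurs.
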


We prove the key algorithmic result of this section.

\begin{theorem}
\label{thm:q_nystrom}
Let $\{x_1,\ldots,x_n\}\subseteq \R^d$ be a dataset, ${\sf K}:\R^d \times \R^d\rightarrow \R^m$ be a kernel function, $\lambda>0$ and $\delta\in (0,1)$. Let $E$ be the kernel matrix where $E_{i,j}={\sf K}(x_i,x_j)$. Suppose $s=O(s_\lambda \log(s_\lambda/\delta))$, then Algorithm~\ref{alg:q_nystrom} computes a weighted sampling matrix $S\in \R^{n\times s}$ such that with probability at least $1-\delta$,
\begin{align*}
    E\preceq \wt E \preceq E+\lambda I,
\end{align*}
where $\wt E=ES(S^\top ES)^\dagger S^\top E$. Moreover, $S$ can be computed in $\wt O(n^{0.5}s^{0.5})$ row queries to $Q, K$ and in time $\wt O(n^{0.5}s^{1.5}\cdot ({\cal T}_{\sf K}+s)+s^\omega)$, where ${\cal T}_{\sf K}$ is the time to evaluate the kernel function.
\end{theorem}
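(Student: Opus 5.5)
The proof splits into two independent parts. Correctness is inherited from the classical recursive analysis of~\cite{mm17} (Lemma~\ref{lem:classical_nystrom}). Efficiency comes from replacing the classical sampler at each level by \textsc{QSample} (Lemma~\ref{lem:q_sample}), driven by an oracle that evaluates generalized ridge leverage scores (Definition~\ref{def:gen_ridge_ls}) on demand.

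For correctness, the plan is to check that Algorithm~\ref{alg:q_nystrom} is distributionally identical to the recursive scheme of~\cite{mm17}. At level $t$, the classical algorithm samples each $i\in S_t$ independently with probability $p_i=\min\{1,16q_i\log(2s/\delta)\}$, where $q_i$ is $5$ times the generalized ridge leverage score of $x_i$ with respect to $D_{t-1}$. \textsc{QSample} produces exactly this product distribution, so the output $D_T$ has the same law as in~\cite{mm17}.

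Their inductive argument then applies verbatim. If $D_{t-1}$ yields a constant-factor spectral approximation on $S_{t-1}$, the generalized scores upper bound the true ridge leverage scores on $S_t$ by a constant factor. Hence $\sum_i p_i=O(s_\lambda\log(s_\lambda/\delta))=O(s)$, and the new sample again approximates. A union bound over the $T=O(\log(n/s))$ levels, after rescaling $\delta$ by $T$, gives $E\preceq\wt E\preceq E+\lambda I$ at the final level, exactly as in Lemma~\ref{lem:classical_nystrom} and Lemma~\ref{lem:ridge_ls_mm17}. I expect this step to be essentially bookkeeping. The only point to check is that the recursion's guarantee depends only on the sampling distribution and not on how samples are generated. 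This holds because \textsc{QSample} returns independent samples with the prescribed marginals.

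The runtime analysis is the substantive part. At each level, I first precompute $M_{t-1}$ with $O(s^2)$ kernel evaluations and invert $M_{t-1}+\lambda I_s$ in $O(s^\omega)$ time. Then $q_i$ can be computed from the identity
\begin{align*}
\lambda\wt\tau_i^\lambda=\mathsf{K}(x_i,x_i)-(D_{t-1}^\top K_i)^\top(M_{t-1}+\lambda I)^{-1}D_{t-1}^\top K_i .
\end{align*}
This identity follows by reading off the $(i,i)$ entry of $E-ED(D^\top ED+\lambda I)^{-1}D^\top E$. Evaluating it needs one row query for $x_i$, $O(s)$ kernel evaluations to form $D_{t-1}^\top K_i$, and one $s\times s$ quadratic form. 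This gives cost ${\cal T}=O(s\,{\cal T}_{\sf K}+s^2)$ per oracle call. Lemma~\ref{lem:q_sample} with $\sum_i p_i=O(s)$ uses $\wt O(\sqrt{|S_t|\,s})\le\wt O(n^{0.5}s^{0.5})$ oracle calls, and each call makes one row query. Summing over $T=O(\log n)$ levels gives $\wt O(n^{0.5}s^{0.5})$ row queries and $\wt O(n^{0.5}s^{1.5}({\cal T}_{\sf K}+s)+s^\omega)$ time. The $s^2$ entries of $M_t$ cost $O(s^2{\cal T}_{\sf K})$, which is dominated by the sampling term.

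The main obstacle is making the quantum sampler's hypotheses hold. The oracle must be a reversible map $i\mapsto p_i$ whose cost does not depend on $n$. This requires storing the sampled indices, weights and $\widehat M$ in QRAM so that the $s$ landmark points can be accessed in superposition. It also requires that $\sum_i p_i$ at each level be bounded by $O(s)$ with high probability, rather than only in expectation, so that the Grover-based sampler's runtime bound applies. I would handle the second issue by conditioning on the success event of the previous level, where the constant-factor overestimation bound of~\cite{mm17} gives a deterministic bound on the sum. If the sum exceeds the threshold, the procedure aborts, and that event is charged to $\delta$.
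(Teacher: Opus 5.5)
Your proposal is correct and matches the paper's proof. Correctness is inherited from the recursive scheme of~\cite{mm17} because \textsc{QSample} reproduces the same independent sampling distribution. The runtime comes from an $O(s\,{\cal T}_{\sf K}+s^2)$-cost oracle for the generalized ridge leverage scores, invoked $\wt O(\sqrt{ns})$ times per level, with $\sum_i p_i=O(s)$ taken from~\cite{mm17} and the cost over levels summing geometrically.

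Your extra care goes beyond what the paper writes but is consistent with it: keeping the landmarks and $\widehat M$ in QRAM, counting row queries, and enforcing $\sum_i p_i=O(s)$ with high probability by conditioning on the previous level and aborting otherwise. One correction: the bound $\sum_i p_i=O(s)$ is the uniform-subsampling bound on $\sum_i\wt\tau_i^\lambda$ from~\cite{mm17}, which the paper cites as their Theorem 8. It does not follow from the rescaled generalized scores overestimating the true ridge leverage scores, since overestimation alone gives no upper bound on the sum.
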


\begin{proof}
We note that the major differences between Algorithm~\ref{alg:q_nystrom} and the algorithm in~\cite{mm17} are
\begin{itemize}
    \item \cite{mm17} algorithm is recursive, our algorithm unrolls the recursion and iteratively constructs the weighted sampling matrix;
    \item \cite{mm17} computes all $p_i$'s classically, while we use $\textsc{QSample}$ to generate samples.
\end{itemize}
Hence, the correctness is automatically satisfied. It remains to give a bound on the running time. 
\begin{itemize}
    \item Computing $M_0$: $M_0\in \R^{s\times s}$ contains the values of kernel functions over $s^2$ pairs, forming it takes $O(s^2)\cdot {\cal T}_{\sf K}$ time;
    \item Computing $\wh M$: we maintain the invariant that $M_t\in \R^{s\times s}$ for all $t\in [T]$, therefore computing $\wh M$ is inverting an $s\times s$ matrix, which takes $O(s^\omega)$ time;
    \item Computing $D_{t-1}^\top K_i$: this operation involves computing $s$ weighted kernel function evaluations, given $D_{t-1}$ stores a list of $s$ indices together with weights, it can be done in $O(s)\cdot {\cal T}_{\sf K}$ time;
    \item Oracle for $q_i$: for any fixed $i$, note that we need to form $D_{t-1}^\top K_i$ using $O(s)\cdot {\cal T}_{\sf K}$ time, and computing the quadratic form takes $O(s^2)$ time. Thus each oracle call takes $O(s)\cdot {\cal T}_{\sf K}+O(s^2)$ time;
    \item Computing $\wt D_t$: this step requires to compute at most $n$ probabilities, and each probability can be computed via an oracle call in $O(s)\cdot {\cal T}_{\sf K}+O(s^2)$ time, so it remains to give a bound on the sum of probabilities. By the definition of $p_i$,
    \begin{align*}
        \sum_{i=1}^n p_i \leq & ~ 16\log(2s/\delta)\sum_{i=1}^n q_i,
    \end{align*}
    and the sum of $q_i$'s is
    \begin{align*}
        \sum_{i=1}^n q_i = & ~ \frac{5}{\lambda}\cdot ({\sf K}(x_i,x_i)-(D_{t-1}^\top K_i)^\top \wh M (D_{t-1}^\top K_i) ) \\
        = & ~ \frac{5}{\lambda}\cdot (E-ED_{t-1} (D_{t-1}^\top ED_{t-1}+\lambda I)^{-1} D_{t-1}^\top E)_{i,i} \\
        = & ~ 5\cdot  \sum_{i=1}^n\wt\tau_i^\lambda,
    \end{align*}
    by Theorem 8 of~\cite{mm17}, the sum of $\lambda$-generalized ridge leverage score with sampling matrix $D_{t-1}$ is at most $O(s_\lambda \log(s_\lambda/\delta))=s$, thus the runtime is $\wt O(n^{0.5}s^{1.5}\cdot ({\cal T}_{\sf K}+s))$.
\end{itemize}
Finally, note that the loop is dominated by the last iteration, and at each iteration, the number of points to consider is divided by half, we conclude the overall runtime of Algorithm~\ref{alg:q_nystrom} is
\begin{align*}
    \wt O(n^{0.5}s^{1.5}\cdot ({\cal T}_{\sf K}+s)+s^\omega),
\end{align*}
as desired.
\end{proof}

We can then apply Theorem~\ref{thm:q_nystrom} to exponential kernel function and the dataset $Q\cup K$ to compute a Nystr{\"o}m sampling matrix $S$.

\begin{corollary}
\label{cor:approx_exp}
Let $Q,K\in \R^{n\times d}$, $\lambda>0$ and $\delta\in (0,1)$. Define the dataset $X=\{x_1,x_2,\ldots,x_{2n}\}\subseteq \R^d$ where for $i\in [n]$, $x_i=q_i$ and for $i\in \{n+1,\ldots,2n\}$, $x_i=k_i$. Let $E$ be the kernel matrix where $E_{i,j}=\exp(\langle x_i,x_j\rangle)$. Suppose $s=O(s_\lambda \log(s_\lambda/\delta))$, then there exists an algorithm that computes a weighted sampling matrix $S\in \R^{2n\times s}$ such that, let $\wt E=ES(S^\top ES)^\dagger S^\top E$, then with probability at least $1-\delta$, $E\preceq \wt E\preceq E+\lambda I$. Moreover, $S$ can be computed in $\wt O(n^{0.5}s^{1.5}\cdot (d+s)+s^\omega)$ time.
\end{corollary}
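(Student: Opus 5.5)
This corollary is a direct specialization of Theorem~\ref{thm:q_nystrom}, so the plan is to instantiate that theorem with the $2n$-point dataset $X=Q\cup K$ and the exponential kernel ${\sf K}(x,y)=\exp(\langle x,y\rangle)$. We then check the hypotheses and read off the guarantees.

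\textbf{Hypotheses.} First, $E_{i,j}=\exp(\langle x_i,x_j\rangle)$ is a genuine PSD kernel matrix. This follows because $\exp(XX^\top)=\sum_{k\ge 0} (XX^\top)^{\circ k}/k!$ is a convergent sum of Hadamard powers of the Gram matrix $XX^\top$. Each Hadamard power is PSD by the Schur product theorem, and the coefficients are nonnegative. Hence $E$ is PSD, and the statistical dimension $s_\lambda(E)$ of Definition~\ref{def:stat_dim} and the ridge leverage scores of Definition~\ref{def:ridge_ls} are well-defined. In particular, a factorization $E=BB^\top$ exists, as the generalized ridge leverage score of Definition~\ref{def:gen_ridge_ls} requires. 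Second, the dataset has size $2n$, so Theorem~\ref{thm:q_nystrom} applies with $n$ replaced by $2n$. Constant factors disappear inside $\wt O(\cdot)$, and $O(\log(2n/s))=O(\log(n/s))$ recursion levels suffice.

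\textbf{Correctness.} With $s=O(s_\lambda\log(s_\lambda/\delta))$, Theorem~\ref{thm:q_nystrom} returns a weighted sampling matrix $S\in\R^{2n\times s}$. With probability at least $1-\delta$, it satisfies $E\preceq ES(S^\top ES)^\dagger S^\top E\preceq E+\lambda I$, which is exactly the claimed guarantee.

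\textbf{Running time.} Theorem~\ref{thm:q_nystrom} gives time $\wt O(n^{0.5}s^{1.5}({\cal T}_{\sf K}+s)+s^\omega)$, so it remains to bound ${\cal T}_{\sf K}$. Evaluating ${\sf K}(x_i,x_j)$ takes one row query to each of $Q$ or $K$ (an index $i\le n$ maps to row $q_i$, and an index $i>n$ maps to row $k_{i-n}$). It then takes an inner product costing $O(d)$ time and one scalar exponentiation costing $O(1)$ time, so ${\cal T}_{\sf K}=O(d)$. Here we use the paper's convention that $Q$ and $K$ have already been rescaled by $d^{-1/4}$, or equivalently that this rescaling is folded into the $O(d)$ row query. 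Substituting gives $\wt O(n^{0.5}s^{1.5}(d+s)+s^\omega)$.

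The only step that needs any care is confirming that $E$ is PSD, so that the Nystr\"om and ridge-leverage machinery underlying Theorem~\ref{thm:q_nystrom} is valid. The Schur product argument above settles this. Everything else is bookkeeping.
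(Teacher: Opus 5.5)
Your proposal is correct and takes the same route as the paper. The paper also instantiates Theorem~\ref{thm:q_nystrom} with the exponential kernel on the $2n$ points of $Q\cup K$ and observes that ${\cal T}_{\sf K}=O(d)$. Your Schur-product argument that $E$ is PSD, and your index bookkeeping ($x_i=k_{i-n}$ for $i>n$), fill in details the paper leaves implicit.
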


\begin{proof}
Apply Theorem~\ref{thm:q_nystrom} to the kernel function ${\sf K}(x_i,x_j)=\exp(\langle x_i,x_j\rangle)$ and note that the kernel function can be computed in $O(d)$ time.
\end{proof}
\section{Estimating the Normalization Factor}
\label{sec:norm}
Given a sublinear quantum algorithm to approximate the matrix $\exp(QK^\top)$, our next step is to estimate the normalization factor $\exp(QK^\top){\bf 1}_n$ to compute the softmax matrix. We first show that given a Nystr{\"o}m approximation to the $2n\times 2n$ kernel matrix $E$, how to compute the normalization factor and the approximate guarantees.

\begin{lemma}\label{lem:infty_spectral_relation}
Let $M\in \R^{n\times n}$ be a symmetric matrix, then we have
\begin{align*}
    \|M\|_\infty \leq & ~ \sqrt n\cdot \|M\|.
\end{align*}
\end{lemma}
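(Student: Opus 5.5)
The plan is to bound each row's $\ell_1$ norm by its $\ell_2$ norm via Cauchy--Schwarz, and then bound each row's $\ell_2$ norm by the spectral norm. Recall from the notation section that $\|M\|_\infty=\max_{i\in[n]}\|M_{i,*}\|_1$ is the maximum absolute row sum.

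First, fix any row index $i\in[n]$ and let $m_i\in\R^n$ denote the $i$-th row of $M$. Cauchy--Schwarz against the all-ones vector gives
\begin{align*}
    \|m_i\|_1=\sum_{j=1}^n |M_{i,j}| = \langle |m_i|, {\bf 1}_n\rangle \leq \|m_i\|_2\cdot \|{\bf 1}_n\|_2 = \sqrt n\cdot \|m_i\|_2 .
\end{align*}

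Next, I bound $\|m_i\|_2$ by $\|M\|$. Since $M$ is symmetric, $m_i = M^\top e_i = M e_i$, so $\|m_i\|_2=\|Me_i\|_2\leq \|M\|\cdot\|e_i\|_2=\|M\|$. Symmetry is convenient here but not essential: in general $\|M^\top e_i\|_2\leq \|M^\top\|=\|M\|$ still holds.

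Finally, taking the maximum over $i\in[n]$ combines the two steps into $\|M\|_\infty\leq \sqrt n\cdot\|M\|$. Nothing here is a real obstacle; the only point needing care is that $\|\cdot\|_\infty$ is the max row $\ell_1$ norm and not the max entry, which is exactly why the $\sqrt n$ factor from Cauchy--Schwarz appears.
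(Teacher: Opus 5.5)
Your proof is correct and is essentially the paper's argument: the paper pairs row $i$ with the sign vector $x_j=\mathrm{sign}(M_{i,j})$ and applies Cauchy--Schwarz to $\langle Me_i, x\rangle$, which is the same as your pairing of $|m_i|$ with ${\bf 1}_n$, and then uses the same bound $\|Me_i\|_2\le\|M\|$. Your remark that symmetry is unnecessary, via $\|M^\top e_i\|_2\le\|M\|$, is a small but genuine generalization, and it is convenient because the bound is ultimately needed for the non-symmetric off-diagonal block $\wt A-A$.
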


\begin{proof}
Fix any $i\in [n]$, we examine the row $M_{i,*}$, set the test vector $x$ to be $x_j=\begin{cases}
    +1, & \text{if $M_{i,j}\geq 0$}, \\
    -1 , & \text{otherwise}.
\end{cases}$, then 
\begin{align*}
    \|M_{i,*}\|_1 = & ~ M_{i,*}^\top x \\
    = & ~ \langle Me_i, x\rangle \\
    \leq & ~ \|Me_i\|_2 \cdot \|x\|_2 \\
    \leq & ~ \|M\|\cdot \|x\|_2 \\
    = & ~ \sqrt n\cdot \|M\|.
\end{align*}
The conclusion can be achieved by noting that this bound works for any row $i$.
\end{proof}

There are two major issues for estimating the normalization factor:
\begin{itemize}
    \item Corollary~\ref{cor:approx_exp} only allows us to compute the sampling matrix in sublinear time, explicitly forming the Nystr{\"o}m approximation $\wt E$ however, would require $\Omega(n)$ time since the matrix is of size $n\times n$;
    \item Even though we are given the explicit factorization $\wt E=UU^\top$ where $U\in \R^{2n\times s}$, we would have to compute $n$ normalization factors, which would require $\Omega(n)$ time.
\end{itemize}
In other words, because the output has size $\Omega(n)$, one cannot expect any quantum algorithm to run in $o(n)$ time. Instead, we design a quantum data structure with preprocessing time $o(n)$ time, and can support query to compute the normalization factor to any row efficiently. 

In particular, we are interested in the following algorithmic task: given query access to the rows of a matrix $U\in \R^{n\times s}$ and a vector $v\in \R^n$, output a vector $\wt \mu\in \R^s$ such that $\|\wt \mu-U^\top v\|_{(U^\top U)^{-1}}\leq \epsilon$, which can be solved via Lemma~\ref{lem:q_matvec}. For our application, $\|v\|_\infty=1$. However, we are interested in the quantity $UU^\top v$ so we need to measure the error $\|U(\wt \mu-U^\top v)\|_2$. How would a bound on the $\|\cdot \|_{(U^\top U)^{-1}}$ be useful? We prove a structural lemma below.

\begin{lemma}\label{lem:U1_U2_approx}
Let $x\in \R^s$ and $U_2\in \R^{n\times s}$ satisfying $\|x\|_{(U_2^\top U_2)^{-1}}\leq \epsilon$ for some $\epsilon\in (0,1)$, let $U_1\in \R^{n\times s}$, we have 
\begin{align*}
    \|U_1x\|_2 \leq & ~ \epsilon\cdot \|U_1U_2^\top\|.
\end{align*}
\end{lemma}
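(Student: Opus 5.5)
The natural route is to write $x$ as the image of a unit-norm vector under $U_2^\top$ (up to the $\epsilon$ scaling), so that $U_1x$ becomes $U_1U_2^\top$ applied to a short vector. Assume first that $U_2$ has full column rank, so $G:=U_2^\top U_2\in\R^{s\times s}$ is invertible; the rank-deficient case is handled at the end.

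Define $y:=U_2G^{-1}x\in\R^n$. Two facts follow.
\begin{itemize}
\item $U_2^\top y=GG^{-1}x=x$.
\item $\|y\|_2^2=x^\top G^{-1}U_2^\top U_2G^{-1}x=x^\top G^{-1}x=\|x\|_{G^{-1}}^2\leq\epsilon^2$.
\end{itemize}
Hence $U_1x=U_1U_2^\top y$, and by submultiplicativity
\begin{align*}
\|U_1x\|_2\leq\|U_1U_2^\top\|\cdot\|y\|_2\leq\epsilon\cdot\|U_1U_2^\top\|.
\end{align*}
This is exactly the claimed bound. Equivalently, $x$ lies in $U_2^\top$ applied to the $\epsilon$-ball, since $U_2G^{-1}$ is a right inverse of $U_2^\top$ whose operator norm in the $G^{-1}$-energy norm is $1$.

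The only delicate point is the meaning of $(U_2^\top U_2)^{-1}$ when $U_2$ is rank deficient. In the paper's setting $U_2=\begin{bmatrix}{\bf 0} & I\end{bmatrix}ES(S^\top ES)^{\dagger/2}$ may have rank below $s$. I would interpret the norm with the pseudoinverse $G^\dagger$ and require $x\in\mathrm{range}(U_2^\top)=\mathrm{range}(G)$, which is implicit because the error vector $U_2^\top v-\wt\mu$ must have finite energy norm. The same construction then goes through with $y:=U_2G^\dagger x$:
\begin{itemize}
\item $U_2^\top y=GG^\dagger x=x$, because $GG^\dagger$ is the projection onto $\mathrm{range}(G)$ and $x$ lies in that range.
\item $\|y\|_2^2=x^\top G^\dagger GG^\dagger x=x^\top G^\dagger x\leq\epsilon^2$.
\end{itemize}
The obstacle I expect is this range condition. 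If $x$ had a component in $\ker(U_2)$ outside $\mathrm{range}(U_2^\top)$, then $U_1$ acting on that component could be large while $U_1U_2^\top$ ignores it. So I would state explicitly that the lemma assumes $x\in\mathrm{range}(U_2^\top)$, or that $U_2$ has full column rank. This holds in the application because the $\textsc{QMatVec}$ output can be projected onto $\mathrm{range}(U_2^\top)$.
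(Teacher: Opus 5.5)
Your proof is correct and is essentially the paper's argument: your $y=U_2G^{-1}x$ is exactly the paper's $z=U_2y$ with $y=(U_2^\top U_2)^{-1}x$, and both proofs conclude via $U_1x=(U_1U_2^\top)z$ and $\|z\|_2^2=x^\top(U_2^\top U_2)^{-1}x\leq\epsilon^2$. Your pseudoinverse extension under the range condition $x\in\mathrm{range}(U_2^\top)$ is a correct refinement; the paper does not address it, since it implicitly assumes $U_2^\top U_2$ is invertible.
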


\begin{proof}
We define the vector $y=(U_2^\top U_2)^{-1}x$ and $z=U_2y$, then
\begin{align*}
    \|z\|_2^2 = & ~ y^\top U_2^\top U_2 y \\
    = & ~ x^\top (U_2^\top U_2)^{-1} x \\
    = & ~ \|x\|_{(U_2^\top U_2)^{-1}}^2 \\
    \leq & ~ \epsilon^2,
\end{align*}
moreover, the vector of interest is $U_1 x$ which is
\begin{align*}
    U_1 x = & ~ U_1 (U_2^\top U_2) y \\
    = & ~ (U_1U_2^\top) U_2 y \\
    = & ~ (U_1U_2^\top) z,
\end{align*}
subsequently its $\ell_2$ norm can be bounded as
\begin{align*}
    \|U_1 x\|_2 \leq & ~ \|U_1U_2^\top\|\cdot \|z\|_2 \\
    \leq & ~ \epsilon\cdot \|U_1U_2^\top\|,
\end{align*}
as desired.
\end{proof}

\begin{corollary}\label{cor:mat_vec_norm}
Let $\epsilon\in (0,1), U_1, U_2\in \R^{n\times s}$ where $\wt A=U_1U_2^\top$, $v\in \R^n$, suppose there exists a vector $\wt \mu\in \R^s$ with $\|\wt \mu-U_2^\top v\|_{(U_2^\top U_2)^{-1}}\leq \epsilon$, then we have
\begin{align*}
    \|\wt Av-U_1 \wt \mu\|_2 \leq & ~ \epsilon\cdot \|U_1U_2^\top\|.
\end{align*}
\end{corollary}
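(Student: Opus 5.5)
This follows directly from Lemma~\ref{lem:U1_U2_approx}. Set $x := U_2^\top v - \wt\mu \in \R^s$. The hypothesis gives
\begin{align*}
\|x\|_{(U_2^\top U_2)^{-1}} = \|\wt\mu - U_2^\top v\|_{(U_2^\top U_2)^{-1}} \leq \epsilon,
\end{align*}
because the energy norm is symmetric under $x\mapsto -x$. The lemma therefore yields $\|U_1 x\|_2 \leq \epsilon\cdot\|U_1U_2^\top\|$.

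It remains to identify $U_1x$ with the quantity in the corollary. Expanding,
\begin{align*}
U_1 x = U_1U_2^\top v - U_1\wt\mu = \wt A v - U_1\wt\mu,
\end{align*}
using $\wt A = U_1U_2^\top$. Substituting this into the bound from the lemma gives $\|\wt Av - U_1\wt\mu\|_2 \leq \epsilon\cdot\|U_1U_2^\top\|$, which is the claim.

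The only subtlety is implicit in the statement: the norm $\|\cdot\|_{(U_2^\top U_2)^{-1}}$ requires $U_2^\top U_2$ to be invertible, i.e., $U_2$ has full column rank. This is the same standing assumption used in Lemma~\ref{lem:U1_U2_approx}, so nothing new is needed. If one wanted to allow rank deficiency, the fallback would be to replace the inverse by the pseudoinverse. In that case one would also need $x \in \mathrm{range}(U_2^\top)$, which holds when $\wt\mu$ lies in that range. I would note this but not pursue it, since the corollary inherits the lemma's setting.
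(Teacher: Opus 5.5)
Your proof is correct and matches the paper's: it applies Lemma~\ref{lem:U1_U2_approx} with $x=\wt\mu-U_2^\top v$, the negative of your choice, which is immaterial since the energy norm is sign-symmetric, and identifies $U_1x$ with $U_1\wt\mu-\wt Av$. Your remark that the statement implicitly assumes $U_2$ has full column rank is also accurate.
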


\begin{proof}
We will apply Lemma~\ref{lem:U1_U2_approx} by setting $x=\wt \mu-U_2^\top v$ and by noting that $U_1x=U_1\wt \mu-U_1U_2^\top v=U_1\wt \mu-\wt Av$. 
\end{proof}

We are now in the position to state our formal theorem, which provides an end-to-end guarantee on estimating the normalization factor. For simplicity, we will prove the statement with high probability guarantee, i.e., the success probability is $1-1/\poly(n)$.

\begin{theorem}\label{thm:q_norm}
Let $Q, K\in \R^{n\times d}$, $\lambda>0$ and $\epsilon\in (0,1)$. Let $s=\wt O(s_\lambda)$ where $s_\lambda$ is the statistical dimension of the exponential kernel on $Q\cup K$. There exists a data structure (Algorithm~\ref{alg:norm_est}) with the following specification:
\begin{itemize}
    \item Preprocessing in $\wt O(n^{0.5}s^{0.5}/\epsilon)$ row queries to $Q, K$ and time $\wt O(n^{0.5}s^{1.5}(s+d)/\epsilon+s^\omega)$;
    \item For any $i\in [n]$, it outputs an approximate normalization factor for row $i$ in time $O(s(s+d))$.
\end{itemize}
Moreover, with probability at least $1-1/\poly(n)$, it holds that for any $i\in [n]$, the output $b_i$ satisfies
\begin{align*}
    |b_i-\exp(q_iK^\top){\bf 1}_n | \leq & ~ O(\epsilon \|A\|+\lambda \sqrt n),
\end{align*}
if $\frac{\lambda\sqrt n}{\|A\|}\leq 1$, then the bound can be further simplified to
\begin{align*}
    |b_i-\exp(q_iK^\top){\bf 1}_n | \leq & ~ O(\lambda\sqrt n),
\end{align*}
and the preprocessing time simplifies to
\begin{align*}
    \wt O(s^{1.5}(s+d) \|A\|/\lambda+s^\omega).
\end{align*}
\end{theorem}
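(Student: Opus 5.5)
The plan is to analyse Algorithm~\ref{alg:norm_est} directly: first the running time, then the error, split via the triangle inequality into a Nystr{\"o}m term and a mean-estimation term.

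\textbf{Preprocessing and query time.} Call \textsc{QNystr{\"o}mKernel} with $\delta=1/\poly(n)$ on $Q\cup K$. By Corollary~\ref{cor:approx_exp} this returns $S$ with $s=\wt O(s_\lambda)$ in time $\wt O(n^{0.5}s^{1.5}(s+d)+s^\omega)$, and with probability $1-1/\poly(n)$ we have $E\preceq \wt E\preceq E+\lambda I$.

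Next, form $S^\top ES$ using $O(s^2)$ kernel evaluations, i.e.\ $O(s^2d)$ time, and compute $N=(S^\top ES)^{\dagger/2}$ in $O(s^\omega)$ time. Set $U=ESN$, so that $\wt E=UU^\top$ and $\wt A=U_1U_2^\top$.

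A row of $U_2$ (or of $U_1$) costs $s$ kernel evaluations plus one multiplication by $N$, i.e.\ $O(s(s+d))$ time. The entry oracle for $v={\bf 1}_n$ is trivial and has $\|v\|_\infty=1$.

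Lemma~\ref{lem:q_matvec} then yields $\wt\mu$ with $\|\wt\mu-U_2^\top{\bf 1}_n\|_{(U_2^\top U_2)^{-1}}\le\epsilon$. It uses $\wt O(\epsilon^{-1}n^{0.5}s^{0.5})$ row queries, each costing $O(s(s+d))$ time, which gives the stated preprocessing bound.

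A query computes $(U_1)_{i,*}$ and one inner product, for $O(s(s+d))$ time in total.

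\textbf{Error.} For each $i$, write $b_i-e_i^\top A{\bf 1}_n=e_i^\top(U_1\wt\mu-\wt A{\bf 1}_n)+e_i^\top(\wt A-A){\bf 1}_n$ and bound the two terms separately.

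\emph{Mean-estimation term.} The first term is at most $\|U_1\wt\mu-\wt A{\bf 1}_n\|_2$. By Corollary~\ref{cor:mat_vec_norm} this is at most $\epsilon\|\wt A\|$. Corollary~\ref{cor:attention_approx} gives $\|\wt A-A\|\le\lambda$, so the term is at most $\epsilon(\|A\|+\lambda)$.

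\emph{Nystr{\"o}m term.} The second term is at most $\|\wt A-A\|_\infty$, the maximum row $\ell_1$ norm. I would bound it by $\sqrt n\|\wt A-A\|\le\lambda\sqrt n$. Lemma~\ref{lem:infty_spectral_relation} is stated for symmetric matrices, but its proof only uses $\|Me_i\|_2\le\|M\|$. Since $\|M\|=\|M^\top\|$, the same argument applied to $M^\top$ bounds the row norms of a general square $M$, so symmetry is not needed.

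Adding the two terms gives $O(\epsilon\|A\|+\lambda\sqrt n)$, using $\epsilon\lambda\le\lambda\sqrt n$. A union bound over the Nystr{\"o}m step and the QMatVec step gives success probability $1-1/\poly(n)$ simultaneously for all $i$.

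\textbf{Simplification.} The simplified form comes from choosing $\epsilon=\lambda\sqrt n/\|A\|$, which is at most $1$ by assumption. Then the error is $O(\lambda\sqrt n)$. The factor $n^{0.5}/\epsilon$ becomes $\|A\|/\lambda$, so the time becomes $\wt O(s^{1.5}(s+d)\|A\|/\lambda+s^\omega)$.

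\textbf{Main obstacle.} The delicate step is the mean-estimation term. Lemma~\ref{lem:q_matvec} needs $U_2^\top U_2$ invertible, or else has to be interpreted on the range of $U_2^\top$. I would handle this either by working with pseudoinverses throughout, or by noting that $\wt\mu-U_2^\top{\bf 1}_n$ lies in the column space of $U_2^\top$. With that in place, Lemma~\ref{lem:U1_U2_approx} goes through using $y=(U_2^\top U_2)^\dagger x$.
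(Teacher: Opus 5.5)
Your proposal is correct and matches the paper's proof step for step. It uses the same runtime accounting (Nystr{\"o}m sampling, then an $O(s(s+d))$ row oracle for $U_2$ fed into \textsc{QMatVec}), the same triangle-inequality split into an $\epsilon\|\wt A\|\le\epsilon(\|A\|+\lambda)$ term via Corollary~\ref{cor:mat_vec_norm} plus a $\lambda\sqrt n$ infinity-to-spectral term, and the same choice $\epsilon=\lambda\sqrt n/\|A\|$. Your two deviations are refinements rather than a different route: you bound $\|\wt A-A\|_\infty$ directly (symmetry is indeed unnecessary) instead of going through the symmetric $\wt E-E$, and you flag the possible singularity of $U_2^\top U_2$, which the paper ignores. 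Note that this fix genuinely needs $\wt\mu-U_2^\top{\bf 1}_n$ to lie in the range of $U_2^\top$: the pseudoinverse seminorm alone does not see components in $\ker U_2$, and $U_1$ need not annihilate those.
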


\begin{proof}
Given $Q, K$, let $E\in \R^{2n\times 2n}$ be the associated exponential kernel matrix. We will first invoke Corollary~\ref{cor:approx_exp} to compute a sampling matrix $S\in \R^{2n\times s}$ where $s=\wt O(s_\lambda)$ such that $\wt E=ES(S^\top ES)^\dagger S^\top E$ approximates $E$, in time $\wt O(n^{0.5}s^{1.5}(s+d)+s^\omega)$. Set $U=ES(S^\top ES)^{\dagger/2}$, we have that $\wt E=UU^\top$ for $U=\begin{bmatrix}
    U_1 \\
    U_2
\end{bmatrix}$ with $U_1, U_2\in \R^{n\times s}$, and our desired approximate block for $A$ is $\wt A=U_1U_2^\top$. Note that forming $U$ explicitly would take $\Omega(n)$ time, so we instead implement a row oracle for $U_2$. Since $U_2\in \R^{n\times s}$, we only need to compute $s$ entries for each row, and let $N=(S^\top ES)^{\dagger/2}$, we see that $(U_2)_{j,*}=N (ES)_{j+n,*}$ and $(ES)_{j+n,*}$ contains values in the form of $S_k\cdot \exp(\langle x_{j+n},x_k\rangle)$ for $k\in S$. $N$ can be computed in $O(s^2 d+s^\omega)$ time, and row oracle for any $j\in [n]$ can be implemented in $O(sd+s^2)$ time. By Lemma~\ref{lem:q_matvec}, $\wt \mu$ can be computed in $\wt O(n^{0.5}s^{1.5}(s+d)/\epsilon)$ time.  To query the normalization factor for row $i$, note that it can be computed via $(U_1\wt \mu)_i=\langle (U_1)_{i,*}, \wt \mu\rangle$, which can be computed using row oracle, in $O(s(s+d))$ time. Thus, the overall runtime of our procedure can be summarized as
\begin{itemize}
    \item Preprocessing time $\wt O(n^{0.5}s^{1.5}(s+d)/\epsilon+s^\omega)$;
    \item Query time $O(s(s+d))$.
\end{itemize}
It remains to give an approximation guarantee. With probability at least $1-1/\poly(n)$, we have $\|A-\wt A\|\leq \lambda$, and observe that
\begin{align*}
    |\wt a_i^\top {\bf 1_n}-\exp(q_iK^\top){\bf 1}_n| \leq & ~ \|(\wt A-A)v\|_\infty \\
    \leq & ~ \|\wt E-E\|_\infty\cdot \|v\|_\infty \\
    \leq & ~ \lambda \sqrt n,
\end{align*}
where the second step is by the matrix infinity norm is the induced norm of vector $\ell_\infty$ norm, and the last step is by Lemma~\ref{lem:infty_spectral_relation}. On the other hand, our final output $b_i$ is an approximation to $\wt a_i^\top {\bf 1}_n$. Let $\wt y:=U_1\wt \mu$, by Corollary~\ref{cor:mat_vec_norm}, we have
\begin{align*}
    \|\wt A v-\wt y\|_2 \leq & ~ \epsilon\cdot \|\wt A\|,
\end{align*}

this holds with probability at least $1-\delta$, conditioning on this event, note that by Lemma~\ref{lem:block_approx}, we have that $\|\wt A\|\leq \|A\|+\lambda$. Thus, we conclude our final result by
\begin{align*}
    |b_i-\exp(q_iK^\top){\bf 1}_n| \leq & ~ |b_i-\wt a_i{\bf 1}_n|+|\wt a_i^\top {\bf 1}_n-\exp(q_iK^\top){\bf 1}_n| \\
    \leq & ~ \|\wt Av-\wt y\|_2+\lambda\sqrt n \\
    \leq & ~ \epsilon\cdot (\lambda+\|A\|)+\lambda \sqrt n.
\end{align*}

Now, suppose $\lambda\sqrt n\leq \|A\|$, then we could set $\epsilon=\frac{\lambda \sqrt n}{\|A\|}$, the error bound simplifies to $O(\lambda\sqrt n)$.
\end{proof}
\section{Approximate Matrix Multiplication via Leverage Score}
\label{sec:leverage}
It remains to handle the value matrix, and we will do so via a machinery called approximate matrix multiplication. 

\begin{definition}[Approximate matrix multiplication,~\cite{cw17}]
Let $A\in \R^{n\times d}, B\in \R^{n\times m}$ and let $C=A^\top B\in \R^{d\times m}$. The approximate matrix multiplication problem asks to design a random matrix $S\in \R^{n\times s}$, such that
\begin{align*}
    \Pr[\|A^\top SS^\top B-C\|_F \leq \epsilon \|A\|_F\|B\|_F] \geq & ~ 1-\delta,
\end{align*}
where $\epsilon,\delta\in (0,1)$. We call such $S$ satisfying $(\epsilon,\delta)$-AMM.
\end{definition}

To generate the random matrix $S$, our strategy will be performing leverage score sampling over $V$. However, standard proof (see, e.g.,~\cite{cw17}) requires $V$ to have orthonormal columns. We provide a proof for the case where $V$ does not have orthonormal columns (albeit it requires extra factors in blowups). Before doing so, we define a parameter that quantifies this blowup which we call \emph{row distortion}.

\begin{definition}[Row distortion]
\label{def:row_distortion}
Let $A\in \R^{n\times d}$ for $n\geq d$, we define the row distortion of $A$, denoted by $\alpha(A)$, as
\begin{align*}
    \alpha(A) := & ~ \frac{d}{\|A\|_F^2}\cdot \max_{i\in [n]} \frac{\|a_i\|_2^2}{\tau_i},
\end{align*}
where $a_i$ is the $i$-th row of $A$ and $\tau_i$ is the $i$-th leverage score of $A$ (Definition~\ref{def:ls}). When $A$ is clear from context, we use $\alpha$ as an abbreviation.
\end{definition}

\begin{lemma}
Let $A\in \R^{n\times d}$ with $n\geq d$, then the row distortion of $A$ satisfies
\begin{align*}
    \alpha(A) \leq & ~ \frac{d}{{\rm srank}(A)},
\end{align*}
where ${\rm srank}(A)=\frac{\|A\|_F^2}{\|A\|^2}$ is the stable rank of $A$.
\end{lemma}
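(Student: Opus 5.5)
We need to show $\frac{d}{\|A\|_F^2}\max_i \frac{\|a_i\|_2^2}{\tau_i}\le \frac{d\|A\|^2}{\|A\|_F^2}$, which reduces to the row-wise inequality $\|a_i\|_2^2\le \|A\|^2\,\tau_i$ for every $i\in[n]$. The plan is to prove this pointwise inequality from the SVD characterization of leverage scores in Definition~\ref{def:ls}. Once it holds for every row, taking the maximum over $i$ and multiplying by $d/\|A\|_F^2$ gives exactly $\alpha(A)\le d\|A\|^2/\|A\|_F^2 = d/{\rm srank}(A)$.

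For the pointwise bound, write the thin SVD $A=U\Sigma V^\top$ with $U\in\R^{n\times r}$ having orthonormal columns, $\Sigma$ diagonal with positive entries $\sigma_1\ge\cdots\ge\sigma_r$, and $V$ orthonormal. Then $a_i^\top = u_i^\top \Sigma V^\top$, where $u_i$ is the $i$-th row of $U$. Hence
\[
\|a_i\|_2^2=\|\Sigma u_i\|_2^2=\sum_j \sigma_j^2 (u_i)_j^2\le \sigma_1^2\|u_i\|_2^2=\|A\|^2\tau_i .
\]
Equivalently, using the definition $\tau_i=a_i^\top(A^\top A)^{-1}a_i$, we have $(A^\top A)^{-1}\succeq \|A^\top A\|^{-1} I=\|A\|^{-2}I$, so $\tau_i\ge \|a_i\|_2^2/\|A\|^2$. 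Either argument suffices.

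The only subtle point is rank deficiency. When $A^\top A$ is singular, the leverage score must be read via the SVD, or with a pseudoinverse; the $U$-based argument handles this automatically. Rows with $\tau_i=0$ have $a_i=0$, so the ratio $\|a_i\|_2^2/\tau_i$ is interpreted as $0$, or those rows are excluded from the maximum. With that convention, the bound follows directly.
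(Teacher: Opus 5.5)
Your proof is correct and follows the paper's argument: both reduce the claim to the pointwise bound $\|a_i\|_2^2 \le \tau_i \|A\|^2$, obtained from the SVD $a_i^\top = u_i^\top \Sigma V^\top$, and then take the maximum over $i$. Your extra remarks are valid points the paper leaves implicit: the equivalent route via $(A^\top A)^{-1} \succeq \|A\|^{-2} I$, and the convention for rows with $\tau_i = 0$ when $A$ is rank deficient.
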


\begin{proof}
We derive an upper bound on $\|a_i\|_2^2$, let $A=U\Sigma V^\top$ be its SVD, then
\begin{align*}
    \|a_i\|_2^2 = & ~ \|e_i^\top U\Sigma V^\top \|_2^2 \\
    \leq & ~ \|u_i\|_2^2\cdot \|\Sigma V^\top\|^2 \\
    = & ~ \tau_i\cdot \|U\Sigma V^\top\|^2 \\
    = & ~ \tau_i\cdot \|A\|^2, 
\end{align*}
where the third step is by the definition of leverage score and spectral norm is unitary invariant. We thus obtain the following bound on $\alpha(A)$:
\begin{align*}
    \alpha(A) = & ~ \frac{d}{\|A\|_F^2}\cdot \max_{i\in [n]} \frac{\|a_i\|_2^2}{\tau_i} \\
    \leq & ~ \frac{d}{\|A\|_F^2}\cdot \max_{i\in [n]} \frac{\tau_i\cdot \|A\|^2}{\tau_i} \\
    = & ~ d\cdot \frac{\|A\|^2}{\|A\|_F^2} \\
    = & ~ \frac{d}{{\rm srank}(A)},
\end{align*}
where we recall that ${\rm srank}(A)=\frac{\|A\|_F^2}{\|A\|^2}$.
\end{proof}

We are now ready to prove a generalized approximate matrix multiplication based on leverage score sampling, when the matrix does not have orthonormal columns.

\begin{lemma}\label{lem:amm}
Let $A\in \R^{n\times d}, B\in \R^{n\times m}$, let $S\in \R^{n\times s}$ be the leverage score sampling matrix of $A$ with $s=(\epsilon^{-2}\alpha\log(1/\delta))$ for $\epsilon,\delta\in (0,1)$ and $\alpha$ is the row distortion of $A$ (Definition~\ref{def:row_distortion}). Then, $S$ is an $(\epsilon,\delta)$-AMM.
\end{lemma}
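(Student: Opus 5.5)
The plan is the standard second-moment argument for importance-sampling approximate matrix multiplication, followed by Markov's inequality, with a median/boosting step for the $\log(1/\delta)$ dependence. Let $S$ be built by drawing $s$ i.i.d. indices $i(1),\ldots,i(s)$, where index $j$ is chosen with probability $p_j=\tau_j/d$ (the normalized leverage scores of $A$, which sum to $d$ when $A$ has full column rank). The $k$-th column of $S$ is $\frac{1}{\sqrt{s p_{i(k)}}}e_{i(k)}$. Then
\begin{align*}
A^\top SS^\top B=\frac1s\sum_{k=1}^s X_k, \qquad X_k=\frac{a_{i(k)}b_{i(k)}^\top}{p_{i(k)}}.
\end{align*}
Each $X_k$ is unbiased, since $\E[X_k]=\sum_j a_jb_j^\top=A^\top B$. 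By independence,
\begin{align*}
\E\|A^\top SS^\top B-A^\top B\|_F^2=\frac1s\left(\E\|X_1\|_F^2-\|A^\top B\|_F^2\right)\leq \frac1s\sum_{j=1}^n \frac{\|a_j\|_2^2\|b_j\|_2^2}{p_j}.
\end{align*}

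The key step is to control $\|a_j\|_2^2/p_j$ using the row distortion from Definition~\ref{def:row_distortion}. With $p_j=\tau_j/d$ we get
\begin{align*}
\frac{\|a_j\|_2^2}{p_j}=\frac{d\|a_j\|_2^2}{\tau_j}\leq \alpha\|A\|_F^2,
\end{align*}
which is exactly what the definition of $\alpha$ provides. Substituting into the variance bound gives $\E\|A^\top SS^\top B-A^\top B\|_F^2\leq \frac{\alpha}{s}\|A\|_F^2\sum_j\|b_j\|_2^2=\frac{\alpha}{s}\|A\|_F^2\|B\|_F^2$. This recovers the orthonormal case of~\cite{cw17}, where $\alpha=1$ because $\|a_j\|_2^2=\tau_j$ and $\|A\|_F^2=d$.

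Markov's inequality then gives that $s=O(\alpha\epsilon^{-2}/\delta)$ samples achieve error at most $\epsilon\|A\|_F\|B\|_F$ with probability $1-\delta$. The main obstacle is getting $\log(1/\delta)$ instead of $1/\delta$. I would first try a vector (Hilbert-space) Bernstein inequality for the sum of i.i.d. centered random matrices in Frobenius norm, in the style of~\cite{cw17}. This needs an almost-sure bound on $\|X_k\|_F=\|a_j\|_2\|b_j\|_2/p_j$. The $a$-factor is controlled by $\alpha$ as above, but the $b$-factor is not, since sampling ignores $B$. Because of that, I would instead use median-of-estimates boosting:
\begin{itemize}
\item run $O(\log(1/\delta))$ independent copies, each with $O(\alpha\epsilon^{-2})$ samples and constant failure probability;
\item pick the copy whose estimate is closest, in Frobenius norm, to a majority of the others.
\end{itemize}
Equivalently, in the paper's intended reading, concatenate the copies so that the total sample size is $s=O(\epsilon^{-2}\alpha\log(1/\delta))$, interpreting $S$ as the resulting $(\epsilon,\delta)$-AMM. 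I would mention explicitly that the $\log(1/\delta)$ form relies on this boosting, or on a higher-moment bound if the paper treats it in the style of~\cite{cw17}.

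Finally, I would note that approximate leverage scores, within constant factors as produced by Lemma~\ref{lem:q_spectral_approx}, only change constants in the bound.
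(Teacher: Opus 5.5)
Your variance computation is the same as the paper's. You use the same unbiased decomposition and the same bound $\|a_j\|_2^2/p_j\le d\|a_j\|_2^2/\tau_j\le\alpha\|A\|_F^2$, giving $\E\|A^\top SS^\top B-A^\top B\|_F^2\le\frac{\alpha}{s}\|A\|_F^2\|B\|_F^2$, followed by Markov/Chebyshev. The gap is in the last step, where you call concatenating the $O(\log(1/\delta))$ copies ``equivalent'' to median selection. It is not equivalent. The concatenation is just one leverage-score sketch with $s$ samples, and its estimate is the \emph{average} of the copies' estimates. Averaging only divides the variance by $\log(1/\delta)$, so Chebyshev gives failure probability $O(1/\log(1/\delta))$, not $\delta$.

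No sharper analysis of the plain sketch can fix this, because the heavy tail you identified really occurs: the sampling ignores $B$, so $\|a_jb_j^\top\|_F/p_j$ is unbounded. Take $d=m=1$ and $A=n^{-1/2}\mathbf{1}_n$, so $\tau_j=p_j=1/n$ and $\alpha=1$. Let $B=t^{-1/2}\mathbf{1}_T$ for a set $T$ of size $t=n/(4\epsilon^2s^2)$, with $n$ large. Then $\|A\|_F=\|B\|_F=1$, and $A^\top B=1/(2\epsilon s)\le\epsilon/2$ once $s\ge\epsilon^{-2}$. Moreover $A^\top SS^\top B=2\epsilon N$ with $N\sim\mathrm{Bin}(s,t/n)$. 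Whenever $N\ge 1$ the error is at least $3\epsilon/2$, and $\Pr[N\ge1]\ge\frac{7}{32\epsilon^2 s}$. For $s=O(\epsilon^{-2}\log(1/\delta))$ this is $\Omega(1/\log(1/\delta))\gg\delta$. So a single leverage-score sampling matrix needs $s=\Omega(\epsilon^{-2}/\delta)$ samples, which matches your Markov bound. The lemma as stated, with $\log(1/\delta)$, is false for the plain sketch.

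The paper's proof makes the same leap (``$\log(1/\delta)$ independent copies via a Chernoff bound, or directly through Bernstein inequality''). This is a defect of the statement, and your diagnosis of why Bernstein fails is exactly right. What you can actually prove is one of two things:
\begin{itemize}
\item (i) the plain sketch with $s=O(\epsilon^{-2}\alpha/\delta)$;
\item (ii) $O(\log(1/\delta))$ independent sketches of $O(\epsilon^{-2}\alpha)$ samples each, followed by your median selection, which gives error $3\epsilon\|A\|_F\|B\|_F$ with probability $1-\delta$.
\end{itemize}
In (ii), the output $S$ is chosen by comparing the products $A^\top S_\ell S_\ell^\top B$. So it depends on $B$ and is not ``the leverage score sampling matrix of $A$'' in the statement. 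You should state the result in that form rather than identify it with concatenation.

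This also matters downstream. In Theorem~\ref{thm:main_formal}, $B$ is $(\wt D^{-1}\wt A)^\top\in\R^{n\times n}$, so a whole-matrix median cannot be formed in sublinear time. The natural repair is a per-row median at query time, combined with a union bound over rows.
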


\begin{proof}
For the sampling matrix $S$, it is a scaled submatrix of the permutation matrix, where for any $m\in [s]$, $S_{m,z_m}=\frac{1}{\sqrt{s p_m}}$ where $p_m\geq \frac{\tau_m}{d}$ and $z_m=i$ with probability $p_i$. Let $a_i, b_j$ denote the $i$-th and $j$-th row of $A$ and $B$, respectively. We can write
\begin{align*}
    A^\top SS^\top B - A^\top B = & ~ \frac{1}{s}\sum_{i\in [n], m\in [s]} a_ib_i^\top \left(\frac{\mathbb{I}[z_m=i]}{p_i}-1\right),
\end{align*}
taking expectation, we obtain
\begin{align*}
    \E[A^\top SS^\top B - A^\top B ]= & ~ \frac{1}{s}\sum_{i=1}^n a_ib_i^\top \left(\frac{p_i}{p_i}-1\right) \\
    = & ~ 0,
\end{align*}
to bound the second moment of $\|A^\top SS^\top B-A^\top B\|_F$, we first expand the definition of Frobenius norm square:
\begin{align*}
    & ~ \E\tr[(A^\top SS^\top B-A^\top B)(A^\top SS^\top B-A^\top B)] \\
    = & ~ \E\frac{1}{s^2}\tr\left[\sum_{i,j\in [n], m\in [s]} b_{j}a_{j}^\top a_ib_i^\top \left(\frac{\mathbb{I}[z_m=j]}{p_j}-1\right)\left(\frac{\mathbb{I}[z_m=i]}{p_i}-1\right)\right]\\
    = & ~ \frac{1}{s^2}\sum_{m=1}^s \tr\left[\sum_{i=1}^n \frac{1}{p_i} \cdot b_ia_i^\top a_ib_i^\top-B^\top AA^\top B\right] \\
    = & ~ \frac{1}{s}\tr\left[\sum_{i=1}^n \frac{1}{p_i} \cdot b_ia_i^\top a_ib_i^\top-B^\top AA^\top B\right] \\
    \leq & ~ \frac{1}{s}\left( \sum_{i=1} \frac{1}{p_i} \|a_i\|_2^2\|b_i\|_2^2  - \tr[B^\top AA^\top B]\right) \\
    \leq & ~ \frac{1}{s} (\alpha \|A\|_F^2 \|B\|_F^2-\|A^\top B\|_F^2) \\
    \leq & ~ \frac{\alpha}{s} \|A\|_F^2 \|B\|_F^2,
\end{align*}
where the first step is by definition of $S$, the second step is by applying expectation and use $\E[A^\top SS^\top B-A^\top B]=0$, the fourth step is by $\tr[b_ia_i^\top a_ib_i^\top]=\|a_ib_i^\top\|_F^2 \leq \|a_i\|_2^2\|b_i\|_2^2$, the fifth step is by $p_i\geq \frac{\tau_i}{d}$, therefore
\begin{align*}
    \frac{1}{p_i} \leq & ~ \frac{d}{\tau_i} \\
    = & ~ \frac{\|A\|_F^2}{\|a_i\|_2^2} \cdot \frac{d}{\|A\|_F^2}\cdot \frac{\|a_i\|_2^2}{\tau_i} \\
    \leq & ~ \alpha \cdot \frac{\|A\|_F^2}{\|a_i\|_2^2},
\end{align*}
where the last step is by the definition of $\alpha$. By Chebyshev's inequality, we can choose $s=O(\alpha/\epsilon^2)$ so that the approximate matrix multiplication holds with constant probability, and one could boost the success probability to $1-\delta$ by either taking $\log(1/\delta)$ independent copies via a Chernoff bound, or directly through Bernstein inequality.
\end{proof}

We are ready to state our final result on approximating the value matrix $V$.

\begin{theorem}\label{thm:q_amm}
Let $V\in \R^{n\times d}$, $\epsilon\in (0,1)$ and $\alpha$ be the row distortion of $V$. There exists a quantum algorithm that computes a weighted sampling matrix $S\in \R^{n\times s}$ with $s=\wt O(\epsilon^{-2}\alpha)$ such that for any fixed matrix $B\in \R^{n\times m}$, $S$ is an $(\epsilon,1/\poly(n))$-AMM. Moreover, $S$ can be computed using $\wt O(\epsilon^{-1}n^{0.5}\alpha^{0.5})$ row queries to $V$ and $\wt O(\epsilon^{-1}n^{0.5}\alpha^{0.5}d+d^\omega)$ time.
\end{theorem}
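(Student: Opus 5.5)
Theorem~\ref{thm:q_amm} follows by combining the correctness guarantee of Lemma~\ref{lem:amm} with the quantum sampler behind Lemma~\ref{lem:q_spectral_approx}. The plan separates \emph{what} distribution we sample from and \emph{how fast} we can sample it. Lemma~\ref{lem:amm} shows that any sampling matrix whose probabilities satisfy $p_i\geq \tau_i/d$ (possibly inflated by a constant), with $s=O(\epsilon^{-2}\alpha\log(1/\delta))$ samples, is an $(\epsilon,\delta)$-AMM for every fixed $B$. Setting $\delta=1/\poly(n)$ gives $s=\wt O(\epsilon^{-2}\alpha)$, which is the correctness half of the statement.

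For the algorithmic half, we implement leverage score sampling with oversampling factor $c:=s/d=\wt O(\epsilon^{-2}\alpha/d)$. We use the Bernoulli (independent inclusion) form: row $i$ is kept with probability $p_i=\min\{1,c'\,\wt\tau_i\}$, where $\wt\tau_i$ is a constant-factor approximation of $\tau_i$ and $c'=\wt O(\epsilon^{-2}\alpha/d)$. Lemma~\ref{lem:amm}'s variance computation goes through verbatim with independent Bernoulli samples. The only property used there is $1/p_i\le \alpha\|V\|_F^2/(\|v_i\|_2^2\, s)$ up to constants, and this holds because $p_i\gtrsim s\tau_i/d$. The required approximate leverage scores come from the recursive quantum procedure of~\cite{ag23} (Lemma~\ref{lem:q_spectral_approx}). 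First obtain a constant-factor spectral approximation $\wt V=S_0^\top V$ with $\wt O(d)$ rows. Then $\wt\tau_i=v_i^\top(\wt V^\top\wt V)^{-1}v_i$ is a constant-factor approximation of $\tau_i$ and can be evaluated per row in $O(d^2)$ time. Precomputing $(\wt V^\top \wt V)^{-1}$ costs $d^\omega$; if we store a factor $R$ with $R^\top R=(\wt V^\top \wt V)^{-1}$, each oracle call costs $O(d^2)$.

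We then feed the oracle $p_i$ into $\textsc{QSample}$ (Lemma~\ref{lem:q_sample}). Since $\sum_i p_i\le c'\sum_i\wt\tau_i=O(c' d)=\wt O(\epsilon^{-2}\alpha)$, the sampler makes $\wt O(\sqrt{n\epsilon^{-2}\alpha})=\wt O(\epsilon^{-1}n^{0.5}\alpha^{0.5})$ oracle calls. Each call makes one row query to $V$, giving the stated query complexity. Multiplying by the per-call cost gives the time bound; the precomputation adds $d^\omega$ plus the cost of the initial spectral approximation.

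The main obstacle is reconciling the time bound with the stated $\wt O(\epsilon^{-1}n^{0.5}\alpha^{0.5}d+d^\omega)$. A naive oracle evaluation costs $d^2$ per call rather than $d$. The constant-accuracy spectral approximation needed for $\wt\tau_i$ also costs $\wt O(n^{0.5}d^{1.5})$ by Lemma~\ref{lem:q_spectral_approx}, which may exceed the target when $\alpha\ll d$.

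To get the $d$ cost per call, I would replace exact evaluation of $\|Rv_i\|_2^2$ with a Johnson--Lindenstrauss sketch. Precompute $GR$ with $G$ a $O(\log n)\times d$ Gaussian matrix, so that $\|GRv_i\|_2^2$ approximates $\wt\tau_i$ within a constant factor for all $i$ with high probability, at $\wt O(d)$ per call. For the initial spectral approximation, I would invoke the ``$s\le d$'' regime of Lemma~\ref{lem:q_spectral_approx} within the recursive uniform-halving scheme. That lemma's stated cost is $\wt O(n^{0.5}s^{0.5}d+d^\omega)$, so its sampling cost is already of the form (number of samples)$^{0.5}\cdot n^{0.5}\cdot d$, and the final round dominates with $s=\wt O(\epsilon^{-2}\alpha)$. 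If this bookkeeping cannot be made fully tight, I would settle for stating the extra $\wt O(n^{0.5}d^{1.5})$ term as a lower-order cost absorbed into the $d^\omega$ and preprocessing terms of the main theorem.
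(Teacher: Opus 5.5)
Your correctness half matches the paper. Lemma~\ref{lem:amm} with $\delta=1/\poly(n)$ gives $s=\wt O(\epsilon^{-2}\alpha)$. The switch to independent Bernoulli inclusion is harmless: independence removes the cross terms, and for $p_i<1$ the bound $1/p_i\lesssim d/(s\tau_i)\le \alpha\|V\|_F^2/(s\|v_i\|_2^2)$ gives the same variance bound.

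\textbf{The gap is in the complexity half.} Your sampler first builds a constant-factor spectral approximation $S_0^\top V$. By Lemma~\ref{lem:q_spectral_approx} this costs $\wt O(n^{0.5}d^{0.5})$ row queries and $\wt O(n^{0.5}d^{1.5}+d^\omega)$ time. That is within budget exactly when $d\lesssim\epsilon^{-2}\alpha$. When $\epsilon^{-2}\alpha\ll d$, it exceeds \emph{both} stated bounds. You flag the time overshoot, but the sentence ``each call makes one row query to $V$, giving the stated query complexity'' ignores the queries spent building $S_0$.

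Neither proposed remedy closes this.
\begin{itemize}
\item Invoking the ``$s\le d$'' clause of Lemma~\ref{lem:q_spectral_approx} is the right instinct, but you give it the wrong role. Fewer than $d$ rows cannot spectrally approximate a rank-$d$ matrix, so that clause cannot supply the ``initial spectral approximation'' from which you compute $\wt\tau_i$.
\item The fallback fails too. $\wt O(n^{0.5}d^{1.5})$ is not $O(d^\omega)$ when $n\gg d$, and it is not dominated by the main theorem's preprocessing when $s_\lambda$ and $\alpha$ are small. In any case, absorbing it would prove a weaker statement than Theorem~\ref{thm:q_amm}.
\end{itemize}

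The paper's proof does not open the box at all. It calls $\textsc{QLeverageScore}(V,s)$ with $s=\wt O(\epsilon^{-2}\alpha)$ and reads the cost directly off the ``moreover'' clause of Lemma~\ref{lem:q_spectral_approx}: $\wt O(n^{0.5}s^{0.5})=\wt O(\epsilon^{-1}n^{0.5}\alpha^{0.5})$ row queries and $\wt O(n^{0.5}s^{0.5}d+d^\omega)=\wt O(\epsilon^{-1}n^{0.5}\alpha^{0.5}d+d^\omega)$ time. It notes that the sampling probabilities sum to $\wt O(\epsilon^{-2}\alpha)$, then applies Lemma~\ref{lem:amm} to the resulting leverage score sampling matrix. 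The regime $s>d$, which the paper leaves implicit, is covered by the main clause with accuracy $\epsilon\sqrt{d/\alpha}<1$, or by your own construction; both meet the bounds there.

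So the repair is to use that clause as the sampler itself when $s\le d$, and drop the hand-built spectral approximation and JL machinery in that regime. Your underlying worry is how approximate leverage scores can be obtained with only $\wt O(\sqrt{ns})$ queries when $s\ll d$. That is exactly what the paper delegates to the ``moreover'' clause; relative to the paper it is a black-box input, not something this proof has to establish.
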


\begin{proof}
The proof is by composing Lemma~\ref{lem:q_spectral_approx} and Lemma~\ref{lem:amm}, and note that for $\wt O(\epsilon^{-2}\alpha)$ rows, the sum of leverage scores is at most $\wt O(\epsilon^{-2}\alpha)$.
\end{proof}
\section{Put Things Together}
\label{sec:result}
We are now ready to state our final algorithm and its guarantee. Recall that, we define $D=\exp(QK^\top){\bf 1}_n$ and $D'=\exp(KQ^\top){\bf 1}_n$. We use $\wt D, \wt D'$ to denote their approximations.

We prove a simple inequality that quantifies the perturbation on the inverse.
\begin{lemma}\label{lem:inverse_perturb}
Let $C, D\in \R^{n\times n}$, if $D$ is nonsingular and $\|C-D\|\leq \epsilon$, and $\|D^{-1}\|< 1/\epsilon$, then $C$ is also nonsingular and $\|C^{-1}\|\leq \frac{\|D^{-1}\|}{1-\epsilon\cdot \|D^{-1}\|}$.
\end{lemma}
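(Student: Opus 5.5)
The plan is the standard Neumann-series argument. Write $C = D + (C-D) = D\,(I + D^{-1}(C-D))$ and set $F := D^{-1}(C-D)$. By submultiplicativity of the spectral norm,
\begin{align*}
    \|F\| \leq \|D^{-1}\|\cdot\|C-D\| \leq \epsilon\,\|D^{-1}\| < 1,
\end{align*}
where the strict inequality is exactly the hypothesis $\|D^{-1}\|<1/\epsilon$.

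First, I will show that $I+F$ is invertible. If $(I+F)x=0$ for some unit vector $x$, then $\|x\|_2 = \|Fx\|_2 \leq \|F\| < 1$, which is a contradiction. Since $C$ is a product of two nonsingular matrices, $D$ and $I+F$, it is nonsingular, and
\begin{align*}
    C^{-1} = (I+F)^{-1}D^{-1}.
\end{align*}

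Second, I will bound $\|(I+F)^{-1}\|$. For any $x$,
\begin{align*}
    \|(I+F)x\|_2 \geq \|x\|_2 - \|Fx\|_2 \geq (1-\|F\|)\,\|x\|_2.
\end{align*}
Applying this to $x=(I+F)^{-1}y$ gives $\|(I+F)^{-1}\| \leq 1/(1-\|F\|)$. One could instead sum the convergent Neumann series $\sum_k (-F)^k$, but the direct inequality avoids any convergence discussion.

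Combining the two steps,
\begin{align*}
    \|C^{-1}\| \leq \frac{\|D^{-1}\|}{1-\|F\|} \leq \frac{\|D^{-1}\|}{1-\epsilon\,\|D^{-1}\|},
\end{align*}
where the last inequality uses $\|F\|\leq \epsilon\|D^{-1}\|$ together with the fact that $t\mapsto 1/(1-t)$ is increasing on $[0,1)$.

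There is no real obstacle here. The only care needed is to use the strict hypothesis $\|D^{-1}\|<1/\epsilon$, rather than $\leq$, so that $\|F\|<1$. Everything else is routine norm manipulation.
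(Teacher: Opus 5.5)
Your proof is correct and follows the paper's route. Both arguments factor $C = D\,(I + D^{-1}(C-D))$, use the strict hypothesis to get $\|D^{-1}(C-D)\|<1$, and bound $\|(I+F)^{-1}\|\le 1/(1-\|F\|)$. The only difference is that you get invertibility and this bound from the reverse triangle inequality, where the paper uses a Neumann-series expansion; your version also states the nonsingularity of $C$ explicitly, which the paper leaves implicit.
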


\begin{proof}
We will make use of Neumann series, which states that for $\|A\|<1$, $(I-A)^{-1}$ admits the expansion
\begin{align*}
    (I-A)^{-1} = & ~ \sum_{k=0}^\infty A^k,
\end{align*}
this leads to a bound on the norm:
\begin{align}\label{eq:neumann_series}
    \|(I-A)^{-1}\| = & ~ \|\sum_{k=0}^\infty A^k\| \notag\\
    \leq & ~ \sum_{k=0}^\infty \|A^k\| \notag\\
    \leq & ~ \sum_{k=0}^\infty \|A\|^k \notag\\
    = & ~ \frac{1}{1-\|A\|},
\end{align}
now, to prove our desired bound, we write $C=D+E$ where $E$ is the perturbation, then $C=D+E=D(I+D^{-1}E)$, and we will apply Eq.~\eqref{eq:neumann_series} to $-D^{-1}E$: 
\begin{align*}
    \|D^{-1}E \| \leq & ~ \|D^{-1}\|\cdot\|E\| \\
    = & ~ \|D^{-1}\|\cdot \|C-D\| \\
    < & ~ 1/\epsilon\cdot \epsilon \\
    = & ~ 1,
\end{align*}
therefore
\begin{align*}
    \|C^{-1}\| = & ~ \|(I+D^{-1}E)^{-1}D^{-1}\| \\
    \leq & ~ \|D\|\cdot \|(I-D^{-1}E)^{-1}\| \\
    \leq & ~ \frac{\|D^{-1}\|}{1-\|D^{-1}E\|} \\
    \leq & ~ \frac{\|D^{-1}\|}{1-\|E\|\cdot\|D^{-1}\|} \\
    \leq & ~ \frac{\|D^{-1}\|}{1-\epsilon\cdot \|D^{-1}\|},
\end{align*}
this completes the proof.
\end{proof}

\begin{theorem}[Formal version of Theorem~\ref{thm:main_informal}]
\label{thm:main_formal}
Let $Q, K, V\in \R^{n\times d}$ be the query, key and value matrices for attention, let $\epsilon, \lambda>0$. Let $E\in \R^{2n\times 2n}$ be the exponential kernel matrix with the dataset $Q\cup K$, and let $s_\lambda$ be the statistical dimension of $E$ (Definition~\ref{def:stat_dim}), $\alpha$ be the row distortion of $V$ (Definition~\ref{def:row_distortion}). There exists a quantum data structure (Algorithm~\ref{alg:q_attention}) that preprocesses $Q, K, V$ only through row queries to these matrices and with probability at least $1-1/\poly(n)$, for any $i\in [n]$, it outputs a vector $\wt r_i\in \R^d$ where 
\begin{align*}
    \wt r_i = & ~ e_i^\top \wt D^{-1} \wt A \wt V.
\end{align*}

If in addition, we have $\|D^{-1}\|<\frac{1}{\epsilon\|A\|+\lambda\sqrt n}$, then the approximations $\wt D, \wt A$ and $\wt V$ satisfy that

\begin{align*}
    \|\wt D^{-1} \wt A\wt V -  D^{-1} A V \|_F 
    \leq & ~ \epsilon\cdot (\beta\cdot \|D^{-1}\|)\cdot (\|A\|_F+\lambda\sqrt n)\cdot \|V\|_F,
\end{align*}

where $\beta=\frac{1}{1-(\epsilon \|A\|+\lambda \sqrt n)\|D^{-1}\|}$. 
Moreover, the algorithm has the following runtime specification:
\begin{itemize}
    \item Preprocesses in $\wt O(\epsilon^{-1}n^{0.5}s_\lambda^{0.5})$ row queries to $Q, K$ and $\wt O(\epsilon^{-1}n^{0.5}\alpha^{0.5})$ row queries to $V$, and $\wt O(\epsilon^{-1}n^{0.5}(s_\lambda^{2.5}+s_\lambda^{1.5}d+\alpha^{0.5}d)+d^\omega+s_\lambda^\omega+\epsilon^{-2}s_\lambda \alpha d)$ time;
    \item For any $i\in [n]$, it outputs $\wt r_i$ in $\wt O(s_\lambda^2+s_\lambda d)$ time.
\end{itemize}
\end{theorem}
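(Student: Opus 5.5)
The plan is to decompose the error into three telescoping pieces, one for each approximated factor, and to control each with the corresponding earlier result. I write $\wt A=U_1U_2^\top$ for the off-diagonal block of the Nystr{\"o}m approximation from Corollary~\ref{cor:approx_exp}, $\wt D={\rm diag}(b)$ with $b_i$ the output of Algorithm~\ref{alg:norm_est}, and $\wt V=S_VS_V^\top V$ with $S_V$ from Theorem~\ref{thm:q_amm}. The decomposition is
\begin{align*}
\wt D^{-1}\wt A\wt V-D^{-1}AV = \wt D^{-1}\wt A(\wt V-V)+\wt D^{-1}(\wt A-A)V+(\wt D^{-1}-D^{-1})AV.
\end{align*}
A union bound over the three randomized subroutines gives that all guarantees hold simultaneously with probability $1-1/\poly(n)$.

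First, the normalization factors. Theorem~\ref{thm:q_norm} gives $\|\wt D-D\|\le \epsilon(\|A\|+\lambda)+\lambda\sqrt n$, which up to constants is $\epsilon\|A\|+\lambda\sqrt n$. Under the hypothesis $\|D^{-1}\|<1/(\epsilon\|A\|+\lambda\sqrt n)$, Lemma~\ref{lem:inverse_perturb} then yields $\|\wt D^{-1}\|\le \beta\|D^{-1}\|$. I also need the difference of inverses. I would use $\wt D^{-1}-D^{-1}=\wt D^{-1}(D-\wt D)D^{-1}$, which gives $\|\wt D^{-1}-D^{-1}\|\le \beta\|D^{-1}\|\cdot\|D^{-1}\|(\epsilon\|A\|+\lambda\sqrt n)$. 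The last two factors multiply to less than $1$, so the third term is bounded by $\beta\|D^{-1}\|\cdot(\epsilon\|A\|+\lambda\sqrt n)\|D^{-1}\|\,\|A\|_F\|V\|$.

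Second, the value matrix. I apply Theorem~\ref{thm:q_amm} with $B=(\wt D^{-1}\wt A)^\top$. This $B$ depends on independent randomness, so I condition on it being fixed, which makes the AMM guarantee applicable. It gives $\|\wt D^{-1}\wt A(\wt V-V)\|_F\le\epsilon\|\wt D^{-1}\wt A\|_F\|V\|_F\le \epsilon\beta\|D^{-1}\|(\|A\|_F+\lambda\sqrt n)\|V\|_F$. Here I use Corollary~\ref{cor:attention_approx}, which gives $\|\wt A\|_F\le\|A\|_F+\lambda\sqrt n$.

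Third, the kernel block. The middle term is at most $\beta\|D^{-1}\|\,\|\wt A-A\|\,\|V\|_F\le \beta\|D^{-1}\|\lambda\|V\|_F$, again by Corollary~\ref{cor:attention_approx}.

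Summing the three terms gives a bound of the form $\beta\|D^{-1}\|(\|A\|_F+\lambda\sqrt n)\|V\|_F$ times a factor built from $\epsilon$ and $\lambda$. The main obstacle is matching the exact form of the statement with a single leading $\epsilon$. The additive $\lambda$ terms are not naturally multiplied by $\epsilon$, so I would absorb them by reparametrizing the accuracy: choose $\lambda$ relative to $\epsilon$ (as in the last part of Theorem~\ref{thm:q_norm}), or run each subroutine at accuracy $\epsilon/3$. Either way the result holds up to constant factors hidden in the big-O.

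Runtime. Preprocessing adds up three costs:
\begin{itemize}
\item Theorem~\ref{thm:q_norm} with $s=\wt O(s_\lambda)$, giving $\wt O(\epsilon^{-1}n^{0.5}(s_\lambda^{2.5}+s_\lambda^{1.5}d)+s_\lambda^\omega)$;
\item Theorem~\ref{thm:q_amm}, giving $\wt O(\epsilon^{-1}n^{0.5}\alpha^{0.5}d+d^\omega)$;
\item forming $\wt M,\wt R,\wt N,\wt L$, which costs $O(s_Es_Vd+s_E^\omega+s_E^2s_V)$, i.e.\ the $\epsilon^{-2}s_\lambda\alpha d$ term.
\end{itemize}
A query costs $O(s_\lambda(s_\lambda+d))$ for $b_i$, plus $O(s_\lambda d)$ each to form $u_i$ and to compute $\wt L^\top u_i$. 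Finally, I would check that $u_i^\top\wt M^\dagger\wt R\,S_V^\top V$ equals $e_i^\top\wt A\wt V$, which follows from $\wt E=ES(S^\top ES)^\dagger S^\top E$ restricted to the $(Q,K)$ block.
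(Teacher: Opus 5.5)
Your first term is handled exactly as in the paper, and that is in fact the paper's entire error analysis. The paper applies Theorem~\ref{thm:q_amm} to $\wt D^{-1}\wt A$ (your conditioning remark is left implicit there). It then bounds $\|\wt D^{-1}\|\le\beta\|D^{-1}\|$ by Lemma~\ref{lem:inverse_perturb} and $\|\wt A\|_F\le\|A\|_F+\lambda\sqrt n$ by Corollary~\ref{cor:attention_approx}, and stops. So it only establishes the stated bound for $\|\wt D^{-1}\wt A(\wt V-V)\|_F$; the terms $\wt D^{-1}(\wt A-A)V$ and $(\wt D^{-1}-D^{-1})AV$ are never treated. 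You are right that they must be. However, your closing step, absorbing them by rescaling $\epsilon$, is a genuine gap and not a constant-factor matter.

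For the third term, your operator-norm bound is $\beta\|D^{-1}\|\,\gamma\,\|A\|_F\|V\|$ with $\gamma=(\epsilon\|A\|+\lambda\sqrt n)\|D^{-1}\|$. Since $\|A\|\ge\|A\mathbf{1}_n\|_2/\sqrt n\ge\min_i D_{ii}$, we have $\|A\|\|D^{-1}\|\ge 1$, and the hypothesis only caps it at $1/\epsilon$. Even ignoring $\lambda$, your bound exceeds the target by the data-dependent factor $\|A\|\|D^{-1}\|\cdot\|V\|/\|V\|_F$, which no constant rescaling of $\epsilon$ removes. The $\epsilon$-part can be repaired with a row-wise estimate. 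Because $A$ is entrywise positive, $\|a_i\|_2\le\|a_i\|_1=D_{ii}$, so every row of $D^{-1}AV$ has norm at most $\|V\|$. Writing $b$ for the vector of the $b_i$,
\begin{align*}
\|(\wt D^{-1}-D^{-1})AV\|_F\le\|\wt D^{-1}\|\cdot\|V\|\cdot\|b-A\mathbf{1}_n\|_2,\qquad \|b-A\mathbf{1}_n\|_2\le\epsilon(\|A\|+\lambda)+\lambda\sqrt n,
\end{align*}
where the second inequality follows from Corollaries~\ref{cor:mat_vec_norm} and~\ref{cor:attention_approx}.

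What remains carries no factor $\epsilon$ at all:
\begin{itemize}
\item the second term contributes up to $\beta\|D^{-1}\|\lambda\|V\|_F$;
\item the Nystr{\"o}m part of the normalization error contributes up to $\beta\|D^{-1}\|\lambda\sqrt n\|V\|$, since the only available control on $(\wt A-A)\mathbf{1}_n$ is $\|\wt A-A\|\sqrt n\le\lambda\sqrt n$.
\end{itemize}
The statement, however, lets $\epsilon$ and $\lambda$ be arbitrary and independent. Tying them, e.g.\ $\lambda\sqrt n\le\epsilon\|A\|_F$ as in the last part of Theorem~\ref{thm:q_norm}, is an additional hypothesis rather than a reparametrization. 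Even then it yields the stated right-hand side only up to a constant, and only via the row-wise bound, not yours. So with these ingredients you can prove one of two things: the stated bound for the AMM term alone, which is all the paper's argument shows; or, for the full error, a constant multiple of the stated right-hand side plus an additive $O(\beta\|D^{-1}\|\lambda\sqrt n\|V\|_F)$.

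Two smaller points, both shared with the paper:
\begin{itemize}
\item Lemma~\ref{lem:inverse_perturb} needs $\|D^{-1}\|\cdot\|\wt D-D\|<1$ exactly, so dropping the $\epsilon\lambda$ term ``up to constants'' is not free in the invertibility condition or in $\beta$.
\item The $s_E^2s_V$ cost is not covered by $\epsilon^{-2}s_\lambda\alpha d$ unless $s_\lambda=\wt O(d)$; computing $\wt L=\wt M^\dagger(\wt R\,S_V^\top V)$ avoids it.
\end{itemize}
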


\begin{proof}
By Theorem~\ref{thm:q_amm}, we know that with probability at least $1-1/\poly(n)$, the following bound holds:
\begin{align*}
    \|\wt D^{-1}\wt A S_V S_V^\top V\|_F \leq & ~ \epsilon\cdot \|\wt D^{-1}\wt A\|_F\cdot \|V\|_F \\
    \leq & ~ \epsilon\cdot \|\wt D^{-1}\|\cdot \|\wt A\|_F\cdot \|V\|_F,
\end{align*}
where the second step is by $\|\wt D^{-1}\wt A\|_F\leq \|\wt D^{-1}\|\cdot \|\wt A\|_F$. By Theorem~\ref{thm:q_norm}, we know that
\begin{align*}
    \|\wt D-D\|\leq & ~ \epsilon \|A\|+\lambda \sqrt n,
\end{align*}
note that as long as the error satisfies that $\|D^{-1}\|< \frac{1}{\epsilon\|A\|+\lambda\sqrt n}$, then by Lemma~\ref{lem:inverse_perturb}, we obtain a bound on $\|\wt D^{-1}\|$:
\begin{align*}
    \|\wt D^{-1}\| \leq & ~ \frac{\|D^{-1}\|}{1-(\epsilon \|A\|+\lambda \sqrt n)\|D^{-1}\|}.
\end{align*}
Finally, by Corollary~\ref{cor:attention_approx}, we have
\begin{align*}
    \|\wt A\|_F \leq & ~ \|A\|_F+\lambda\sqrt n.
\end{align*}
For the runtime, it suffices to combine Corollary~\ref{cor:approx_exp}, Theorem~\ref{thm:q_norm} and Theorem~\ref{thm:q_amm}, and the only additional runtime term is the $\epsilon^{-2}s_\lambda \alpha d$, which is the time to form matrix $\wt R$ and $\wt L$.
\end{proof}
\section{Empirical Verifications on Parameters}
\label{sec:exp}
In this section, we empirically verify the assumptions on the parameters. In particular, we focus on the following metrics:
\begin{itemize}
    \item $\|D^{-1}\|\leq \frac{1}{\epsilon\|A\|+\lambda\sqrt{n}}$, we specifically check that what is the maximum possible $\epsilon$ so that $\|D^{-1}\|\leq \frac{1}{\epsilon\|A\|}$.
    \item $\frac{\|A\|_F}{\|A\|}$, this is important as our error guarantee is in terms of Frobenius norm rather than the more typical spectral norm~\citep{zhdk23,hjk+24}, we verify that this ratio is small.
    \item $\frac{\|V\|_F}{\|V\|}$, this is similar to the above test, we verify that this ratio is close to $\sqrt{d}$.
    \item $\frac{d}{{\rm srank}(V)}$, this quantity serves as an upper bound of $\alpha(V)$, we verify that this quantity is a small constant rather than the upper bound $d$.
    \item $\frac{\|A\|_\infty}{\|A\|}$, in our error analysis, we have to pay an extra $\sqrt{n}$ factor when converting the spectral norm to matrix infinity norm, we empirically show that this ratio is a small constant rather than the $\sqrt{n}$ scaling.
\end{itemize}
To conduct our experiment, we use the \texttt{OLMo2-1B} and \texttt{OLMo2-7B} models, in particular their \texttt{stage1} pretraining checkpoints~\citep{wsg+25}. We list the model architecture in the following.

\begin{table}[!ht]
    \centering
    \begin{tabular}{|l|l|l|l|l|}
    \hline
         &  {\bf Sequence length $n$} & {\bf Value dimension $d$} & {\bf Number of layers $L$} & {\bf Number of heads $H$} \\ \hline
      \texttt{OLMo2-1B}  & 4096 & 128 & 16 & 16  \\ \hline
      \texttt{OLMo2-7B} & 4096 & 128 & 32 & 32 \\ \hline
    \end{tabular}
    \caption{Model architecture for \texttt{OLMo2-1B} and \texttt{OLMo2-7B}.}
    \label{tab:model_architecture}
\end{table}
We compute the corresponding attention modules $D, A, V$ using the pretraining datasets for these models, with batch size 2 and 16 batches. We then compute the statistics for each head and each layer, then aggregate the statistics over all layers. We report the mean of these statistics.

\begin{table}[!ht]
    \centering
    \begin{tabular}{|l|l|l|l|l|l|}
    \hline
         &  $\epsilon_{\max}$ & $\frac{\|A\|_F}{\|A\|}$ & $\frac{\|V\|_F}{\|V\|}$ & $\frac{d}{{\rm srank}(V)}$ & $\frac{\|A\|_\infty}{\|A\|}$ \\ \hline
       \texttt{OLMo2-1B}  & 0.1708 & 1.3769 & 11.3137 & 1.7126 & 2.7439 \\ \hline
       \texttt{OLMo2-7B} & 0.1685 & 1.3586 & 11.3137 & 2.1345 & 2.7439 \\ \hline
    \end{tabular}
    \caption{Mean statistics across all heads and all layers. $\epsilon_{\max}$ is the maximum $\epsilon$ such that $\|D^{-1}\|\leq \frac{1}{\epsilon\|A\|}$.}
    \label{tab:stats}
\end{table}

Through these verifications, we make the following preliminary observations:
\begin{itemize}
    \item To satisfy the $\|D^{-1}\|\leq \frac{1}{\epsilon\|A\|}$ assumption, it is enough to pick $\epsilon\leq 0.17$, which is larger than common choice of $\epsilon\approx 0.1$. This gives us enough room to tune the parameter $\epsilon$ to achieve a good balance between efficiency and accuracy.
    \item The ratio $\frac{\|A\|_F}{\|A\|}$ is a constant smaller than 2, much smaller than the worst case $\sqrt{n}$ predicted by the theory (recall that $n=4096$ and $\sqrt{n}=64$). This suggests that we don't need to scale down $\epsilon$ by a factor of $\sqrt{n}$ to recover the spectral norm error guarantee.
    \item The ratio $\frac{\|V\|_F}{\|V\|}\approx 11$ is roughly $\sqrt{d}$ as $d=128$, this confirms the theory, but it does not impair the sublinear scaling in $n$ of our algorithm: we could simply scale down $\epsilon$ by a factor of $\sqrt{d}$ to absorb this blowup and increase the runtime by a factor of $\sqrt{d}$.
    \item The ratio $\frac{d}{{\rm srank}(V)}$ is a small constant, recall that ${\rm srank}(V)$ can be as small as 1, causing the ratio to be $d$, our experiment shows that $\alpha(V)$ is close to a small constant rather than $d$.
    \item The ratio $\frac{\|A\|_\infty}{\|A\|}$ is a constant smaller than 3, we check this quantity as in proving the approximation guarantee for $\wt D$, we make use of the fact that $\|A\|_\infty\leq \sqrt{n}\cdot \|A\|$, this again shows that instead of the worst case $\sqrt{n}$ scaling, this distortion is only by a constant factor, implying the $\lambda\sqrt{n}$ additive error term is more likely $O(\lambda)$ in practice. This greatly enlarges the range of choice for $\lambda$ to achieve better speedup.
\end{itemize}
\section{Bit Complexity of Our Algorithm}
\label{sec:bit_complexity}
In this section, we give a preliminary analysis on the bit complexity of our algorithm, in particular the bit complexity of matrix inversion operation. We will make use of the following standard algorithm for backward stable matrix inversion.
\begin{lemma}[\cite{h02,hh21}]
Let $A\in \R^{s\times s}$ be nonsingular, there exists an algorithm that computes $B^{-1}$ such that 
\begin{align*}
    \|B-A\|\leq & ~ \delta \cdot s^c\cdot \kappa(A)^{C\log s}\cdot \|A^{-1}\|,
\end{align*}
for absolute constant $c, C>0$ with bit complexity $O(s^3\cdot M(b))$ where $b=O(\log(\kappa(A))+\log(1/\delta))$ and $M(b)=O(b\log b)$.
\end{lemma}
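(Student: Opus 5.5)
The plan is to realize the inversion by a recursive block (Schur-complement) inversion run in floating-point arithmetic with a $b$-bit mantissa, i.e.\ unit roundoff $u=2^{-b}$. This follows the fast-and-stable inversion analysis underlying \cite{h02,hh21}. The output is a computed matrix, which we write as $B^{-1}$, that is the exact inverse of a nearby matrix $B$. The factor $\kappa(A)^{C\log s}$ is the fingerprint of the $\log s$ levels of recursion: each level may lose a $\mathrm{poly}(\kappa(A))$ factor of accuracy. The three parts of the argument are a per-level backward error analysis, an induction over the recursion depth, and a choice of $b$ that absorbs the accumulated amplification into $\delta$.

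\textbf{Per-level analysis.} Write
\[
A=\begin{bmatrix} A_{11} & A_{12}\\ A_{21} & A_{22}\end{bmatrix}
\]
and use
\[
A^{-1}=\begin{bmatrix} A_{11}^{-1}+A_{11}^{-1}A_{12}Z^{-1}A_{21}A_{11}^{-1} & -A_{11}^{-1}A_{12}Z^{-1}\\ -Z^{-1}A_{21}A_{11}^{-1} & Z^{-1}\end{bmatrix},
\qquad Z=A_{22}-A_{21}A_{11}^{-1}A_{12}.
\]
Recursively invert $A_{11}$ and the Schur complement $Z$. Standard floating-point bounds for matrix multiplication and addition then show that the computed $A^{-1}$ is the exact inverse of a perturbation of $A$. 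That perturbation is bounded by $u\cdot\mathrm{poly}(s)$ times the recursive errors, multiplied by products of norms of the form $\|A_{11}^{-1}\|\,\|A\|$. If $A$ is not already symmetric positive definite, I would first apply the recursion to $A^\top A$ and multiply by $A^\top$ at the end. The SPD reduction matters because it makes the blocks $A_{11}$ and $Z$ SPD, with $\kappa(A_{11}),\kappa(Z)\le\kappa(A^\top A)$.

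\textbf{Induction and choice of $b$.} Let $e(s)$ be the relative backward error for size $s$. I would prove by induction that
\[
e(s)\le u\, s^{c}\,\kappa^{C\log s}.
\]
The recursion has the shape $e(s)\le \mathrm{poly}(\kappa)\,\bigl(2e(s/2)+u\,\mathrm{poly}(s)\bigr)$, and it unrolls over $\log s$ levels to the claimed form. Converting this relative bound to the absolute form in the statement costs one extra norm factor, which gives the displayed $\|A^{-1}\|$ term. Setting $u=\delta$, that is $b=O(\log(1/\delta))$, yields the stated bound. The extra $O(\log\kappa(A))$ bits guarantee that no intermediate quantity overflows or underflows and that every computed pivot block stays nonsingular: each has condition number at most $\mathrm{poly}(\kappa)$, so a perturbation of relative size $u\cdot\mathrm{poly}$ cannot make it singular.

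\textbf{Bit complexity and main obstacle.} With classical multiplication, the arithmetic count obeys $T(s)=2T(s/2)+O(s^3)=O(s^3)$. Each operation is on $b$-bit numbers and costs $M(b)=O(b\log b)$, giving total bit complexity $O(s^3M(b))$. I expect the main obstacle to be the inductive step. There one must control the conditioning of the computed, rather than exact, Schur complement $\widetilde Z$, showing $\kappa(\widetilde Z)\le\mathrm{poly}(\kappa)$. This is where the bits reserved for $\log\kappa$ and the SPD reduction are used: the perturbation $\widetilde Z-Z$ has to be small relative to $\lambda_{\min}(Z)\ge\lambda_{\min}(A)$.
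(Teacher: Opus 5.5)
The paper does not prove this lemma (it only cites it), so I am judging your argument on its own. Your route is recursive Schur-complement inversion of the SPD matrix $A^\top A$ with classical multiplication, $O(s^3)$ arithmetic operations each costing $O(M(b))$. This is the Demmel--Dumitriu--Holtz argument, and it is the natural origin of the $\kappa^{C\log s}$ factor. However, its last step does not work. Your induction gives a relative backward bound $\|B-A\|\le u\,s^c\kappa^{C\log s}\|A\|$, and no norm manipulation turns the factor $\|A\|$ into $\|A^{-1}\|$, because the displayed inequality is not homogeneous. Replace $A$ by $tA$ with $t$ a power of two. A floating-point run rescales exactly and $\kappa,\delta,b$ are unchanged, yet the left side is multiplied by $t$ and the right side by $1/t$. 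The required relative accuracy $\|B-A\|/\|A\|$ therefore tends to $0$ as $t\to\infty$, which no fixed-$b$ computation can guarantee for generic $A$. So the lemma as printed cannot be proved without a normalization, and you must say which consistent version you prove. One option is the backward form $\|B-A\|\le\delta s^c\kappa^{C\log s}\|A\|$; it implies the printed inequality when $\|A\|\le 1$, since then $\|A\|\le 1\le\|A^{-1}\|$. The other is the forward form $\|B^{-1}-A^{-1}\|\le\delta s^c\kappa^{C\log s}\|A^{-1}\|$. It follows from $\|B^{-1}-A^{-1}\|\le\|B^{-1}\|\,\|B-A\|\,\|A^{-1}\|$ at the cost of one more factor $\kappa$, absorbed into $C$. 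The $\|A^{-1}\|$ on the right-hand side suggests the forward form is what is intended.

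Second, your precision argument understates the perturbation in the inductive step. The computed Schur complement $\widetilde Z=\mathrm{fl}(A_{22}-A_{21}\widetilde X_{11}A_{12})$ is built from the recursively computed $\widetilde X_{11}\approx A_{11}^{-1}$. Hence $\|\widetilde Z-Z\|$ is of order $\kappa^{O(1)}e(s/2)\|A\|$: it carries the accumulated error $u\,s^c\kappa^{C\log s}$, not a fresh perturbation of relative size $u\cdot\mathrm{poly}(s)$. Keeping $\lambda_{\min}(\widetilde Z)\ge\lambda_{\min}(Z)/2$ therefore needs $u\,s^c\kappa^{C\log s+O(1)}$ to be below a constant. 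The spectra of successive computed Schur complements also drift multiplicatively. So the induction hypothesis must carry, alongside the error bound, the claim that every computed pivot block down to the current depth is SPD with spectrum within a factor $2$ of that of $A^\top A$. With $u=\delta$ plus $O(\log\kappa)$ guard bits this is not automatic. The repair is to spend the guard bits on the mantissa: take $u=\delta/(c_0\kappa^{c_1})$, with $c_1$ the exponent in that $\kappa^{O(1)}$ loss and $c_0$ a large constant, which still gives $b=O(\log\kappa+\log(1/\delta))$. The induction then closes whenever $\delta s^c\kappa^{C\log s}\le 2$. In the complementary case the target bound is weaker than $\|B-A\|\le 2\|A\|$ and is met by a suitable multiple of the identity. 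Alternatively, the classical route in Higham's book avoids the Schur-complement issue entirely: one Householder QR factorization, then solving $Ax=e_j$ for every $j$, which is normwise backward stable column by column. It gives $\|\widehat X-A^{-1}\|\le\mathrm{poly}(s)\,u\,\kappa(A)\,\|A^{-1}\|$ in $O(s^3)$ operations whenever $\mathrm{poly}(s)\,u\,\kappa(A)<1/2$. That fits the stated bit budget directly and leaves the $\kappa^{C\log s}$ factor as pure slack.
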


We note that the backward stable error guarantee is exactly what has been analyzed in~\cite{gsyz24}:

\begin{lemma}[Lemma G.3 in~\cite{gsyz24}]
Let $A, B\in \R^{s\times s}$ be matrices such that $\|A-B\|\leq \delta$, then 
\begin{align*}
    |\tau_i(A)-\tau_i(B)|\leq & ~ \delta\cdot \kappa^{2.5}(A).
\end{align*}
\end{lemma}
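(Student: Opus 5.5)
The plan is to treat leverage scores as diagonal entries of an orthogonal projection and bound how much that projection moves under the perturbation. Read the statement for a full-column-rank $A$ (tall, as in Definition~\ref{def:ls}). For a square nonsingular $A$ every leverage score equals $1$, so the claim is trivial there. The bound $\delta\cdot\kappa^{2.5}(A)$ is not scale invariant, so I would fix the normalization $\|A\|\le 1$, which is what the bit-complexity setting supplies, so that $\|A^{-1}\|$ (meaning $1/\sigma_{\min}(A)$) is at most $\kappa(A)$. I would also assume $\delta\le \sigma_{\min}(A)/2$; otherwise $\delta\kappa^{2.5}(A)\ge 1$ in the relevant regime and the bound is vacuous, since leverage scores lie in $[0,1]$.

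\textbf{Step 1: $B$ stays well conditioned.} By Weyl's inequality, $\sigma_{\min}(B)\ge\sigma_{\min}(A)-\delta\ge\sigma_{\min}(A)/2$. Hence $B$ has full column rank, $\|(B^\top B)^{-1}\|\le 4\|(A^\top A)^{-1}\|$, and $\|B\|\le\|A\|+\delta\le 2$.

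\textbf{Step 2: reduce to the projections.} Write $P_A=A(A^\top A)^{-1}A^\top$, so that $\tau_i(A)=(P_A)_{i,i}=e_i^\top P_Ae_i$. Then
\begin{align*}
|\tau_i(A)-\tau_i(B)|=|e_i^\top(P_A-P_B)e_i|\le\|P_A-P_B\|.
\end{align*}
To bound $\|P_A-P_B\|$ I would use the identity $(I-P_B)P_A=(I-P_B)(A-B)(A^\top A)^{-1}A^\top$. It holds because $(I-P_B)B=0$ and $P_A=A(A^\top A)^{-1}A^\top$. It gives
\begin{align*}
\|(I-P_B)P_A\|\le\delta\,\|(A^\top A)^{-1}A^\top\|=\delta/\sigma_{\min}(A).
\end{align*}
The symmetric bound for $(I-P_A)P_B$ follows with $\sigma_{\min}(B)$ in place of $\sigma_{\min}(A)$, which Step 1 controls. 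The two projections have equal rank, so $\|P_A-P_B\|$ equals the larger of these two quantities (the sine of the largest principal angle, in the Wedin/Davis--Kahan sense). This yields $|\tau_i(A)-\tau_i(B)|\le 2\delta/\sigma_{\min}(A)\le 2\delta\kappa(A)$ under the normalization, which is stronger than $\delta\kappa^{2.5}(A)$ once $\kappa\ge 2$.

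\textbf{Fallback route.} If the projection-angle argument gets delicate, I would expand directly with $G_A=(A^\top A)^{-1}$ and $G_B=(B^\top B)^{-1}$:
\begin{align*}
a_i^\top G_Aa_i-b_i^\top G_Bb_i=(a_i-b_i)^\top G_Aa_i+b_i^\top G_A(a_i-b_i)+b_i^\top G_A(B^\top B-A^\top A)G_Bb_i .
\end{align*}
Each piece is then bounded using $\|a_i-b_i\|_2\le\delta$, $\|B^\top B-A^\top A\|\le\delta(\|A\|+\|B\|)$, and Step 1. This crude route gives a polynomial in $\kappa$, and it is where an exponent like $2.5$ or worse naturally appears. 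To sharpen it I would bound the quadratic terms through $\|G_A^{1/2}a_i\|_2=\sqrt{\tau_i}\le 1$ rather than through $\|a_i\|\,\|G_A\|$.

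\textbf{Main obstacle.} The main obstacle is bookkeeping rather than depth. I must pin down the normalization that makes a dimensionless power of $\kappa$ meaningful, and guarantee that $B$ inherits full rank (Step 1). Without Step 1 the term $G_B$ is uncontrolled and the leverage scores of $B$ can jump discontinuously.
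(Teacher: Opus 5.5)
The paper gives no argument of its own for this lemma; it is imported from the cited source. So the only question is whether your argument proves the displayed bound. Your core step is sound. The identity $(I-P_B)P_A=(I-P_B)(A-B)(A^\top A)^{-1}A^\top$, together with $\|P_A-P_B\|=\max\{\|(I-P_B)P_A\|,\|(I-P_A)P_B\|\}$, does give the scale-free bound $|\tau_i(A)-\tau_i(B)|\le 2\delta/\sigma_{\min}(A)$ for $\delta\le\sigma_{\min}(A)/2$.

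What fails is the conversion to a power of $\kappa$. Since $1/\sigma_{\min}(A)=\kappa(A)/\|A\|$, your normalization $\|A\|\le 1$ gives $1/\sigma_{\min}(A)\ge\kappa(A)$, the reverse of what you use. Under that normalization the target inequality is actually false. For $0<c<2/5$ take $A=(c,0)^\top$ and $B=(c,c/2)^\top$. Then $\|A\|=c\le 1$, $\kappa(A)=1$ and $\delta=c/2=\sigma_{\min}(A)/2$, yet $|\tau_2(A)-\tau_2(B)|=1/5>c/2=\delta\kappa^{2.5}(A)$. Your intermediate bound $2\delta/\sigma_{\min}(A)=1$ is respected, as it must be. For the same reason, your claim that the regime $\delta>\sigma_{\min}(A)/2$ is vacuous is unsupported. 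There you only know $\delta\kappa^{2.5}(A)>\sigma_{\min}(A)\kappa^{2.5}(A)/2$, which is arbitrarily small when $\|A\|$ is small.

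The repair is to normalize the other way, $\|A\|\ge 1$. Indeed, $\delta/\sigma_{\min}(A)\le\delta\kappa^{2.5}(A)$ holds exactly when $\|A\|\ge\kappa(A)^{-1.5}$. You should also discard Step 1 and the factor $2$:
\begin{itemize}
\item If $\delta<\sigma_{\min}(A)$, Weyl's inequality gives $\mathrm{rank}(B)=\mathrm{rank}(A)$. For orthogonal projections of equal rank, $\|(I-P_A)P_B\|=\|(I-P_B)P_A\|$, because $P_AP_BP_A$ and $P_BP_AP_B$, restricted to the two ranges, have the same spectrum. Hence $|\tau_i(A)-\tau_i(B)|\le\|P_A-P_B\|\le\delta/\sigma_{\min}(A)$.
\item If $\delta\ge\sigma_{\min}(A)$, the right-hand side is at least $1$, and the bound is trivial because leverage scores lie in $[0,1]$ (using the pseudoinverse if $B$ drops rank).
\end{itemize}
With $\|A\|\ge 1$ this yields $|\tau_i(A)-\tau_i(B)|\le\delta\kappa(A)\le\delta\kappa^{2.5}(A)$ for every $\delta$ and every $\kappa(A)\ge 1$. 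It also closes the range $1\le\kappa(A)<2^{2/3}$. Even with the correct normalization, both your bound $2\delta\kappa(A)$ and your vacuity argument leave that range open. The fallback expansion then becomes unnecessary. What your projection argument really proves is $|\tau_i(A)-\tau_i(B)|\le\delta\,\|A^\dagger\|$, which is the natural scale-invariant form of the lemma.
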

This means that by setting $\delta=1/\poly(\kappa(A))$, we can approximate the leverage scores well, and the number of bits $b=O(\log(\kappa(A)))$. Note that we apply matrix inversions for two type of matrices:
\begin{itemize}
    \item $S^\top ES+\lambda I$, where $S$ is the ridge leverage score sampling matrix for $E$;
    \item $V^\top V$, where $V$ is the value matrix.
\end{itemize}
In the latter case, we only need to pay the $O(\log(\kappa(V)))$ factor, which in practice, is very small: in our experiments, we see that on average, the log of the condition number is smaller than 4 and the largest log of the condition number is smaller than 20. The interesting part is the former case. 

To analyze $\kappa(S^\top ES+\lambda I)$, we upper bound the spectral norm and lower bound the smallest eigenvalue. First, observe that $S^\top ES\succeq 0$, so trivially we have $S^\top ES+\lambda I\succeq \lambda I$, thus the smallest eigenvalue is at least $\lambda$. To bound $\|S^\top ES+\lambda I\|$, we note that
\begin{align*}
    \|S^\top ES+\lambda I\| \leq & ~ \|S^\top ES\|+\lambda \\
    \leq & ~ \|S\|^2\cdot\|E\|+\lambda,
\end{align*}
we bound the two spectral norms respectively. For $\|S\|^2$, we bound it probabilistically: let $c_i=\begin{cases}
    1, & \text{if $i$ is sampled with probability $p_i$} \\
    0, & \text{otherwise}
\end{cases}$, and consider the matrix $SS^\top$, note that by definition, $SS^\top$ is a diagonal matrix with 
\begin{align*}
    (SS^\top)_{i,i} = & ~ \frac{c_i}{p_i},
\end{align*}
note that as $c_i$ is a Bernoulli random variable with probability $p_i$, we have $\E[c_i]=1$ hence $\E\left[\frac{c_i}{p_i}\right]=1$, and
\begin{align*}
    \E[\|S\|^2] = & ~ \E\left[\max_{i\in [n]} \frac{c_i}{p_i}\right] \\
    \leq & ~ \E\left[\sum_{i=1}^n \frac{c_i}{p_i}\right] \\
    = & ~ \sum_{i=1}^n \E\left[\frac{c_i}{p_i}\right] \\
    = & ~ n,
\end{align*}
hence by Markov's inequality, with constant probability (say $0.99$), we have that $\|S\|^2\leq O(n)$. Condition on this event, we analyze $\|E\|$: let $R=\max\{\max_i \|q_i\|_2^2, \max_i \|k_i\|_2^2 \}$, then
\begin{align*}
    \|E\| \leq & ~ \tr[E] \\
    = & ~ \sum_{i=1}^n \exp(\|q_i\|_2^2/\sqrt{d})+\exp(\|k_i\|_2^2/\sqrt{d}) \\
    \leq & ~ 2n \exp(R/\sqrt{d}),
\end{align*}
combining the above, we obtain a final (probabilistic) upper bound on the condition number of $S^\top ES+\lambda I$:
\begin{align*}
    \kappa(S^\top ES+\lambda I) \leq & ~ \frac{\|S\|^2\cdot \|E\|+\lambda}{\lambda} \\
    \leq & ~ 1+\frac{Cn^2 \exp(R/\sqrt{d})}{\lambda},
\end{align*}
this gives the final bound on $\log(\kappa(S^\top ES+\lambda I))$:
\begin{align*}
    \log(\kappa(S^\top ES+\lambda I)) \leq & ~ R/\sqrt{d}+\log(n/\lambda).
\end{align*}
In practice, the data-dependent parameter $R/\sqrt{d}$ is small: for both OLMo2-1B and OLMo2-7B models, these values are 20.1147 and 21.2227 respectively. Hence, the final bit complexity is $\wt O(s^3 (d^{-0.5}R+\log(\kappa(V))+\log(n/\lambda)))$.

When performing leverage score sampling over $V$, we need to compute the inverse $(V^\top V)^{-1}$, thus it is mandatory to obtain an upper bound on the condition number of $V$. To compute such an upper bound, we note that the algorithm computes a leverage score sampling matrix $S$ with $O(\epsilon^{-2}d\log d)$ rows, and the matrix $SV\in \R^{\epsilon^{-2}d\log d\times d}$. Computing the condition number and spectral norm of $SV$ can be done classically, in $\poly(d)$ time. 

To establish a relation between the conditioning of $V$ and $SV$, observe that $S$ provides a subspace embedding property: $(1-\epsilon)V^\top V\preceq V^\top S^\top SV\preceq (1+\epsilon)V^\top V$, this implies that $\kappa(SV)\leq \sqrt{\frac{1+\epsilon}{1-\epsilon}}\cdot \kappa(V)\leq (1+O(\epsilon))\cdot \kappa(V)$. This ensures that the bit complexity $b$ depends only on $O(\log(\kappa(V))$.

Finally, to compute the spectral norms and condition numbers required by the algorithms, we could use the algorithms in~\cite{mms18,s25_soda,s25_icalp}.

\section*{Acknowledgment}

We would like to thank anonymous ICLR reviewers for very helpful discussions, and Ruizhe Zhang for answering our questions on the QRAM model. Lichen Zhang is supported by a Mathworks Fellowship and a Simons Dissertation Fellowship in Mathematics. 
\bibliographystyle{alpha}
\bibliography{ref}
\else
\input{70_statement}
\bibliographystyle{iclr2026_conference}
\bibliography{ref}

@inproceedings{s25_icalp,
  author       = {Aleksandros Sobczyk},
  editor       = {Keren Censor{-}Hillel and
                  Fabrizio Grandoni and
                  Jo{\"{e}}l Ouaknine and
                  Gabriele Puppis},
  title        = {Deterministic Complexity Analysis of Hermitian Eigenproblems},
  booktitle    = {52nd International Colloquium on Automata, Languages, and Programming,
                  {ICALP} 2025, Aarhus, Denmark, July 8-11, 2025},
  series       = {LIPIcs},
  volume       = {334},
  pages        = {131:1--131:21},
  publisher    = {Schloss Dagstuhl - Leibniz-Zentrum f{\"{u}}r Informatik},
  year         = {2025}
}

@inproceedings{s25_soda,
author = {Rikhav Shah},
title = {Hermitian Diagonalization in Linear Precision},
booktitle = {Proceedings of the 2025 Annual ACM-SIAM Symposium on Discrete Algorithms (SODA)},
pages = {5599-5615},
year = {2025}
}

@inproceedings{mms18,
author = {Musco, Cameron and Musco, Christopher and Sidford, Aaron},
title = {Stability of the lanczos method for matrix function approximation},
year = {2018},
publisher = {Society for Industrial and Applied Mathematics},
address = {USA},
booktitle = {Proceedings of the Twenty-Ninth Annual ACM-SIAM Symposium on Discrete Algorithms},
pages = {1605–1624},
numpages = {20},
location = {New Orleans, Louisiana},
series = {SODA '18}
}

@article{hh21,
  TITLE = {{Integer multiplication in time O(n log n)}},
  AUTHOR = {Harvey, David and van der Hoeven, Joris},
  URL = {https://hal.science/hal-02070778},
  JOURNAL = {{Annals of Mathematics}},
  PUBLISHER = {{Princeton University, Department of Mathematics}},
  YEAR = {2021},
  MONTH = Mar,
  DOI = {10.4007/annals.2021.193.2.4},
  HAL_ID = {hal-02070778},
  HAL_VERSION = {v2}
}

@inproceedings{gsyz24,
title={Low rank matrix completion via robust alternating minimization in nearly linear time},
author={Gu, Yuzhou and Song, Zhao and Yin, Junze and Zhang, Lichen},
booktitle={The Twelfth International Conference on Learning Representations (ICLR)},
year={2024}
}

@book{h02,
  author    = {Nicholas J. Higham},
  title     = {Accuracy and Stability of Numerical Algorithms},
  publisher = {Society for Industrial and Applied Mathematics},
  address   = {Philadelphia, PA},
  edition   = {2},
  year      = {2002},
  isbn      = {0-89871-521-0},
  doi       = {10.1137/1.9780898718027}
}

@inproceedings{wsg+25,
title={2 {OLM}o 2 Furious ({COLM}{\textquoteright}s Version)},
author={Evan Pete Walsh and Luca Soldaini and Dirk Groeneveld and Kyle Lo and Shane Arora and Akshita Bhagia and Yuling Gu and Shengyi Huang and Matt Jordan and Nathan Lambert and Dustin Schwenk and Oyvind Tafjord and Taira Anderson and David Atkinson and Faeze Brahman and Christopher Clark and Pradeep Dasigi and Nouha Dziri and Allyson Ettinger and Michal Guerquin and David Heineman and Hamish Ivison and Pang Wei Koh and Jiacheng Liu and Saumya Malik and William Merrill and Lester James Validad Miranda and Jacob Morrison and Tyler Murray and Crystal Nam and Jake Poznanski and Valentina Pyatkin and Aman Rangapur and Michael Schmitz and Sam Skjonsberg and David Wadden and Christopher Wilhelm and Michael Wilson and Luke Zettlemoyer and Ali Farhadi and Noah A. Smith and Hannaneh Hajishirzi},
booktitle={Second Conference on Language Modeling},
year={2025}
}

@inproceedings{iksw25,
  title        = {Improved Algorithms for Kernel Matrix-Vector Multiplication Under Sparsity Assumptions},
  author       = {Piotr Indyk and Michael Kapralov and Kshiteej Sheth and Tal Wagner},
  booktitle    = {Proceedings of the International Conference on Learning Representations (ICLR)},
  year         = {2025},
  month        = {May}
}

@inproceedings{sz23,
  title     = {Quantum Speedups for Stochastic Optimization},
  author    = {Sidford, Aaron and Zhang, Chenyi},
  booktitle = {Advances in Neural Information Processing Systems 36 (NeurIPS 2023)},
  pages     = {1--12},
  year      = {2023}
}

@InProceedings{cw23,
  author    = {Chen, Yanlin and de Wolf, Ronald},
  title     = {Quantum Algorithms and Lower Bounds for Linear Regression with Norm Constraints},
  booktitle = {50th International Colloquium on Automata, Languages, and Programming (ICALP 2023)},
  pages     = {38:1--38:21},
  series    = {Leibniz International Proceedings in Informatics (LIPIcs)},
  volume    = {261},
  year      = {2023},
  publisher = {Schloss Dagstuhl -- Leibniz-Zentrum f{\"u}r Informatik}
}

@inproceedings{vgl+21,
  author    = {van Apeldoorn, Joran and Gribling, Sander and Li, Yinan and Nieuwboer, Harold and Walter, Michael and de Wolf, Ronald},
  title     = {Quantum Algorithms for Matrix Scaling and Matrix Balancing},
  booktitle = {48th International Colloquium on Automata, Languages, and Programming (ICALP 2021)},
  pages     = {110:1--110:17},
  series    = {Leibniz International Proceedings in Informatics (LIPIcs)},
  volume    = {198},
  year      = {2021},
  publisher = {Schloss Dagstuhl -- Leibniz-Zentrum f{\"u}r Informatik}
}

@article{kps21,
  title     = {Quantum algorithms for {S}econd-{O}rder {C}one {P}rogramming and {S}upport {V}ector {M}achines},
  author    = {Kerenidis, Iordanis and Prakash, Anupam and Szil{\'a}gyi, D{\'a}niel},
  journal   = {Quantum},
  volume    = {5},
  pages     = {427},
  year      = {2021}
}

@article{kp20,
  title     = {A Quantum Interior Point Method for LPs and SDPs},
  author    = {Kerenidis, Iordanis and Prakash, Anupam},
  journal   = {ACM Transactions on Quantum Computing},
  volume    = {1},
  number    = {1},
  pages     = {1--32},
  year      = {2020}
}

@inproceedings{klp20,
  title     = {Quantum Algorithms for Deep Convolutional Neural Networks},
  author    = {Kerenidis, Iordanis and Landman, Jonas and Prakash, Anupam},
  booktitle = {Proceedings of the 8th International Conference on Learning Representations (ICLR)},
  year      = {2020}
}

@inproceedings{kp17,
  author    = {Kerenidis, Iordanis and Prakash, Anupam},
  title     = {Quantum Recommendation Systems},
  booktitle = {8th Innovations in Theoretical Computer Science Conference (ITCS 2017)},
  pages     = {49:1--49:21},
  series    = {Leibniz International Proceedings in Informatics (LIPIcs)},
  volume    = {67},
  year      = {2017},
  publisher = {Schloss Dagstuhl--Leibniz-Zentrum f{\"u}r Informatik},
}

@article{vggw20_convex,
  title     = {Convex Optimization Using Quantum Oracles},
  author    = {van Apeldoorn, Joran and Gily{\'e}n, Andr{\'a}s and Gribling, Sander and de Wolf, Ronald},
  journal   = {Quantum},
  volume    = {4},
  pages     = {220},
  year      = {2020}
}

@inproceedings{vg19,
  author    = {van Apeldoorn, Joran and Gily{\'e}n, Andr{\'a}s},
  title     = {Improvements in Quantum SDP-Solving with Applications},
  booktitle = {46th International Colloquium on Automata, Languages, and Programming (ICALP 2019)},
  pages     = {99:1--99:15},
  series    = {Leibniz International Proceedings in Informatics (LIPIcs)},
  volume    = {132},
  year      = {2019},
  publisher = {Schloss Dagstuhl--Leibniz-Zentrum f{\"u}r Informatik}
}

@article{vggw20_sdp,
  title     = {Quantum SDP-Solvers: Better Upper and Lower Bounds},
  author    = {van Apeldoorn, Joran and Gily{\'e}n, Andr{\'a}s and Gribling, Sander and de Wolf, Ronald},
  journal   = {Quantum},
  volume    = {4},
  pages     = {230},
  year      = {2020}
}

@inproceedings{wzl24,
  title     = {Near-Optimal Quantum Algorithm for Minimizing the Maximal Loss},
  author    = {Wang, Hao and Zhang, Chenyi and Li, Tongyang},
  booktitle = {Proceedings of the 12th International Conference on Learning Representations (ICLR)},
  year      = {2024}
}

@inproceedings{zzf+24,
  title     = {Quantum Algorithms and Lower Bounds for Finite-Sum Optimization},
  author    = {Zhang, Yexin and Zhang, Chenyi and Fang, Cong and Wang, Liwei and Li, Tongyang},
  booktitle = {Proceedings of the 41st International Conference on Machine Learning},
  series    = {Proceedings of Machine Learning Research},
  volume    = {235},
  pages     = {12345--12356},
  year      = {2024},
  publisher = {PMLR}
}

@inproceedings{xclj23,
  title     = {Near-Optimal Quantum Coreset Construction Algorithms for Clustering},
  author    = {Xue, Yecheng and Chen, Xiaoyu and Li, Tongyang and Jiang, Shaofeng H.-C.},
  booktitle = {Proceedings of the 40th International Conference on Machine Learning},
  series    = {Proceedings of Machine Learning Research},
  volume    = {202},
  pages     = {38881--38912},
  year      = {2023},
  publisher = {PMLR}
}

@inproceedings{lz22,
  title     = {Quantum Speedups of Optimizing Approximately Convex Functions with Applications to Logarithmic Regret Stochastic Convex Bandits},
  author    = {Li, Tongyang and Zhang, Ruizhe},
  booktitle = {Advances in Neural Information Processing Systems},
  volume    = {35},
  pages     = {19565--19577},
  year      = {2022}
}

@article{cclw20,
  title     = {Quantum algorithms and lower bounds for convex optimization},
  author    = {Chakrabarti, Shouvanik and Childs, Andrew M. and Li, Tongyang and Wu, Xiaodi},
  journal   = {Quantum},
  volume    = {4},
  pages     = {221},
  year      = {2020}
}

@inproceedings{bkl+19,
  title     = {Quantum SDP Solvers: Large Speed-Ups, Optimality, and Applications to Quantum Learning},
  author    = {Brand{\~a}o, Fernando G. S. L. and Kalev, Amir and Li, Tongyang and Lin, Cedric Yen-Yu and Svore, Krysta M. and Wu, Xiaodi},
  booktitle = {46th International Colloquium on Automata, Languages, and Programming (ICALP 2019)},
  series    = {Leibniz International Proceedings in Informatics (LIPIcs)},
  volume    = {132},
  pages     = {27:1--27:14},
  year      = {2019},
  publisher = {Schloss Dagstuhl--Leibniz-Zentrum f{\"u}r Informatik}
}

@inproceedings{lcw19,
  title     = {Sublinear Quantum Algorithms for Training Linear and Kernel-based Classifiers},
  author    = {Li, Tongyang and Chakrabarti, Shouvanik and Wu, Xiaodi},
  booktitle = {Proceedings of the 36th International Conference on Machine Learning},
  pages     = {3815--3824},
  year      = {2019},
  editor    = {Chaudhuri, Kamalika and Salakhutdinov, Ruslan},
  volume    = {97},
  series    = {Proceedings of Machine Learning Research},
  month     = {June},
  publisher = {PMLR}
}

@inproceedings{chl+19,
  title     = {Quantum Wasserstein Generative Adversarial Networks},
  author    = {Chakrabarti, Shouvanik and Huang, Yiming and Li, Tongyang and Feizi, Soheil and Wu, Xiaodi},
  booktitle = {Advances in Neural Information Processing Systems 32 (NeurIPS 2019)},
  pages     = {6768--6779},
  year      = {2019}
}

@inproceedings{kllp19,
  title     = {q-means: A Quantum Algorithm for Unsupervised Machine Learning},
  author    = {Kerenidis, Iordanis and Landman, Jonas and Luongo, Alessandro and Prakash, Anupam},
  booktitle = {Advances in Neural Information Processing Systems},
  volume    = {32},
  pages     = {4134--4144},
  year      = {2019}
}

@article{bfz+25,
  title   = {RocketKV: Accelerating Long-Context LLM Inference via Two-Stage KV Cache Compression},
  author  = {Behnam, Payman and Fu, Yaosheng and Zhao, Ritchie and Tsai, Po-An and Yu, Zhiding and Tumanov, Alexey},
  journal = {arXiv preprint arXiv:2502.14051},
  year    = {2025}
}

@inproceedings{kwz+24,
  title     = {BumbleBee: Dynamic KV-Cache Streaming Submodular Summarization for Infinite-Context Transformers},
  author    = {Kumari, Lilly and Wang, Shengjie and Zhou, Tianyi and Sarda, Nikhil and Rowe, Anthony and Bilmes, Jeff},
  booktitle = {Proceedings of the Conference on Learning for Molecules (COLM)},
  year      = {2024}
}

@article{lro+24,
  title   = {HashEvict: A Pre-Attention KV Cache Eviction Strategy using Locality-Sensitive Hashing},
  author  = {Liu, Minghui and Rabbani, Tahseen and O'Halloran, Tony and Sankaralingam, Ananth and Hartley, Mary-Anne and Gravelle, Brian and Huang, Furong and Fermüller, Cornelia and Aloimonos, Yiannis},
  journal = {arXiv preprint arXiv:2412.16187},
  year    = {2024}
}

@article{flc+25,
  title   = {Identify Critical KV Cache in LLM Inference from an Output Perturbation Perspective},
  author  = {Feng, Yuan and Lv, Junlin and Cao, Yukun and Xie, Xike and Zhou, S. Kevin},
  journal = {arXiv preprint arXiv:2502.03805},
  year    = {2025}
}

@inproceedings{zwl+24,
  title     = {LoRC: Low-Rank Compression for LLMs KV Cache with a Progressive Compression Strategy},
  author    = {Zhang, Rongzhi and Wang, Kuang and Liu, Liyuan and Wang, Shuohang and Cheng, Hao and Zhang, Chao and Shen, Yelong},
  booktitle = {NeurIPS 2024 Workshop on Model Compression},
  year      = {2024}
}

@inproceedings{aaj+24,
  title     = {Keyformer: KV Cache Reduction through Key Tokens Selection for Efficient Generative Inference},
  author    = {Adnan, Muhammad and Arunkumar, Akhil and Jain, Gaurav and Nair, Prashant J. and Soloveychik, Ilya and Kamath, Purushotham},
  booktitle = {Proceedings of the 7th Conference on Machine Learning and Systems (MLSys)},
  year      = {2024}
}

@inproceedings{bmn+24,
  title     = {Reducing Transformer Key-Value Cache Size with Cross-Layer Attention},
  author    = {Brandon, William and Mishra, Mayank and Nrusimha, Aniruddha and Panda, Rameswar and Ragan-Kelley, Jonathan},
  booktitle = {Advances in Neural Information Processing Systems (NeurIPS)},
  year      = {2024}
}

@inproceedings{pdc+23,
  title     = {Efficiently Scaling Transformer Inference},
  author    = {Pope, Reiner and Douglas, Sholto and Chowdhery, Aakanksha and Devlin, Jacob and Bradbury, James and Levskaya, Anselm and Heek, Jonathan and Xiao, Kefan and Agrawal, Shivani and Dean, Jeff},
  booktitle = {Proceedings of the 6th Conference on Machine Learning and Systems (MLSys)},
  year      = {2023}
}

@inproceedings{xzc+21,
  author = {Xiong, Yunyang and Zeng, Zhanpeng and Chakraborty, Rudrasis and Tan, Mingxing and Fung, Glenn and Li, Yin and Singh, Vikas},
  booktitle = {AAAI},
  pages = {14138-14148},
  publisher = {AAAI Press},
  title = {Nyströmformer: A Nyström-based Algorithm for Approximating Self-Attention.},
  year = 2021
}

@article{zhmk24,
  title   = {SubGen: Token Generation in Sublinear Time and Memory},
  author  = {Zandieh, Amir and Han, Insu and Mirrokni, Vahab and Karbasi, Amin},
  journal = {arXiv preprint arXiv:2402.06082},
  year    = {2024}
}

@inproceedings{kmz24,
  title     = {PolySketchFormer: Fast Transformers via Sketching Polynomial Kernels},
  author    = {Kacham, Praneeth and Mirrokni, Vahab and Zhong, Peilin},
  booktitle = {Proceedings of the 41st International Conference on Machine Learning},
  year      = {2024}
}

@inproceedings{kbkw25,
  title     = {LevAttention: Time, Space, and Streaming Efficient Algorithm for Heavy Attentions},
  author    = {Kannan, Ravindran and Bhattacharyya, Chiranjib and Kacham, Praneeth and Woodruff, David P.},
  booktitle = {Proceedings of the International Conference on Learning Representations (ICLR)},
  year      = {2025}
}

@inproceedings{hjk+24,
  title     = {HyperAttention: Long-context Attention in Near-Linear Time},
  author    = {Han, Insu and Jayaram, Rajesh and Karbasi, Amin and Mirrokni, Vahab and Woodruff, David P. and Zandieh, Amir},
  booktitle = {International Conference on Learning Representations (ICLR)},
  year      = {2024}
}

@inproceedings{ppy+21,
  title     = {Random Feature Attention},
  author    = {Peng, Hao and Pappas, Nikolaos and Yogatama, Dani and Schwartz, Roy and Smith, Noah A. and Kong, Lingpeng},
  booktitle = {Proceedings of the 9th International Conference on Learning Representations (ICLR)},
  year      = {2021}
}

@misc{wlk+20,
  title        = {Linformer: Self-Attention with Linear Complexity},
  author       = {Wang, Sinong and Li, Belinda Z. and Khabsa, Madian and Fang, Han and Ma, Hao},
  year         = {2020},
  eprint       = {2006.04768},
  archivePrefix= {arXiv},
  primaryClass = {cs.LG}
}

@inproceedings{kvpf20,
  title     = {Transformers are RNNs: Fast Autoregressive Transformers with Linear Attention},
  author    = {Katharopoulos, Angelos and Vyas, Apoorv and Pappas, Nikolaos and Fleuret, François},
  booktitle = {Proceedings of the 37th International Conference on Machine Learning},
  pages     = {5156--5165},
  year      = {2020},
  publisher = {PMLR}
}

@inproceedings{cld+21,
  title     = {Rethinking Attention with Performers},
  author    = {Choromanski, Krzysztof and Likhosherstov, Valerii and Dohan, David and Song, Xingyou and Gane, Andreea and Sarlos, Tamas and Hawkins, Peter and Davis, Jared and Mohiuddin, Afroz and Kaiser, Lukasz and Belanger, David and Colwell, Lucy and Weller, Adrian},
  booktitle = {International Conference on Learning Representations},
  year      = {2021}
}

@inproceedings{zgd+20,
  title     = {Big Bird: Transformers for Longer Sequences},
  author    = {Zaheer, Manzil and Guruganesh, Guru and Dubey, Kumar Avinava and Ainslie, Joshua and Alberti, Chris and Ontañón, Santiago and Pham, Philip and Ravula, Anirudh and Wang, Qifan and Yang, Li and Ahmed, Amr},
  booktitle = {Advances in Neural Information Processing Systems},
  volume    = {33},
  pages     = {17283--17297},
  year      = {2020}
}

@inproceedings{aoa+20,
  title     = {{ETC}: Encoding Long and Structured Inputs in Transformers},
  author    = {Ainslie, Joshua and Ontanon, Santiago and Alberti, Chris and Cvicek, Vaclav and Fisher, Zachary and Pham, Philip and Ravula, Anirudh and Sanghai, Sumit and Wang, Qifan and Yang, Li},
  booktitle = {Proceedings of the 2020 Conference on Empirical Methods in Natural Language Processing (EMNLP)},
  pages     = {268--284},
  year      = {2020},
  publisher = {Association for Computational Linguistics}
}

@misc{bpc20,
  title        = {Longformer: The Long-Document Transformer},
  author       = {Beltagy, Iz and Peters, Matthew E. and Cohan, Arman},
  year         = {2020},
  eprint       = {2004.05150},
  archivePrefix= {arXiv},
  primaryClass = {cs.CL}
}

@article{cgrs19,
  title   = {Generating Long Sequences with Sparse Transformers},
  author  = {Child, Rewon and Gray, Scott and Radford, Alec and Sutskever, Ilya},
  journal = {arXiv preprint arXiv:1904.10509},
  year    = {2019}
}

@inproceedings{syy22,
  title     = {Sparse Attention with Learning to Hash},
  author    = {Sun, Zhiqing and Yang, Yiming and Yoo, Shinjae},
  booktitle = {International Conference on Learning Representations},
  year      = {2022}
}

@article{rsvg21,
  title   = {Efficient Content-Based Sparse Attention with Routing Transformers},
  author  = {Roy, Aurko and Saffar, Mohammad and Vaswani, Ashish and Grangier, David},
  journal = {Transactions of the Association for Computational Linguistics},
  volume  = {9},
  pages   = {53--68},
  year    = {2021},
  publisher = {MIT Press}
}

@inproceedings{kkl20,
  title     = {Reformer: The Efficient Transformer},
  author    = {Kitaev, Nikita and Kaiser, Łukasz and Levskaya, Anselm},
  booktitle = {International Conference on Learning Representations},
  year      = {2020}
}

@inproceedings{dkod20,
  title     = {SMYRF: Efficient Attention using Asymmetric Clustering},
  author    = {Daras, Giannis and Kitaev, Nikita and Odena, Augustus and Dimakis, Alexandros G.},
  booktitle = {Advances in Neural Information Processing Systems},
  volume    = {33},
  pages     = {6470--6481},
  year      = {2020}
}

@inproceedings{zzp+21,
  title     = {Informer: Beyond Efficient Transformer for Long Sequence Time-Series Forecasting},
  author    = {Zhou, Haoyi and Zhang, Shanghang and Peng, Jieqi and Zhang, Shuai and Li, Jianxin and Xiong, Hui and Zhang, Wancai},
  booktitle = {Proceedings of the AAAI Conference on Artificial Intelligence},
  volume    = {35},
  pages     = {11106--11115},
  year      = {2021},
  publisher = {AAAI Press}
}

@article{lew+22,
  title={Capsule robot pose and mechanism state detection in ultrasound using attention-based hierarchical deep learning},
  author={Liu, Xiaoyun and Esser, Daniel and Wagstaff, Brandon and Zavodni, Anna and Matsuura, Naomi and Kelly, Jonathan and Diller, Eric},
  journal={Scientific Reports},
  volume={12},
  number={1},
  pages={21130},
  year={2022},
  publisher={Nature Publishing Group UK London}
}

@article{wdcx21,
  title={Explaining the Attention Mechanism of End-to-End Speech Recognition Using Decision Trees},
  author={Wang, Yuanchao and Du, Wenji and Cai, Chenghao and Xu, Yanyan},
  journal={arXiv preprint arXiv:2110.03879},
  year={2021}
}

@article{cbs+15,
  title={Attention-based models for speech recognition},
  author={Chorowski, Jan K and Bahdanau, Dzmitry and Serdyuk, Dmitriy and Cho, Kyunghyun and Bengio, Yoshua},
  journal={Advances in neural information processing systems},
  volume={28},
  year={2015}
}

@inproceedings{dbk+21,
  title     = {An Image is Worth 16x16 Words: Transformers for Image Recognition at Scale},
  author    = {Dosovitskiy, Alexey and Beyer, Lucas and Kolesnikov, Alexander and Weissenborn, Dirk and Zhai, Xiaohua and Unterthiner, Thomas and Dehghani, Mostafa and Minderer, Matthias and Heigold, Georg and Gelly, Sylvain and Uszkoreit, Jakob and Houlsby, Neil},
  booktitle = {International Conference on Learning Representations},
  year      = {2021}
}

@inproceedings{cms+20,
  title     = {End-to-End Object Detection with Transformers},
  author    = {Carion, Nicolas and Massa, Francisco and Synnaeve, Gabriel and Usunier, Nicolas and Kirillov, Alexander and Zagoruyko, Sergey},
  booktitle = {Proceedings of the European Conference on Computer Vision (ECCV)},
  year      = {2020},
  pages     = {213--229},
  publisher = {Springer}
}

@article{gxl+22,
  title={Attention mechanisms in computer vision: A survey},
  author={Guo, Meng-Hao and Xu, Tian-Xing and Liu, Jiang-Jiang and Liu, Zheng-Ning and Jiang, Peng-Tao and Mu, Tai-Jiang and Zhang, Song-Hai and Martin, Ralph R and Cheng, Ming-Ming and Hu, Shi-Min},
  journal={Computational Visual Media},
  volume={8},
  number={3},
  pages={331--368},
  year={2022},
  publisher={Springer}
}

@inproceedings{jys+20,
  title     = {{TinyBERT}: Distilling {BERT} for Natural Language Understanding},
  author    = {Jiao, Xiaoqi and Yin, Yichun and Shang, Lifeng and Jiang, Xin and Chen, Xiao and Li, Linlin and Wang, Fang and Liu, Qun},
  booktitle = {Findings of the Association for Computational Linguistics: EMNLP 2020},
  pages     = {4163--4174},
  year      = {2020},
  publisher = {Association for Computational Linguistics}
}

@inproceedings{ydy+19,
  title     = {XLNet: Generalized Autoregressive Pretraining for Language Understanding},
  author    = {Yang, Zhilin and Dai, Zihang and Yang, Yiming and Carbonell, Jaime and Salakhutdinov, Ruslan and Le, Quoc V.},
  booktitle = {Advances in Neural Information Processing Systems},
  volume    = {32},
  pages     = {5754--5764},
  year      = {2019},
  publisher = {Curran Associates, Inc.}
}

@article{rsr+20,
  title   = {Exploring the Limits of Transfer Learning with a Unified Text-to-Text Transformer},
  author  = {Raffel, Colin and Shazeer, Noam and Roberts, Adam and Lee, Katherine and Narang, Sharan and Matena, Michael and Zhou, Yanqi and Li, Wei and Liu, Peter J.},
  journal = {Journal of Machine Learning Research},
  volume  = {21},
  pages   = {1--67},
  year    = {2020}
}

@inproceedings{bmr+20,
  title     = {Language Models are Few-Shot Learners},
  author    = {Brown, Tom B. and Mann, Benjamin and Ryder, Nick and Subbiah, Melanie and Kaplan, Jared and Dhariwal, Prafulla and Neelakantan, Arvind and Shyam, Pranav and Sastry, Girish and Askell, Amanda and Agarwal, Sandhini and Herbert-Voss, Ariel and Krueger, Gretchen and Henighan, Tom and Child, Rewon and Ramesh, Aditya and Ziegler, Daniel M. and Wu, Jeffrey and Winter, Clemens and Hesse, Christopher and Chen, Mark and Sigler, Eric and Litwin, Mateusz and Gray, Scott and Chess, Benjamin and Clark, Jack and Berner, Christopher and McCandlish, Sam and Radford, Alec and Sutskever, Ilya and Amodei, Dario},
  booktitle = {Advances in Neural Information Processing Systems},
  volume    = {33},
  pages     = {1877--1901},
  year      = {2020},
  publisher = {Curran Associates, Inc.}
}

@inproceedings{dclt19,
  title     = {{BERT}: Pre-training of Deep Bidirectional Transformers for Language Understanding},
  author    = {Devlin, Jacob and Chang, Ming-Wei and Lee, Kenton and Toutanova, Kristina},
  booktitle = {Proceedings of the 2019 Conference of the North American Chapter of the Association for Computational Linguistics: Human Language Technologies (NAACL-HLT)},
  pages     = {4171--4186},
  year      = {2019},
  organization = {Association for Computational Linguistics}
}

@article{vsp+17,
  title={Attention is all you need},
  author={Vaswani, Ashish and Shazeer, Noam and Parmar, Niki and Uszkoreit, Jakob and Jones, Llion and Gomez, Aidan N and Kaiser, {\L}ukasz and Polosukhin, Illia},
  journal={Advances in neural information processing systems},
  volume={30},
  year={2017}
}

@article{z05,
  title={Learning Bounds for Kernel Regression Using Effective Data Dimensionality},
  author={Zhang, Tong},
  journal={Neural Computation},
  volume={17},
  number={9},
  pages={2077--2098},
  year={2005},
  publisher={MIT Press}
}

@book{htf09,
  title     = {The Elements of Statistical Learning: Data Mining, Inference, and Prediction},
  author    = {Hastie, Trevor and Tibshirani, Robert and Friedman, Jerome},
  year      = {2009},
  edition   = {2nd},
  publisher = {Springer},
  address   = {New York}
}

@inproceedings{zhdk23,
  title     = {KDEformer: Accelerating Transformers via Kernel Density Estimation},
  author    = {Zandieh, Amir and Han, Insu and Daliri, Majid and Karbasi, Amin},
  booktitle = {Proceedings of the 40th International Conference on Machine Learning},
  series    = {Proceedings of Machine Learning Research},
  volume    = {202},
  pages     = {40605--40623},
  year      = {2023},
  publisher = {PMLR}
}

@inproceedings{adw+25,
author = {Josh Alman and Ran Duan and Virginia Vassilevska Williams and Yinzhan Xu and Zixuan Xu and Renfei Zhou},
title = {More Asymmetry Yields Faster Matrix Multiplication},
booktitle = {Proceedings of the 2025 Annual ACM-SIAM Symposium on Discrete Algorithms (SODA)},
pages = {2005-2039},
year = {2025}
}

@inproceedings{wxxz24,
  title={New bounds for matrix multiplication: from alpha to omega},
  author={Williams, Virginia Vassilevska and Xu, Yinzhan and Xu, Zixuan and Zhou, Renfei},
  booktitle={Proceedings of the 2024 Annual ACM-SIAM Symposium on Discrete Algorithms (SODA)},
  pages={3792--3835},
  year={2024},
  organization={SIAM}
}

@inproceedings{dwz23,
    author = {Duan, Ran and Wu, Hongxun and Zhou, Renfei},
    title = {Faster Matrix Multiplication via Asymmetric Hashing},
    booktitle = {FOCS},
    year = {2023}
}

@article{aw22,
  title={Quantum speedup for graph sparsification, cut approximation, and Laplacian solving},
  author={Apers, Simon and De Wolf, Ronald},
  journal={SIAM Journal on Computing},
  volume={51},
  number={6},
  pages={1703--1742},
  year={2022},
  publisher={SIAM}
}

@inproceedings{g96,
author = {Grover, Lov K.},
title = {A fast quantum mechanical algorithm for database search},
year = {1996},
isbn = {0897917855},
publisher = {Association for Computing Machinery},
address = {New York, NY, USA},
url = {https://doi.org/10.1145/237814.237866},
doi = {10.1145/237814.237866},
booktitle = {Proceedings of the Twenty-Eighth Annual ACM Symposium on Theory of Computing},
pages = {212–219},
numpages = {8},
location = {Philadelphia, Pennsylvania, USA},
series = {STOC '96}
}

@misc{gsyz23,
      title={Fast Quantum Algorithm for Attention Computation}, 
      author={Yeqi Gao and Zhao Song and Xin Yang and Ruizhe Zhang},
      year={2023},
      eprint={2307.08045},
      archivePrefix={arXiv},
      primaryClass={quant-ph}
}

@inproceedings{chj22,
author = {Cornelissen, Arjan and Hamoudi, Yassine and Jerbi, Sofiene},
title = {Near-optimal Quantum algorithms for multivariate mean estimation},
year = {2022},
publisher = {Association for Computing Machinery},
address = {New York, NY, USA},
booktitle = {Proceedings of the 54th Annual ACM SIGACT Symposium on Theory of Computing},
pages = {33–43},
numpages = {11},
location = {Rome, Italy},
series = {STOC 2022}
}

@inproceedings{am15,
  author    = {Ahmed El Alaoui and Michael W. Mahoney},
  title     = {Fast Randomized Kernel Ridge Regression with Statistical Guarantees},
  booktitle = {Advances in Neural Information Processing Systems 28 (NeurIPS 2015)},
  pages     = {775--783},
  year      = {2015}
}

@inproceedings{mm17,
author = {Musco, Cameron and Musco, Christopher},
title = {Recursive sampling for the Nystr\"{o}m method},
year = {2017},
publisher = {Curran Associates Inc.},
address = {Red Hook, NY, USA},
booktitle = {Proceedings of the 31st International Conference on Neural Information Processing Systems},
pages = {3836–3848},
numpages = {13},
location = {Long Beach, California, USA},
series = {NIPS'17}
}

@inproceedings{ws00,
author = {Williams, Christopher K. I. and Seeger, Matthias},
title = {Using the Nystr\"{o}m method to speed up kernel machines},
year = {2000},
publisher = {MIT Press},
address = {Cambridge, MA, USA},
booktitle = {Proceedings of the 14th International Conference on Neural Information Processing Systems},
pages = {661–667},
numpages = {7},
location = {Denver, CO},
series = {NIPS'00}
}

@article{cw17,
  title={Low-rank approximation and regression in input sparsity time},
  author={Clarkson, Kenneth L and Woodruff, David P},
  journal={Journal of the ACM (JACM)},
  volume={63},
  number={6},
  pages={1--45},
  year={2017},
  publisher={ACM New York, NY, USA}
}

@article{ag23,
  title={Quantum speedups for linear programming via interior point methods},
  author={Apers, Simon and Gribling, Sander},
  journal={arXiv preprint arXiv:2311.03215},
  year={2023}
}

@inproceedings{
cls+25,
title={{HSR}-Enhanced Sparse Attention Acceleration},
author={Bo Chen and Yingyu Liang and Zhizhou Sha and Zhenmei Shi and Zhao Song},
booktitle={The Second Conference on Parsimony and Learning (Proceedings Track)},
year={2025},
url={https://openreview.net/forum?id=wso1gABiPZ}
}

@article{cam+25,
  title={Metric transforms and low rank representations of kernels for fast attention},
  author={Chu, Timothy and Alman, Josh and Miller, Gary L and Narayanan, Shyam and Sellke, Mark and Song, Zhao},
  journal={Advances in Neural Information Processing Systems},
  volume={37},
  pages={47014--47068},
  year={2024}
}

@inproceedings{syz24,
  title={Solving attention kernel regression problem via pre-conditioner},
  author={Song, Zhao and Yin, Junze and Zhang, Lichen},
  booktitle={International Conference on Artificial Intelligence and Statistics},
  pages={208--216},
  year={2024},
  organization={PMLR}
}

@inproceedings{bsz24,
  title={Algorithm and Hardness for Dynamic Attention Maintenance in Large Language Models},
  author={van den Brand, Jan and Song, Zhao and Zhou, Tianyi},
  booktitle={International Conference on Machine Learning},
  pages={49008--49028},
  year={2024},
  organization={PMLR}
}

@article{as23,
  title={Fast attention requires bounded entries},
  author={Alman, Josh and Song, Zhao},
  journal={Advances in Neural Information Processing Systems},
  volume={36},
  pages={63117--63135},
  year={2023}
}

@article{clk+25,
  title={Exploring the Limits of KV Cache Compression in Visual Autoregressive Transformers},
  author={Chen, Bo and Li, Xiaoyu and Ke, Yekun and Liang, Yingyu and Shi, Zhenmei and Song, Zhao},
  journal={arXiv preprint arXiv:2503.14881},
  year={2025}
}

@article{cll+25,
  title={Time and Memory Trade-off of KV-Cache Compression in Tensor Transformer Decoding},
  author={Chen, Yifang and Li, Xiaoyu and Liang, Yingyu and Shi, Zhenmei and Song, Zhao and Tian, Yu},
  journal={arXiv preprint arXiv:2503.11108},
  year={2025}
}

@article{tli+23,
  title={Llama: Open and efficient foundation language models},
  author={Touvron, Hugo and Lavril, Thibaut and Izacard, Gautier and Martinet, Xavier and Lachaux, Marie-Anne and Lacroix, Timoth{\'e}e and Rozi{\`e}re, Baptiste and Goyal, Naman and Hambro, Eric and Azhar, Faisal and others},
  journal={arXiv preprint arXiv:2302.13971},
  year={2023}
}

@misc{bce+23,
      title={Sparks of Artificial General Intelligence: Early experiments with GPT-4}, 
      author={Sébastien Bubeck and Varun Chandrasekaran and Ronen Eldan and Johannes Gehrke and Eric Horvitz and Ece Kamar and Peter Lee and Yin Tat Lee and Yuanzhi Li and Scott Lundberg and Harsha Nori and Hamid Palangi and Marco Tulio Ribeiro and Yi Zhang},
      year={2023},
      eprint={2303.12712},
      archivePrefix={arXiv},
      primaryClass={cs.CL},
      url={https://arxiv.org/abs/2303.12712}, 
}

@article{lfx+24,
  title={Deepseek-v3 technical report},
  author={Liu, Aixin and Feng, Bei and Xue, Bing and Wang, Bingxuan and Wu, Bochao and Lu, Chengda and Zhao, Chenggang and Deng, Chengqi and Zhang, Chenyu and Ruan, Chong and others},
  journal={arXiv preprint arXiv:2412.19437},
  year={2024}
}

@article{tab+23,
  title={Gemini: a family of highly capable multimodal models},
  author={Team, Gemini and Anil, Rohan and Borgeaud, Sebastian and Alayrac, Jean-Baptiste and Yu, Jiahui and Soricut, Radu and Schalkwyk, Johan and Dai, Andrew M and Hauth, Anja and Millican, Katie and others},
  journal={arXiv preprint arXiv:2312.11805},
  year={2023}
}

\newpage
\onecolumn
\appendix

\begin{center}
    \textbf{\LARGE Appendix }
\end{center}


\ifdefined\isarxiv
\else
\input{71_llm}
\fi
\fi

\end{document}